\definecolor{pku-red}{RGB}{139,0,18}
\theoremstyle{plain}
\newtheorem{theorem}{Theorem}[section]
\newtheorem{lemma}[theorem]{Lemma}
\theoremstyle{definition}
\newtheorem{definition}[theorem]{Definition}
\theoremstyle{remark}
\newcommand{\calR}{\mathcal{R}}
\newcommand{\calW}{\mathcal{W}}
\newcommand{\bmx}{{\bm{x}}}
\newcommand{\bmz}{{\bm{z}}}
\newcommand{\llm}{\text{LLM}_{\theta}}
\newcommand{\llmprime}{\text{LLM}_{\theta'}}
\newcommand{\llmprimeprime}{\text{LLM}_{\theta''}}
\newcommand{\llmhat}{\text{LLM}_{\hat{\theta}}}
\newcommand{\textllm}{\text{LLM}}
\newcommand{\llminit}{\text{LLM}_{\theta_{\text{init}}}}
\newcommand{\tr}{SW-Max training}
\newcommand{\TR}{SW-Max Training}
\newcommand{\re}{\text{rm}}
\newcommand{\thetaast}{{\psi}}
\newcommand{\thetainit}{\theta_{\text{init}}}
\title{Mechanism Design for LLM Fine-tuning with\\ Multiple Reward Models}
\author{
    Haoran Sun$^1$, Yurong Chen$^{2}$\thanks{Corresponding Authors.}, Siwei Wang$^3$, Xu Chu$^1$, Wei Chen$^3$, Xiaotie Deng$^{1}$\footnotemark[1]\\
    $^1$ CFCS, School of Computer Science, Peking University\\
    $^2$ Inria, École Normale Supérieure, PSL Research University\\
    $^3$ Microsoft Research Asia\\
    \texttt{sunhaoran0301@stu.pku.edu.cn, yurong.chen@inria.fr}, \\
    \texttt{\{chu\_xu, xiaotie\}@pku.edu.cn, \{siweiwang, weic\}@microsoft.com}, \\
}
\begin{document}

\maketitle
\begin{abstract}
Fine-tuning large language models (LLMs) to aggregate multiple preferences has attracted considerable research attention. With aggregation algorithms advancing, a potential economic scenario arises where fine-tuning services are provided to agents with different preferences. In this context, agents may benefit from strategically misreporting their preferences, but this could harm the aggregation performance. This paper addresses such incentive issues by framing it as a mechanism design problem: an LLM provider determines the fine-tuning objective (training rule) and the pricing scheme (payment rule) for agents. We primarily focus on training rules that maximize social welfare subject to certain regularizations, referred to as SW-Max rules. First, we show that under most circumstances, truthful reporting is sub-optimal with simply a SW-Max rule, thereby highlighting the necessity of payments. Second, we extend the VCG payment to implement SW-Max rules in dominant-strategy incentive compatibility (DSIC). We characterize sufficient conditions for payment equivalence and derive the necessary conditions for a payment rule to implement a SW-Max rule in DSIC and other principles. Third, we demonstrate that our mechanism is approximately DSIC with perturbed input, showcasing its robustness against the inevitable errors in real-world applications. Experiments on real LLM training results further confirm the practical implications of our results.
\end{abstract}

\section{Introduction}\label{sec:intro}
As large language models (LLMs)~\cite{gpt, llama} become increasingly widespread, users are seeking models that not only possess general capabilities but also align with their individual values. Reinforcement Learning from Human Feedback (RLHF)~\cite{RLHF, RLHP} has emerged as a mainstream approach to achieve this alignment, where a reward model guides the reinforcement learning process using feedback signals that reflect human preferences.

However, standard RLHF becomes resource-intensive when catering to diverse preferences. Training separate LLMs for every individual or group within a community, each with unique preferences, is often impractical due to prohibitive computational costs and potential data privacy concerns. A more feasible alternative is to train a unified model that reflects collective values while still accommodating distinct needs. Multiple-Objective RLHF (MORLHF)~\cite{assistant, finegrained}, which aims to efficiently integrate multiple preferences into a single model, offers a promising avenue for this. Further studies aim to improve MORLHF algorithms from various perspectives, including efficiency~\cite{Rewardedsoups, MOD, Personalizedsoups}, accuracy~\cite{hacking, ensembles, ensemble1, WARM}, and fairness~\cite{MaxMinRLHF}.

As these techniques advance, we explore a practical economic scenario: a platform offering a fine-tuning service to aggregate diverse preferences from various groups into a single LLM.
These ``groups''—such as different departments within a company or hospitals in the same city with various specializations—share the same core values but have slightly different focuses.
Given these shared values and the high cost of fine-tuning, developing separate LLMs for each entity is often inefficient. 
Nevertheless, each group must provide its specific preferences to account for these differing focuses. 
Finally, the training cost is shared among the groups and can be non-uniform due to their differentiated preferences.

A critical issue in this process is that groups may \emph{strategically misreport their preferences to manipulate} the aggregate objective for a more favorable outcome. As illustrated in a simplified RLHF framework (see Figure~\ref{fig:framework}), a group's true preference ($\re_1$) could be misreported as a polarized one ($\widetilde{\re}_1$) to steer the model toward a more desirable outcome. However, this behavior distorts the training objective, resulting in a suboptimal model for the overall community. Given the potential profitability of such strategies and the growing economic importance of LLMs, ensuring truthful preference reporting is as critical as the training algorithm itself. We therefore formalize this scenario to study its incentives.  
\emph{Our findings indicate that many commonly used training objectives lead to profitable misreporting strategies. However, we also demonstrate that a simple incentive-compatible cost allocation scheme can incentivize truthful reporting, and under certain conditions, this scheme is uniquely determined.}

Specifically, we model this as a multi-parameter mechanism design problem involving a fine-tuning service provider and multiple groups of agents. 
The mechanism consists of a \emph{training rule}, which aggregates the reported sizes $w_i$ (representing a group's scale) and preferences from different groups, and a \emph{payment rule} to determine their respective charges.
The fine-tuning process is implemented through RLHF, with reward models representing the groups' preferences. 
Our focus is on training objectives aimed at maximizing social welfare with a regularization constraint, referred to as \tr\ rules. 
Our technical contributions, which extend beyond standard mechanism design due to the unique complexities of LLM fine-tuning objectives, are summarized as follows:

\begin{enumerate}[leftmargin=1.8em]
    \item \emph{We show that mechanisms using only \tr\ rules are vulnerable to profitable preference misreporting~(\cref{thm:dominated} and \cref{thm:thmhigher})}.
    This finding highlights the need for a payment rule to resolve incentive issues.
    \item We extend the VCG payment to ensure truthfulness for \tr\ rules~(\cref{thm:ne}) and \emph{further establish the uniqueness of this payment under certain conditions~(\cref{thm:equivalence} and \cref{thm:equivalence_sw}).
    }
    Based on that, \emph{we derive necessary conditions for payment rules to implement a \tr\ rule in more principles~(\cref{thm:nonnegative})}.
    
    \item \emph{We demonstrate that our mechanism is approximately DSIC in the presence of input perturbations~(\cref{thm:approximate}).} 
    This finding highlights the robustness of our mechanism against the inevitable measurement errors in real-world applications.
    
    \item Experiments on practical LLM setups \emph{empirically validate the existence of profitable misreporting strategies and demonstrate the efficacy of our mechanism in incentivizing truthful reporting} (Section~\ref{sec:empirical}).
\end{enumerate}

\paragraph{Related Work.}\label{sec:intro:related} 
Several recent studies have also examined incentive issues in RLHF and LLMs. Duetting et al.~\cite{duetting2023mechanism} proposed a preference aggregation mechanism that satisfies monotonicity with respect to bids; however, their work does not address strategic misreporting of preferences, which is the central challenge we tackle. Other works that consider strategic preference reporting have different focuses, such as implementing truthful rules with KL-divergence for ad auctions~\cite{soumalias2024truthful}, analyzing the implementability of various training rules~\cite{park2024principled}, or modifying the RLHF objective to achieve approximate truthfulness while preserving convergence~\cite{buening2025strategyproof}. 
In contrast, our work adopts a theoretical perspective to analyze representative training rules, providing a comprehensive understanding of incentive issues in RLHF. 
Specifically, our analysis of payment equivalence helps characterize \emph{all} possible payment rules that implement a training rule in DSIC.

Our research also connects to classic literature on auction design~\cite{myerson1979incentive, myerson1981optimal, nisan1999algorithmic} and facility location problems~\cite{owen1998strategic, drezner2004facility}. 
Compared to the classic auction model, we have to consider the necessary regularization term, which makes the training rule (or the allocation rule in the auction) more complicated and prevents vanilla VCG from being applied. 
In facility locations, agents can benefit by misreporting a more polarized preference. 
The idea of such a strategy is similar to our model. 
However, due to the complexity of the training rules that aim to catch the LLM fine-tuning scenarios and the normalization constraints of the reward models, the reporting strategies can be more complex. 
Further, combined with the discretized input spaces of the agents, most of our results cannot be directly derived from existing literature.

\paragraph{Paper Organization.}
The remainder of the paper is organized as follows. Section~\ref{sec:pre} introduces the necessary preliminaries, and Section~\ref{sec:formulation} formulates the RLHF Game. We then analyze the properties of mechanisms composed of \tr\ rules and payment rules in Section~\ref{sec:incentives}, followed by a presentation of our experimental results in Section~\ref{sec:empirical}. Finally, Section~\ref{sec:conclusion} offers concluding remarks and discusses potential future research directions.

\begin{figure*}[t]
    \centering
    \includegraphics[width=0.90\linewidth, trim = 50 40 50 20, clip]{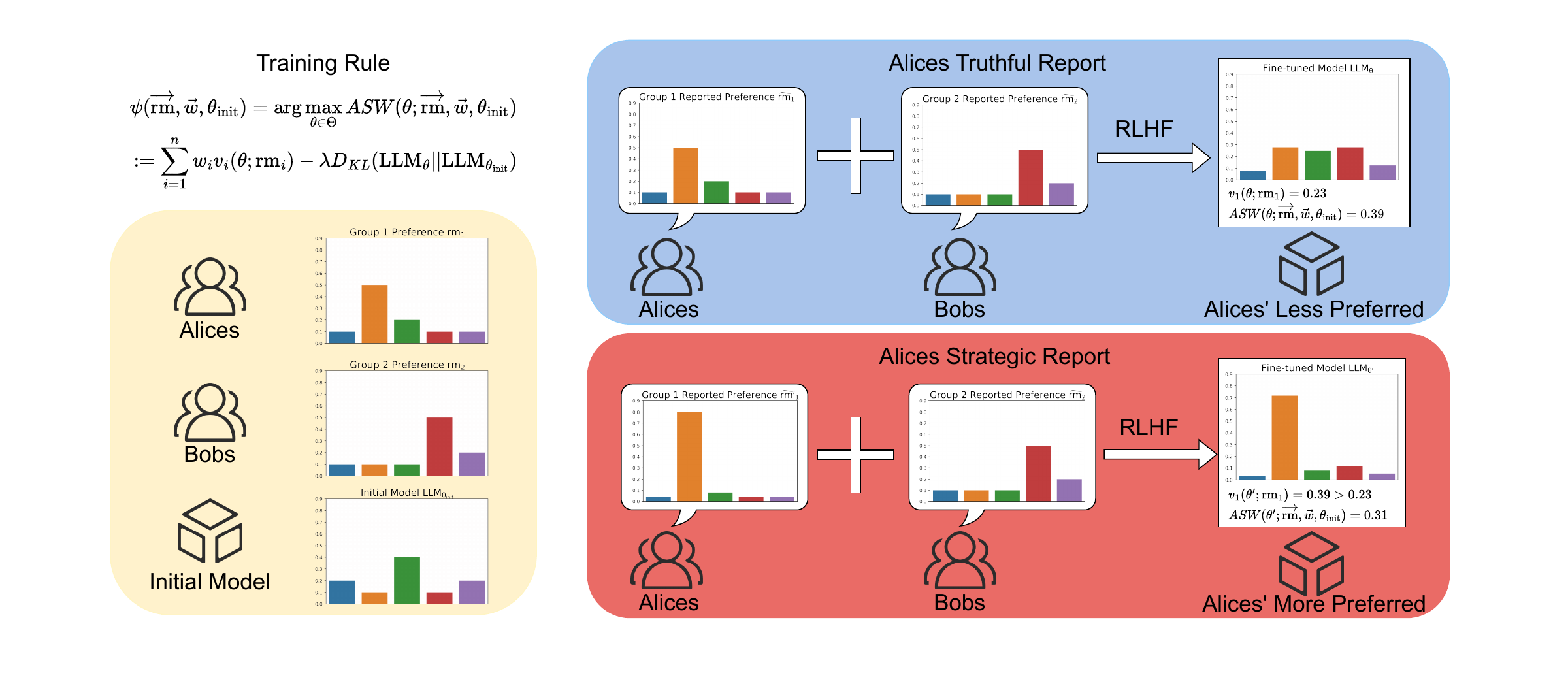}
    \caption{An illustration of the incentive issue in LLM preference aggregation. When using a basic training rule $\thetaast$ in RLHF for two groups (Alices and Bobs), fixing Bobs' report $\widetilde{\text{rm}}_2$, Alices can gain a higher utility by strategically reporting $\widetilde{\text{rm}}'_1 \neq \text{rm}_1$ than truthfully reporting $\widetilde{\text{rm}}_1 = \text{rm}_1$. 
    On the other hand, we have $\mathrm{ASW}(\theta; \overrightarrow{\re}, \vec{w}, \thetainit) > \mathrm{ASW}(\theta'; \overrightarrow{\re}, \vec{w}, \thetainit)$, which means that such strategic behavior also harms the training objective.
    }\label{fig:framework}
    \vspace{-5pt}
\end{figure*}

\section{Preliminaries}\label{sec:pre}
\paragraph{Large Language Models.}
In this paper, LLMs are abstracted as stochastic mappings from a prompt set, denoted by $\mathcal{X}$, to a probability distribution over sequences of length up to $K$ in the output space~\cite{duetting2023mechanism}.
Let $T$ represent the set of all tokens, and define $T^\ast \coloneqq \emptyset \cup T \cup T^2 \cup \ldots \cup T^K$ as the set of sequences with lengths up to $K$. 
An LLM parameterized by $\theta$ is a function $\llm: \mathcal{X} \to \Delta(T^\ast)$. 
The space of LLM parameters is denoted by $\Theta$, and it is assumed that the LLM can express any function within this space. 
Our theoretical model operates on each prompt independently, so we focus on a fixed prompt scenario and omit its notation for simplicity.
We denote $\llm(\bmx)$ the probability of a sequence $\bmx$ generated by the model $\llm$. 

\paragraph{Reward Modeling.}
In RLHF, a reward model is a function $\re: \mathcal{X} \times T^\ast \to \mathbb{R}$, which maps a prompt-response pair to a real number, indicating humans’ satisfaction with the response based on the prompt.
Similar to the LLM case, we focus on a fixed prompt scenario, so $\re(\bmx)$ represents the scalar feedback for a response $\bmx \in T^\ast$.
Following prior empirical work for RLHF~\cite{RLHF, finegrained}, we mainly consider two types of normalization constraints for the reward model:
(1) The summation of the rewards over $T^\ast$ is normalized to $1$, i.e. $\sum_{\bmx \in T^\ast} \re(\bmx) = 1$.  
(2) The maximum of the rewards over $T^\ast$ is normalized to $1$, i.e. $\max_{\bmx \in T^\ast} \re(\bmx) = 1$.
Furthermore, we also assume that the output rewards are all non-negative, i.e., $\re(\bmx) \geq 0$ for all $\bmx \in T^\ast$.
The set of all reward model functions satisfying these conditions is denoted by $\mathcal{R}$. 
Unless otherwise specified, the results in this paper hold under both normalization schemes.

\section{Formulation of the RLHF Game}\label{sec:formulation}

In this section, we present the formal description of the RLHF Game. The game involves one LLM \emph{provider} and $n$ \emph{groups of agents}, denoted by $[n] = \{1, 2, \ldots, n\}$. 
The provider has an initial model $\llminit$ with positive probability for all sequences, i.e., $\llminit(\bmx) > 0$ for all $\bmx\in T^\ast$. 
Each group $i$ has $w_i$ agents who share the same preference represented by a reward model $\re_i$. 
Let $\mathcal{R}$ and $\mathcal{W} \subseteq \mathbb N_{+}$ denote the domains for each group's reward model and group size, respectively. 
The group size $w$ should be an integer, and we assume an upper bound $\bar w$ for $\mathcal W,$ which is public information.
The exact reward model $\re_i$ and the size $w_i$ are group $ i$'s private information. 
For an agent in group $i$, the valuation when it receives a model $\llm$ is denoted by $v_i(\theta; \re_i)$, defined as follows.
\begin{definition}\label{ass}
    An agent's valuation of model $\llm$ is its expected reward on the sequences generated by it:
    $
    v(\theta; \re) = \mathbb E_{\bmx \sim \llm} \re(\bmx) = \sum_{\bmx \in T^\ast} \llm(\bmx) \re(\bmx).
    $
\end{definition}
In practice, this can be obtained by averaging the reward of the sequences sampled from an LLM. 
We also discuss the influence of possible errors in this process in Section~\ref{sec:incentives:approximate}.

\paragraph{Remark on the group size $\vec w$.}
We introduce the concept of group size to ensure that our model encompasses a broader range of scenarios. 
As the scales of different groups may vary, our training objective has to account for this factor to ensure fairness.
Groups are also allowed to over-report their sizes to attain a higher status in fine-tuning. 
The case $\vec w = 1$ represents a special scenario where each group consists of exactly one agent and is included in our general model. 
In certain results, we note that the general model is technically more difficult than the $\vec w = 1$ case.

The provider first announces the mechanism, including a training rule $\thetaast$ and a payment rule $p$, 
\begin{align*}
&\thetaast : \mathcal{R}^n \times \mathcal W^n \times \Theta \to \Theta, &p : \mathcal{R}^n \times \mathcal W^n \times \Theta \to \mathbb{R}^n.
\end{align*}

Both rules take $n$ reported reward models, $n$ reported sizes, and an initial model as input and output the objective fine-tuned model and each group's payment, respectively. 
The provider can choose not to charge the users by setting $p$ always equal to $0$.
In this case, the model coincides with most previous work on designing empirical algorithms, where agents' incentives are not considered~\cite{WARM, finegrained, ensemble1, ensembles, Personalizedsoups, multiple1, hacking}.  
Specifically, the training rule seeks the model that maximizes a certain objective function $\text{OBJ}$. That is, $\thetaast(\overrightarrow{\re}, \vec w, \thetainit)$ $\in$ $\arg \max_{\theta \in \Theta} \text{OBJ}(\theta; \overrightarrow{\re}, \vec w, \thetainit)$, with ties broken based on further ordering of  $v_i(\theta; \re_i)$s. 

After observing the announced mechanism ($\thetaast$, $p$), each group $i$ reports a reward model, $\widetilde{\re}_i$, and its group size $\tilde{w}_i$.
Based on the reported information, the provider fine-tunes the model and gets the final model with parameter $\theta_{\text{final}} = \thetaast(\overrightarrow{\widetilde{\re}}, \vec{\tilde{w}}, \theta_{\text{init}})$. 
Each member in the group has access to the fine-tuned model, so the valuation for group $i$ is $w_i v_i(\theta_{\text{final}}; \re_i)$. 
The provider then charges each group $i$ a one-time payment according to the payment rule, $p_i(\overrightarrow{\widetilde{\re}}, \vec{\tilde{w}}, \theta_{\text{init}})$. 
All groups have quasi-linear utilities, i.e., \ group $ i$'s utility is the valuation it attains minus the payment:
\begin{equation*}
    \begin{aligned}
        u_i(\overrightarrow{\widetilde{\re}}, \vec{\tilde{w}}; \thetaast, p, \re_i, w_i) := w_i v_i(\theta_{\text{final}}; \re_i) - p_i(\overrightarrow{\widetilde{\re}}, \vec{\tilde{w}}, \theta_{\text{init}}).
    \end{aligned}
\end{equation*}
The groups may strategically report, thus $\overrightarrow{\widetilde{\re}}$ and $\vec{\tilde{w}}$ do not necessarily equal the true $\overrightarrow{\re}$ and $\vec{w}$.
The LLM provider's goal is to achieve its training objective based on the group's true preferences, taking into account that misreporting may distort the training outcome. 
To this end, it is crucial to incentivize all groups to report their information truthfully so that the provider has access to the groups' private information. 
These desiderata for the mechanism are formally defined as follows.

\begin{definition}
    A mechanism $(\thetaast, p)$ satisfies \emph{dominant-strategy incentive compatibility} (DSIC) if $\forall i$, $\text{rm}_i$, $w_i$, $\text{rm}'_i$, $w'_i$, $\overrightarrow{\text{rm}}_{-i}$, $\vec w_{-i}$, $\theta_{\text{init}}$, we have
    \begin{align}\label{eqn:dsic}\tag{DSIC}
     u_i((\text{rm}_i, \overrightarrow{\text{rm}}_{-i}), (w_i, \vec w_{-i}); \thetaast, p, \text{rm}_i, w_i) \geq u_i((\text{rm}'_i, \overrightarrow{\text{rm}}_{-i}), (w'_i, \vec w_{-i}); \thetaast, p, \text{rm}_i, w_i). \notag
    \end{align}
\end{definition}

\begin{definition}
    A mechanism $(\thetaast, p)$ satisfies \emph{individually rationality} (IR) if $\forall i$, $\re_i$, $w_i$, $\overrightarrow{\re}_{-i}$, $\vec w_{-i}$, $\theta_{\text{init}}$, we have
    \begin{equation}\label{eqn:ir}\tag{IR}
        u_i((\re_i, \overrightarrow{\re}_{-i}), (w_i, \vec w_{-i}); \thetaast, p, \re_i, w_i) \geq 0.
    \end{equation}
\end{definition}

DSIC means that truthfully reporting the reward model and the group size yields the highest utility for any group, regardless of other groups' reports. 
IR means that truthfulness always yields non-negative utilities. 
When a mechanism $(\thetaast, p)$ satisfies DSIC, IR, or both DSIC and IR, we say that the payment rule $p$ \emph{implements} $\thetaast$ in DSIC, IR, or both DSIC and IR.
When we say the implementability of a training rule, we refer to the property of DSIC.

\section{Incentives in the RLHF Game}\label{sec:incentives}
This section explores incentive design within the RLHF Game framework. 
Our focus is mainly on a set of training rules that aims at maximizing social welfare with regularization, which balances efficiency and fairness and is commonly used in practice to aggregate various preferences~\cite{boyd2004convex, nocedal1999numerical}. 
Denote $D_f(p||q) := \mathbb{E}_{q(\bmx)} f(p(\bmx) / q(\bmx))$ the divergence between probability distributions $p$ and $q$ measured by function $f$, the formal definition follows.
\begin{definition}[\TR\ Rules]\label{def:affine}
    A \underline{S}ocial \underline{W}elfare-\underline{Max}imizing training rule fine-tunes the model to maximize the summation of the groups' valuations subject to a regularization measured by $f$-divergence~\cite{ali1966general, csiszar1967information, MOD}. 
    Formally, the training objective is
    \begin{equation*}
        \text{OBJ}(\theta; \overrightarrow{\re}, \vec w, \thetainit) = \mathrm{ASW}(\theta; \overrightarrow{\re}, \vec w, \thetainit) := \sum_{i=1}^n w_i v_i(\theta;\re_i) - D_f(\llm||\llminit),
    \end{equation*} 
    where $f$ is convex on $\mathbb R_{+}$ and $f(1) = 0$.
    We use $\mathrm{ASW}(\theta; \overrightarrow{\re}, \vec w, \thetainit)$ to denote the affine social welfare.
\end{definition}

This defines a set of training rules, and the function $f$ includes the most commonly used regularization terms in training a model. 
For example, $f(x) = \lambda x \log x$ refers to KL-divergence, $f(x) = \lambda {(x - 1)}^2$ refers to $\chi^2$ divergence, $f(x) = \lambda |x - 1|$ refers to total variation. 
We denote $\thetaast \in \Psi^{SW}$ that $\thetaast$ belongs to this set. 

In the following subsections, we will first establish the necessity of a payment rule for \tr\ rules. Then, we construct DSIC mechanisms for these training rules using affine maximizer payments and demonstrate payment equivalence properties for certain distance measures $f$. 
Next, we address the influence of noise input on the DSIC property. Finally, we discuss the efficient implementations of the mechanisms in practice.

\subsection{Necessity of Payment Rule}\label{sec:incentives:necessity}
We start by showing that without payment rules, groups have incentives to misreport their preferences under most circumstances.
Our discussion focuses on strategies other than simply inflating the group size $w_i$. 
We assume that for $\forall \overrightarrow \re, \vec w, \thetainit$, the fine-tuned model $\theta = \thetaast(\overrightarrow \re, \vec{w}, \thetainit)$ satisfies that $\llm(\bmx) > 0$ for $\forall \bmx \in T^\ast$. 
This mainly excludes extreme cases where the outcomes remain largely unchanged regardless of input, which may make the analysis meaningless.
Based on this, we comprehensively analyze the relationship between optimal strategy and truthful reporting. 
We start with two cases with strong intuition.
\begin{restatable}{theorem}{thmdominated}\label{thm:dominated} 
    In the RLHF Game with mechanism $(\thetaast, p)$ that $\thetaast \in \Psi^{SW}$ and $p \equiv 0$, for group $i$, define $s_i := |\{r|r = \re_i(x), x\in T^\ast\}|$ and $\underline{\re_i} := \min_{\bmx \in T^\ast} \re_i(\bmx)$:
    \begin{enumerate}[leftmargin=1.8em]
        \item If $s_i = 1$, truthfully reporting is the optimal strategy regardless of other groups' reports.
        \item If $s_i \geq 2$ and $\underline{\re_i} > 0$, there is a strategy that yields strictly higher utility than truthfully reporting regardless of other groups' reports.
    \end{enumerate}
\end{restatable}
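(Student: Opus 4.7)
The plan splits into the two cases. \textbf{Part 1} is a direct computation: when $s_i = 1$, the reward $\re_i$ is constant on $T^\ast$, and the normalization constraint fixes the constant ($\re_i \equiv 1/|T^\ast|$ under sum-normalization, $\re_i \equiv 1$ under max-normalization). Then $v_i(\theta;\re_i) = \sum_{\bmx} \llm(\bmx) \re_i(\bmx)$ equals this constant for every $\theta$, since LLM probabilities always sum to one. With $p \equiv 0$, the agent's utility is invariant to its report, so truthfulness is (weakly) optimal regardless of the other groups' reports.

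\textbf{Part 2} is a constructive argument: I build a one-parameter family of polarized misreports
\[
\widetilde{\re}_i^{(\epsilon)} \;=\; \re_i + \epsilon\,(\re_i - c),
\]
where $c$ is chosen so that $\widetilde{\re}_i^{(\epsilon)}$ respects the relevant normalization: $c = 1/|T^\ast|$ under sum-normalization, and $c = 1$ together with an overall rescaling by $1/(1+\epsilon)$ under max-normalization. The hypothesis $\underline{\re_i} > 0$ provides strict positive slack, so non-negativity of $\widetilde{\re}_i^{(\epsilon)}$ holds on an interval $(0,\epsilon_0)$ depending only on $\re_i$. Intuitively, this polarizes $\re_i$ away from the calibration constant $c$, amplifying high-reward outcomes and shrinking low-reward ones.

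\textbf{Main technical step.} The SW-Max rule is equivalently an optimization over $p \in \Delta(T^\ast)$ maximizing $\sum_\bmx p(\bmx) R_\epsilon(\bmx) - D_f(p\|\llminit)$, where $R_\epsilon = \sum_{j\neq i} w_j \re_j + w_i \widetilde{\re}_i^{(\epsilon)}$; its KKT condition is $f'(p^*_\epsilon(\bmx)/\llminit(\bmx)) = R_\epsilon(\bmx) - \mu(\epsilon)$. Implicit differentiation at $\epsilon = 0$, combined with $\sum_\bmx \dot p^*(\bmx) = 0$, yields $\dot p^*(\bmx) = w_i\, \alpha(\bmx)\bigl(\re_i(\bmx) - \bar{r}_\alpha\bigr)$, where $\alpha(\bmx) = \llminit(\bmx)/f''(p^*(\bmx)/\llminit(\bmx)) > 0$ and $\bar{r}_\alpha$ is the $\alpha$-weighted mean of $\re_i$. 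Consequently,
\[
\left.\frac{d}{d\epsilon}\, v_i(\theta^*_\epsilon;\re_i)\right|_{\epsilon=0} \;=\; w_i A \cdot \mathrm{Var}_{\alpha/A}[\re_i] \;>\; 0, \qquad A = \sum_\bmx \alpha(\bmx),
\]
strictly positive because $\re_i$ is non-constant and $\alpha/A$ is a fully-supported probability weighting (ensured by $\llminit > 0$). Since the deviation direction depends only on $\re_i$, a small $\epsilon > 0$ delivers a strict improvement uniformly over $\overrightarrow{\re}_{-i}$ and $\vec w_{-i}$.

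\textbf{Main obstacle.} The KKT / implicit-function argument is transparent when $f$ is twice continuously differentiable with $f'' > 0$, covering KL and $\chi^2$. The awkward case is total variation, $f(x) = \lambda|x-1|$, where $f''$ is undefined at $x=1$. I would address this either by smoothing $f$ and passing to a limit, or more directly by observing that $\widetilde{\re}_i^{(\epsilon)}$ is equivalent (up to a constant absorbed into the dual variable) to multiplying the effective weight of $\re_i$ in $R_\epsilon$ by $(1+\epsilon)$, and then invoking a monotone-comparative-statics argument to conclude that $\mathbb E_{p^*_\epsilon}[\re_i]$ is non-decreasing in $\epsilon$ and strictly increasing whenever $\thetaast$ is responsive to the tilt, which is guaranteed by the standing full-support assumption on $\thetaast(\overrightarrow \re, \vec w, \thetainit)$.
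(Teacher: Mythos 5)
Your Part 1 matches the paper's (trivial) argument. For Part 2, however, your route differs from the paper's in a way that creates two genuine gaps. First, Theorem~\ref{thm:dominated} is stated for every $\thetaast \in \Psi^{SW}$, i.e., for any convex $f$ with $f(1)=0$; your main technical step (implicit differentiation of the KKT system, $\dot p^*(\bmx) = w_i\,\alpha(\bmx)(\re_i(\bmx)-\bar r_\alpha)$ with $\alpha = \llminit/f''$) needs $f\in C^2$ with $f''>0$. Those hypotheses appear only in Theorem~\ref{thm:thmhigher}, not here. The paper's proof deliberately avoids second derivatives: it fixes a single non-infinitesimal deviation (shifting $\epsilon$ of reward from $\arg\min \re_i$ to $\arg\max \re_i$, with $\epsilon$ bounded using $\underline{\re_i}>0$), compares the two optima through the first-order condition \eqref{eqn:OPT} and the monotonicity of $f'$ to pin down the sign of $\llmprime(\bmx)-\llm(\bmx)$ coordinate-wise, and then closes with a rearrangement inequality using $\re_i(\bmx_2)\le \re_i(\bmx)\le \re_i(\bmx_1)$. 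Your proposed repairs for non-smooth $f$ are only sketches and do not obviously work: a smoothing limit degrades strict inequalities to weak ones, and the monotone-comparative-statics claim (``strictly increasing whenever $\thetaast$ is responsive to the tilt'') is not proved — responsiveness does not follow from the full-support assumption, and for $f(x)=\lambda|x-1|$ the optimizer can be locally constant in an infinitesimal tilt, which is exactly why a fixed, non-infinitesimal deviation as in the paper is the safer object.

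Second, the uniformity claim at the end of your argument is a non sequitur. The theorem requires one fixed misreport that strictly beats truth-telling \emph{regardless of other groups' reports}. Your derivative at $\epsilon=0$ is strictly positive for each fixed $(\overrightarrow\re_{-i},\vec w_{-i},\thetainit)$, but the threshold ``small $\epsilon$'' depends on the magnitude of that derivative and on the second-order remainder, both of which depend on the others' reports through $p^*$ and $\alpha(\bmx)$; ``the deviation direction depends only on $\re_i$'' does not by itself give a single $\epsilon$ that works for all opponents. This could be patched by a compactness-and-continuity argument over $\overrightarrow\re_{-i}$ (compact) and $\vec w_{-i}$ (finite), but that requires continuity of the solution map (which the paper only establishes under strong convexity, Lemma~\ref{lemma:sufficient_for_continuous}) and extra care if, as in the paper's proof, the claim is also meant to hold for all $\thetainit$, whose set is not compact. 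The paper's construction avoids the issue entirely because its strict-improvement chain of inequalities holds for \emph{every} admissible $\epsilon < \min\{1-\re_i(\bmx_1), \re_i(\bmx_2)\}$, with no smallness condition tied to the opponents' reports.
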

$s_i = 1$ is an unusual case in which group $i$ has the same preference values for all $\bmx$, resulting in the same valuation for any model $\theta$.
In such a case, all strategies bring the same utility and hence are optimal. 
However, when $s_i \geq 2$ and $\underline{\re_i} > 0$, group $i$ can report $\re_i'$ that assigns a lower value to $\bmx_1 = \arg\min_{\bmx \in T^\ast} \re_i(\bmx)$ (and a larger value to $\bmx_2 = \arg\max_{\bmx \in T^\ast} \re_i(\bmx)$ in summation normalization). 
By doing so, group $i$ pretends to prefer $\bmx_1$ less, thereby increasing the likelihood that the resulting fine-tuned model generates the outcomes it prefers more. 
The condition $\underline{\re_i} > 0$ ensures that group $i$ is not completely uninterested in any $\bmx$, which is more realistic in practice.

Further, we consider the case that $s_i \geq 2$ and $\underline{\re_i} = 0$.
Since the minimum value is already $0$, the strategy above cannot be applied. 
We need to analyze in more detail how the training results change when one group adjusts its reported preferences.
Under certain smoothness conditions of the function $f$, we derive a function $t(\bmx)$ to estimate the gradient of the valuation for group $i$ over the reported value $\re_i(\bmx)$. 
Based on this function, we show that if $t(\bmx) \neq 0$ for some $\bmx$, it is always possible to find a suitable direction and magnitude to report $\re'_i (\bmx) \neq \re_i(\bmx)$, allowing group $i$ to achieve higher utility. 
The result is summarized in the following theorem.
Due to the complicated form of the function $t$, we provide a detailed version in the \cref{thm:thmhigher_detailed}. 

\begin{restatable}[Simplified version of Theorem~\ref{thm:thmhigher_detailed}]{theorem}{thmhigher}\label{thm:thmhigher}
    In the RLHF Game with mechanism $(\thetaast, p)$ that $\thetaast \in \Psi^{SW}$ and $p \equiv 0$, when $f$ is strongly convex and $C^2$-smooth, there exists a function $t$, when $t(\bmx, \overrightarrow \re, \vec w, \thetainit) \neq 0$ for some $\bmx \in T^{\ast}$, truthfully reporting is not the optimal strategy. 
\end{restatable}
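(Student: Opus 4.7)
The plan is to characterize the optimizer $\thetaast(\overrightarrow{\re},\vec w,\thetainit)$ as a smooth implicit function of the reports via KKT, differentiate group $i$'s \emph{true} valuation through this map to define the sensitivity function $t$, and then convert any nonzero $t$ into a small admissible perturbation of $\re_i$ that strictly raises $v_i$.

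First I would use strong convexity of $f$ to conclude that $\mathrm{ASW}$ is strictly concave in $\llm$ on the simplex over $T^\ast$, so $\llmstar$ is unique; interiority ($\llmstar(\bmx)>0$) makes the only active constraint $\sum_{\bmx}\llm(\bmx)=1$. The Lagrangian stationarity condition then reads
\begin{equation*}
\sum_{j}\widetilde w_j\,\widetilde{\re}_j(\bmx) \;=\; \tfrac{1}{\llminit(\bmx)}\,f'\!\Bigl(\tfrac{\llmstar(\bmx)}{\llminit(\bmx)}\Bigr)+\mu
\end{equation*}
for every $\bmx\in T^\ast$, with $\mu$ the simplex multiplier. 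Strong convexity plus $C^2$-smoothness of $f$ make $f'$ a $C^1$-diffeomorphism, so the implicit function theorem delivers smooth dependence of $(\llmstar,\mu)$ on $\widetilde\re_i$.

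Next I would differentiate $v_i(\thetaast;\re_i)=\sum_{\bmx}\llmstar(\bmx)\,\re_i(\bmx)$ with respect to $\widetilde\re_i(\bmx)$ at the truthful point $\widetilde\re_i=\re_i$. Implicit differentiation of the KKT identity yields $\partial\llmstar(\bmx')/\partial\widetilde\re_i(\bmx)$ as a closed-form expression in $w_i$, $f''$ evaluated at $\llmstar/\llminit$, and an additive correction encoding $\partial\mu/\partial\widetilde\re_i(\bmx)$ that enforces $\sum_{\bmx'}\partial\llmstar(\bmx')/\partial\widetilde\re_i(\bmx)=0$. Plugging back defines
\begin{equation*}
t(\bmx,\overrightarrow{\re},\vec w,\thetainit):=\frac{\partial v_i}{\partial\widetilde\re_i(\bmx)}\bigg|_{\widetilde\re_i=\re_i},
\end{equation*}
which is exactly what the detailed statement \cref{thm:thmhigher_detailed} records. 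A structural consequence of the simplex constraint is $\sum_{\bmx}t(\bmx)=0$, so whenever $t$ is nonzero at some $\bmx$ it must take both strict signs.

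The hardest step, and the main obstacle, is converting nonzero $t$ into an \emph{admissible} profitable deviation: a naive step in direction $e_{\bmx_0}$ breaks either the summation/max normalization of $\mathcal R$ or the non-negativity constraint on coordinates where $\re_i(\bmx)=0$. My plan is to build a perturbation $\bmd$ in the tangent cone of $\mathcal R$ at $\re_i$ by transferring mass between a coordinate $\bmx_+$ with $t(\bmx_+)>0$ and a coordinate $\bmx_-$ with $t(\bmx_-)<0$ and $\re_i(\bmx_-)>0$; the delicate subcase, which I expect to be the main technical hurdle, is when every $\bmx$ with $t(\bmx)<0$ lies on the boundary $\re_i(\bmx)=0$, which I would resolve by swapping within the positive side using two entries of different $t$-values---the explicit KKT formula for $t$ forces their $\re_i$ to be strictly positive, so feasibility is preserved---and a parallel construction leaving an argmax of $\re_i$ untouched handles the max-normalization variant. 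A first-order Taylor expansion then gives
\begin{equation*}
v_i\bigl(\thetaast(\re_i+\epsilon\bmd,\overrightarrow{\re}_{-i},\vec w,\thetainit);\re_i\bigr)=v_i\bigl(\thetaast(\overrightarrow{\re},\vec w,\thetainit);\re_i\bigr)+\epsilon\,\langle t,\bmd\rangle+O(\epsilon^2),
\end{equation*}
and since $\langle t,\bmd\rangle\neq 0$ a correctly signed small $\epsilon>0$ strictly improves $v_i$ and hence---because $p\equiv 0$---group $i$'s utility, contradicting optimality of truthful reporting.
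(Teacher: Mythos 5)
Your route is essentially the paper's: characterize the optimizer by the KKT stationarity condition, extract a per-coordinate sensitivity, and convert a nonzero sensitivity into a small profitable perturbation. Your $t$ is the paper's $t(\bmz)=\sum_{\bmx}(\re_i(\bmz)-\re_i(\bmx))\llminit(\bmx)/f''\bigl(\llm(\bmx)/\llminit(\bmx)\bigr)$ rescaled by the positive factor $w_i\,\bigl(\llminit(\bmz)/f''(\llm(\bmz)/\llminit(\bmz))\bigr)\big/\sum_{\bmx}\llminit(\bmx)/f''(\llm(\bmx)/\llminit(\bmx))$, so the sign pattern agrees; using the implicit function theorem instead of the paper's explicit mean-value/$\epsilon$--$\delta$ estimates is a cosmetic difference.

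The genuine gap is in the feasibility step, precisely the subcase you flag as the main hurdle. A feasible direction can only decrease coordinates with $\re_i(\bmx)>0$, so you need a feasible $\bmd$ with $\langle t,\bmd\rangle>0$ strictly (merely nonzero does not suffice, since $-\bmd$ is infeasible once $\bmd$ raises a coordinate sitting at zero). When every coordinate with $t(\bmx)<0$ has $\re_i(\bmx)=0$, your fix is a swap between two positive-$\re_i$ entries ``of different $t$-values,'' asserting that the KKT formula forces such a pair to exist; that assertion is unsupported and false in general. The gradient at $\bmz$ equals $c\,\bigl(\llminit(\bmz)/f''(\llm(\bmz)/\llminit(\bmz))\bigr)\bigl(A\,\re_i(\bmz)-B\bigr)$ with $A,B,c>0$, so if $\re_i$ is a point mass (one coordinate equal to the whole weight, the rest zero --- allowed under $s_i\geq 2$, $\underline{\re_i}=0$), or more generally if all positive-$\re_i$ coordinates share the same reward value and the same ratio $\llminit(\bmx)/f''(\llm(\bmx)/\llminit(\bmx))$ (easily arranged, e.g.\ symmetric instances under KL), then every decreasable coordinate carries the maximal sensitivity, every pair with distinct $t$-values contains a zero coordinate that cannot be decreased, and no feasible first-order ascent direction exists even though $t\not\equiv 0$. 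This is exactly why the detailed statement \cref{thm:thmhigher_detailed} carries extra hypotheses in the summation case: case (2) requires a coordinate with $t<0$ and $0<\re_i<1$, and case (3) requires, besides $t(\bmx_1)>0$, a second coordinate $\bmx_2$ with $\re_i(\bmx_2)>0$ and the strict inequality $f''(\llm(\bmx_1)/\llminit(\bmx_1))/\llminit(\bmx_1)<f''(\llm(\bmx_2)/\llminit(\bmx_2))/\llminit(\bmx_2)$, which is what guarantees the swapped pair has strictly different sensitivities with the decreasable one smaller. Without importing those hypotheses (the max-normalization case with $0<\re_i(\bmx_1)<1$ is fine, since a one-coordinate move is feasible there), your unqualified claim ``$t\neq 0$ somewhere implies truthful reporting is suboptimal'' is not established by your argument, and in the boundary configurations above it is not provable by any first-order argument at all.
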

 
The properties of $f$ stated in~\cref{thm:thmhigher} are also considered in optimization theory~\cite{melbourne2020strongly} and encompass a wide range of divergence measures. 
Combining \cref{thm:dominated} and \cref{thm:thmhigher}, we provide a comprehensive analysis that covers the entire space of $s_i$ and $\underline{\re_i}$. 
While the second theorem offers only a sufficient condition for the suboptimality of truthful reporting, we demonstrate in the proof that \emph{this condition is highly likely to occur}, illustrating the impossibility of a mechanism that aims to maximize social welfare to incentivize truthfulness without payments.

\subsection{Affine Maximizer Payment}\label{sec:incentives:affine}
After establishing the necessity of payment rules in this scenario, we mainly address two questions in this part:
\begin{enumerate}[leftmargin=1.8em]
    \item Given a training rule $\thetaast$, can we find a payment rule $p$ such that the mechanism $(\thetaast, p)$ satisfies DSIC? This is the so-called implementability of a training rule $\thetaast$. 
    \item For an implementable training rule $\thetaast$, can we identify the relationship between the payment rules $p$s among all DSIC mechanisms $(\thetaast, p)$.
\end{enumerate}
For the first question, since there is an additional regularization term, we can not directly apply the vanilla VCG payment~\cite{vickrey1961counterspeculation, clarke1971multipart, groves1973incentives} to the \tr\ rules. 
To address this problem, we define $\mathrm{ASW}_{-i}(\theta; \overrightarrow \re, \vec w, \thetainit)$, the affine social welfare function that excludes the contribution of group $i$ from the social welfare:
\begin{align*}
    \mathrm{ASW}_{-i}(\theta; \overrightarrow \re, \vec w, \thetainit) := \mathrm{ASW}(\theta; \overrightarrow{\re}, \vec w, \thetainit) - w_i v_i(\theta;\re_i).
\end{align*}
Then, the vanilla VCG payment can be generalized to the following form, which is also known as the affine maximizer payment rule~\cite{AMA} $p^{AFF}$:
\begin{equation}\label{def:payment}
\begin{aligned}
    p_i^{AFF}(\overrightarrow{\re}, \vec w, \thetainit) &= \mathrm{ASW}_{-i}(\thetaast(\overrightarrow{\re}_{-i}, \vec w_{-i}, \thetainit); \overrightarrow\re, \vec w, \thetainit) - \mathrm{ASW}_{-i}(\thetaast(\overrightarrow{\re}, \vec w,\thetainit); \overrightarrow\re, \vec w, \thetainit).
\end{aligned}
\end{equation}

Following the proof of the classic VCG mechanism, we show that $p^{AFF}$ implements \tr\ rules in both DSIC and IR, implying that truthfully reporting both reward models and group sizes constitutes a dominant Nash Equilibrium under this mechanism. 

\begin{restatable}{proposition}{thmne}\label{thm:ne}
    For any $\thetaast \in \Psi^{SW}$, mechanism $(\thetaast, p^{AFF})$ satisfies DSIC and IR, and the payment is non-negative.
\end{restatable}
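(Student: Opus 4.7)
The plan is to carry out the standard VCG argument adapted to \tr\ rules, leveraging the key structural observation that $\mathrm{ASW}_{-i}$ depends on the input profile only through $\overrightarrow{\re}_{-i}$ and $\vec{w}_{-i}$: the regularizer $-D_f(\llm\|\llminit)$ is group-agnostic and has been absorbed into $\mathrm{ASW}_{-i}$, so it cancels consistently across the two halves of the payment. This lets me rewrite group $i$'s utility, up to an additive term independent of its report, as the \emph{true} affine social welfare evaluated at the training output, after which DSIC, non-negativity, and IR all follow from the optimality of $\thetaast$.

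First I would fix $(\overrightarrow{\re}_{-i}, \vec{w}_{-i})$ arbitrarily and expand group $i$'s utility when reporting $(\widetilde{\re}_i, \tilde w_i)$. Writing $\theta_f = \thetaast((\widetilde{\re}_i, \overrightarrow{\re}_{-i}), (\tilde w_i, \vec{w}_{-i}), \thetainit)$ and using that $\mathrm{ASW}_{-i}(\,\cdot\,; \overrightarrow{\re}, \vec{w}, \thetainit) = \mathrm{ASW}_{-i}(\,\cdot\,; \overrightarrow{\re}_{-i}, \vec{w}_{-i}, \thetainit)$, the utility becomes $w_i v_i(\theta_f; \re_i) + \mathrm{ASW}_{-i}(\theta_f; \overrightarrow{\re}_{-i}, \vec{w}_{-i}, \thetainit) - C$, where $C = \mathrm{ASW}_{-i}(\thetaast(\overrightarrow{\re}_{-i}, \vec{w}_{-i}, \thetainit); \overrightarrow{\re}_{-i}, \vec{w}_{-i}, \thetainit)$ does not depend on the report. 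The first two terms together equal $\mathrm{ASW}(\theta_f; (\re_i,\overrightarrow{\re}_{-i}), (w_i,\vec{w}_{-i}), \thetainit)$, the true affine social welfare at $\theta_f$, and this is maximized over reports exactly when $\theta_f = \thetaast(\overrightarrow{\re}, \vec{w}, \thetainit)$, which truthful reporting achieves by definition of $\thetaast$; this yields DSIC.

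For non-negativity of the payment, I would observe that $\thetaast(\overrightarrow{\re}_{-i}, \vec{w}_{-i}, \thetainit)$ maximizes the training objective on the $n-1$ groups other than $i$, which as a function of $\theta$ coincides with $\mathrm{ASW}_{-i}(\,\cdot\,; \overrightarrow{\re}_{-i}, \vec{w}_{-i}, \thetainit)$, so the first term of $p_i^{AFF}$ dominates the second. For IR, substituting the truthful report into the utility identity gives $\mathrm{ASW}(\thetaast(\overrightarrow{\re}, \vec{w}, \thetainit); \overrightarrow{\re}, \vec{w}, \thetainit) - \mathrm{ASW}_{-i}(\thetaast(\overrightarrow{\re}_{-i}, \vec{w}_{-i}, \thetainit); \overrightarrow{\re}_{-i}, \vec{w}_{-i}, \thetainit)$; invoking the optimality of $\thetaast(\overrightarrow{\re}, \vec{w}, \thetainit)$ against $\thetaast(\overrightarrow{\re}_{-i}, \vec{w}_{-i}, \thetainit)$ and cancelling the shared $\mathrm{ASW}_{-i}$ contribution lower-bounds the truthful utility by $w_i v_i(\thetaast(\overrightarrow{\re}_{-i}, \vec{w}_{-i}, \thetainit); \re_i) \geq 0$, since $w_i \geq 1$ and reward values are non-negative. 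The only point that needs care is the book-keeping that identifies $\mathrm{ASW}$ on $n-1$ groups with $\mathrm{ASW}_{-i}$ on the full profile, so that the $D_f$ regularizer really does cancel between the two halves of the payment; once this is pinned down, the rest is textbook VCG and I do not anticipate a substantive obstacle.
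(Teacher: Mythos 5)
Your proposal is correct and follows essentially the same route as the paper's proof: rewrite group $i$'s utility as the true affine social welfare at the trained model plus a report-independent constant (using that $\mathrm{ASW}_{-i}$ ignores $i$'s report), then invoke optimality of $\thetaast$ for DSIC, and compare against $\thetaast(\overrightarrow{\re}_{-i},\vec{w}_{-i},\thetainit)$ for IR. Your explicit argument for payment non-negativity (that $\thetaast(\overrightarrow{\re}_{-i},\vec{w}_{-i},\thetainit)$ maximizes $\mathrm{ASW}_{-i}$) is the intended one, which the paper leaves implicit.
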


The availability of the affine maximizer payment derives from the additive property of \tr\ rules. 
However, this method does not apply to training rules where the objective function cannot be decomposed into additive components, such as Nash Social Welfare and the fairness-oriented objective defined in MaxMin-RLHF~\cite{MaxMinRLHF}.
The implementability of an arbitrary training rule is characterized by the concept of cycle monotonicity, which is discussed in Section~\ref{sec:app:general} but is not the focus of this paper.

The second question is more general, so we consider the concept of \emph{payment equivalence}~\citep{ashlagi2010monotonicity} as a bridge, which is defined as:
\begin{definition}[Payment Equivalence]\label{def:equivalence}
    An implementable training rule $\thetaast$ satisfies payment equivalence if for any two mechanisms $(\thetaast, p)$ and $(\thetaast, p')$ satisfying DSIC, there exists a function $g_i$ such that for $\forall \re_i \in \mathcal R, w_i \in \mathcal W$
    \begin{equation*}
        p'_i(\overrightarrow \re, \vec w, \thetainit) = p_i(\overrightarrow \re, \vec w, \thetainit) + g_i\left(\overrightarrow\re_{-i}, \vec w_{-i}, \thetainit\right).
    \end{equation*}
    Or equivalently, when fixing $\overrightarrow\re_{-i}$, $\vec w_{-i}$ and $\thetainit$, there is a constant $c$ such that $p'_i(\re_i, w_i) = p_i(\re_i, w_i) + c$ for all $\re_i \in \mathcal R, w_i \in \mathcal W$.
\end{definition}

Payment equivalence indicates that the only way to modify a mechanism $(\thetaast, p)$ to $(\thetaast, p')$ while maintaining the property of DSIC is to add a term that is independent of $ i$'s report to group $ i$'s payment function $p_i$. 
Thus, the payment equivalence of $\thetaast$ is sometimes interpreted as the uniqueness of the payment rule $p$ that implements it in DSIC.
This notion is particularly useful in the case that we can figure out a certain DSIC mechanism $(\thetaast, p)$ for $\thetaast$ because any other payment rules $p'$ that also implement it in DSIC can be divided into $p$ and an independent part. 

In the context of the RLHF Game, the domain of the reward models and group sizes affects payment equivalence. 
When $\vec{w} \equiv 1$, groups only report reward models, with the domain $\mathcal{R}$ containing all normalized reward models $\re$. 
Since this forms a connected set in Euclidean space, we can apply the result from Nisan et al.~\cite{nisan2007introduction} to show:

\begin{restatable}{proposition}{thmequivalencew}\label{thm:equivalence-wo-w}
    When $\vec w \equiv 1$ is public information, and the agents only report the reward models, all implementable training rules satisfy payment equivalence. 
\end{restatable}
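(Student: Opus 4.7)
The plan is to reduce the claim to the classical payment equivalence theorem of Nisan et al.~\cite{nisan2007introduction}, applied to the single-agent type space $\mathcal{R}$ with all other agents' reports held fixed. Since $\vec{w} \equiv 1$ is public, an agent's type is simply its reported reward model $\re \in \mathcal{R}$, and by Definition~\ref{ass} the valuation $v(\theta; \re) = \sum_{\bmx \in T^\ast} \llm(\bmx) \re(\bmx)$ is \emph{linear} in $\re$. Linear-in-type valuations on (path-)connected type spaces are exactly the setting in which payment equivalence holds for any implementable allocation rule, regardless of the structure of that rule, and my proof would exploit nothing more than this.

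First I would verify that $\mathcal{R}$ is connected under both normalization schemes from Section~\ref{sec:pre}. Under summation normalization, $\mathcal{R} = \{\re \geq 0 : \sum_{\bmx} \re(\bmx) = 1\}$ is a simplex, hence convex and connected. Under max normalization, $\mathcal{R} = \{\re \geq 0 : \max_{\bmx} \re(\bmx) = 1\}$ is not convex, but it is path-connected: from any $\re \in \mathcal{R}$ we can continuously raise all non-maximal coordinates up to $1$ while keeping the maximum equal to $1$, reaching the all-ones vector, and then deform from there to any other point of $\mathcal{R}$ by the same construction in reverse.

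Next, fix any two mechanisms $(\thetaast, p)$ and $(\thetaast, p')$ satisfying DSIC, an agent $i$, a model $\thetainit$, and reports $\overrightarrow{\re}_{-i}$; write $p_i(\re_i)$ and $p'_i(\re_i)$ with the other arguments suppressed. The DSIC inequalities for $p$ read
\begin{align*}
v(\thetaast(\re_i, \overrightarrow{\re}_{-i}, \thetainit); \re_i) - p_i(\re_i) &\geq v(\thetaast(\re'_i, \overrightarrow{\re}_{-i}, \thetainit); \re_i) - p_i(\re'_i), \\
v(\thetaast(\re'_i, \overrightarrow{\re}_{-i}, \thetainit); \re'_i) - p_i(\re'_i) &\geq v(\thetaast(\re_i, \overrightarrow{\re}_{-i}, \thetainit); \re'_i) - p_i(\re_i),
\end{align*}
and the analogous inequalities hold for $p'_i$. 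These are precisely the single-agent incentive constraints of a standard mechanism design problem with allocation rule $\re_i \mapsto \thetaast(\re_i, \overrightarrow{\re}_{-i}, \thetainit)$ taking values in $\Theta$ and with the linear-in-type valuation $\re_i \mapsto v(\theta; \re_i)$. Invoking the payment equivalence theorem of Nisan et al.~\cite{nisan2007introduction} for implementable rules with linear-in-type valuations over connected type spaces yields a constant $c(\overrightarrow{\re}_{-i}, \thetainit)$ such that $p'_i(\re_i) - p_i(\re_i) = c(\overrightarrow{\re}_{-i}, \thetainit)$ for all $\re_i \in \mathcal{R}$. Setting $g_i(\overrightarrow{\re}_{-i}, \vec{w}_{-i}, \thetainit) := c(\overrightarrow{\re}_{-i}, \thetainit)$ gives exactly the condition in Definition~\ref{def:equivalence}.

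The main obstacle I anticipate is the max-normalization case, where $\mathcal{R}$ is not convex: the most commonly quoted version of the payment equivalence theorem is stated for convex type spaces, and the textbook line-integral argument needs a path-connectedness hypothesis only. I would therefore either cite the general version for connected domains, or explicitly verify that the line-integral characterization of $p_i$ along paths in $\mathcal{R}$ constructed as above gives the same value regardless of the path, using implementability (equivalently, cycle monotonicity of $\thetaast$) exactly as in the standard proof. Once connectedness is established, no special property of \tr\ rules is required, so the conclusion holds uniformly across both normalization schemes.
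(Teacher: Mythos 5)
Your proposal is correct and follows essentially the same route as the paper: the paper's proof also fixes the other agents' reports, observes that under both normalization schemes $\mathcal{R}$ is a connected subset of Euclidean space, and invokes the payment-equivalence theorem for connected domains from Nisan et al.\ (Theorem 1.37). Your additional verifications (linearity of $v(\theta;\re)$ in $\re$ and the explicit path-connectedness argument for the max-normalized, non-convex domain via the all-ones vector) are sound elaborations of the same argument rather than a different approach.
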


However, when the group size $\vec{w}$ is also a part of the private information for all groups, \emph{the domain of the whole private information becomes $\mathcal R \times \mathcal W$ that is no longer a connected set because $\mathcal W \subseteq \mathbb N_{+}$}.
To get a more meticulous characterization of the property, we define the continuity of a training rule. 

\begin{definition}[Continuous Training Rule]\label{ass:continue}
    A training rule $\psi$ is continuous if for any $\epsilon > 0$, there exists a $\delta > 0$ such that for any $\thetainit$, $\overrightarrow \re$, $\overrightarrow \re'$, $\vec w$ and $\vec w'$, if $\max_{\bmx\in T^\ast}|\sum_{i=1}^n (w_i \re_i(\bmx) - w'_i \re'_i(\bmx))|\leq \delta$, then $\max_{\bmx\in T^\ast}|\llm(\bmx) - \llmprime(\bmx)| \leq \epsilon$, where $\theta := \thetaast(\overrightarrow \re, \vec w, \thetainit)$ and $\theta' := \psi(\overrightarrow \re', \vec w', \thetainit)$. 
\end{definition}

The continuity requests that the training outcome be similar if the reported values are similar. 
This definition is natural, and we identify several continuous \tr\ rules. 

\begin{restatable}{proposition}{thmdistance}\label{thm:distance}
    \tr\ rules with regularizations KL-divergence, $f_{\mathrm{KL}}(x) = \lambda x \log x$, and $\chi^2$ divergence, $f_2(x) = \lambda {(x - 1)}^2$ ($\lambda > 0$ is a constant) are continuous.
\end{restatable}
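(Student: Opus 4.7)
My plan is to observe that the training objective depends on the private inputs only through the aggregate vector $R(\bmx):=\sum_{i=1}^n w_i\re_i(\bmx)$, solve the optimization in closed form for each of the two regularizers, and show that $R\mapsto \text{LLM}_\theta$ is Lipschitz on the bounded range of $R$. This immediately converts the continuity definition to a clean analytic question on a single function.

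First I would rewrite
\[
\mathrm{ASW}(\theta;\overrightarrow{\re},\vec w,\thetainit)=\sum_{\bmx\in T^\ast}\llm(\bmx)R(\bmx)-D_f(\llm\|\llminit),
\]
so the maximizer depends on $(\overrightarrow{\re},\vec w)$ only via $R$. The hypothesis $\max_{\bmx}\bigl|\sum_i(w_i\re_i(\bmx)-w'_i\re'_i(\bmx))\bigr|\le \delta$ is therefore precisely $\|R-R'\|_\infty\le\delta$, and since $0\le\re_i(\bmx)\le 1$ and $w_i\le\bar w$, the relevant inputs $R$ live in the compact box $[0,n\bar w]^{|T^\ast|}$. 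It suffices to prove that the map $R\mapsto\llm$ is continuous (in fact Lipschitz) in sup-norm on this box.

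For $f_{\mathrm{KL}}(x)=\lambda x\log x$, writing the Lagrangian with the simplex multiplier and setting the derivative to zero yields the unique closed-form Gibbs solution
\[
\llm(\bmx)=\frac{\llminit(\bmx)\exp(R(\bmx)/\lambda)}{\sum_{\bmy\in T^\ast}\llminit(\bmy)\exp(R(\bmy)/\lambda)}.
\]
Since $\llminit(\bmx)>0$ everywhere and $T^\ast$ is finite, this expression is $C^\infty$ in $R$ on the compact domain; its partial derivatives are bounded, so it is Lipschitz in $R$ with a constant depending only on $\lambda$, $n\bar w$, $|T^\ast|$, and $\min_{\bmx}\llminit(\bmx)$. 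Choosing $\delta=\epsilon/L$ finishes this case.

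For $f_2(x)=\lambda(x-1)^2$, the same Lagrangian gives the KKT conditions
\[
\llm(\bmx)=\max\!\left\{0,\ \llminit(\bmx)+\tfrac{1}{2\lambda}\bigl(R(\bmx)-\mu(R)\bigr)\right\},
\]
where $\mu(R)$ is the unique scalar enforcing $\sum_{\bmx}\llm(\bmx)=1$. The left-hand side is monotonically nonincreasing in $\mu$ and strictly decreasing wherever any coordinate is positive, so by an implicit-function / monotone-equation argument $\mu(R)$ is a Lipschitz function of $R$. Since $\max\{0,\cdot\}$ is $1$-Lipschitz, composing yields that $R\mapsto\llm$ is Lipschitz as well, and the same $\delta=\epsilon/L$ argument applies.

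The main obstacle I anticipate is the $\chi^2$ case, because the activation pattern of the nonnegativity constraints changes with $R$, so the solution is only piecewise smooth. I would handle this by working at the level of Lipschitz continuity rather than differentiability: the truncation $\max\{0,\cdot\}$ and the implicit multiplier $\mu(R)$ are each globally Lipschitz, which is all that is needed. A subtlety to verify is that $\mu(R)$ is well-defined and Lipschitz uniformly across the box — I would argue this by showing that at the optimum at least one coordinate is strictly below the truncation threshold (otherwise $\sum_\bmx\llm(\bmx)$ could not equal $1$), which gives a strictly positive lower bound on the slope $-\partial_\mu\sum_\bmx\llm(\bmx)$ and hence a uniform Lipschitz constant for $\mu(R)$ via the inverse-function bound.
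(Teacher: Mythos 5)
Your overall strategy matches the paper's in spirit---both reduce the problem to the first-order conditions of the regularized objective viewed as a function of the aggregate reward $R(\bmx)=\sum_i w_i\re_i(\bmx)$---but your execution differs: you solve for the maximizer in closed form (Gibbs weights for KL, a truncated water-filling formula for $\chi^2$) and argue Lipschitzness of $R\mapsto\llm$, whereas the paper compares the two Lagrange multipliers $\mu,\mu'$ directly, bounding $|\mu'-\mu|$ by $\max_\bmx|\Delta(\bmx)|$ via the normalization constraint and then reading off explicit $\delta(\epsilon)$ choices ($\delta=\min\{\tfrac{\lambda}{2}\log\tfrac{1}{1-\epsilon},\tfrac{\lambda}{2}\log(1+\epsilon)\}$ for KL, $\delta=\lambda\epsilon$ for $\chi^2$). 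Your treatment of the $\chi^2$ case is actually more general than the paper's: the paper invokes the standing assumption that the optimum has full support (``relatively large $\lambda$''), so the nonnegativity constraints are inactive, while you handle a changing active set via the $\max\{0,\cdot\}$ truncation and a monotone argument for $\mu(R)$.

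Two points need repair. First, a computational slip: with $D_{\chi^2}(\llm\|\llminit)=\lambda\sum_\bmx \llminit(\bmx)\bigl(\llm(\bmx)/\llminit(\bmx)-1\bigr)^2$, stationarity gives $\llm(\bmx)=\max\bigl\{0,\ \llminit(\bmx)+\tfrac{\llminit(\bmx)}{2\lambda}(R(\bmx)-\mu)\bigr\}$; your formula drops the $\llminit(\bmx)$ factor in the second term. Since $\llminit(\bmx)\le 1$, the corrected formula still composes into a Lipschitz map, so the structure of your argument survives. Second, and more importantly, Definition~\ref{ass:continue} quantifies $\delta$ \emph{before} $\thetainit$, so your Lipschitz constants must be uniform in the initial model; as stated, your KL constant ``depends on $\min_\bmx\llminit(\bmx)$'' and your $\chi^2$ slope lower bound is $\tfrac{\min_\bmx\llminit(\bmx)}{2\lambda}$, both of which can degenerate as $\llminit$ varies. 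This is fixable within your approach: the Gibbs map satisfies $|\partial \llm(\bmx)/\partial R(\bmy)|=\tfrac{1}{\lambda}|\llm(\bmx)\mathds{1}[\bmx=\bmy]-\llm(\bmx)\llm(\bmy)|$, giving a sup-norm Lipschitz constant at most $2/\lambda$ independent of $\llminit$; and for $\chi^2$ the shift invariance of the water-filling equation (replacing $R$ and $\mu$ by $R+c$ and $\mu+c$) yields $|\mu(R)-\mu(R')|\le\|R-R'\|_\infty$ with no dependence on $\llminit$. With these uniform constants substituted, your proof goes through and recovers the paper's conclusion.
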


Based on the continuity, we show a sufficient condition of payment equivalence for general training rules. 

\begin{restatable}{theorem}{thmequivalence}\label{thm:equivalence}
    An implementable training rule $\thetaast$ satisfies payment equivalence if it is continuous and for $\forall i$, $\overrightarrow \re_{-i}$, $\vec w_{-i}$, $\thetainit$ there exists $\re^*_i$ and $\theta$ such that $\thetaast((\re^*_i, \overrightarrow \re_{-i})$, $(w_i, \vec w_{-i})$, $\thetainit) \equiv \theta$ for all $w_i \in \mathcal W$. 
    In the maximum normalization case, $\re^*_i$ must be $\mathds{1}$.
\end{restatable}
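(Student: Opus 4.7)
The plan is a slice-then-bridge argument that reduces the problem to the connected-domain case already handled by \cref{thm:equivalence-wo-w}. Let $(\thetaast, p)$ and $(\thetaast, p')$ be any two DSIC mechanisms sharing the same training rule. Fix an agent $i$ together with a profile $(\overrightarrow{\re}_{-i}, \vec{w}_{-i}, \thetainit)$ for the other groups, and view $p_i, p'_i$ as functions on $\mathcal R \times \mathcal W$; the goal is to show their difference is independent of $(\re_i, w_i)$.

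First, I slice by the reported group size. For each fixed $w_i \in \mathcal W$, restrict attention to the slice $\{(\re_i, w_i) : \re_i \in \mathcal R\}$. On this slice only $\re_i$ varies; the induced sub-mechanism is DSIC on the convex (hence connected) domain $\mathcal R$, with valuation $w_i\, v(\theta;\re_i)$ linear in $\re_i$. This is exactly the setup of \cref{thm:equivalence-wo-w}, so within each slice the two payment rules differ by an additive constant,
\[
p'_i(\re_i, w_i) - p_i(\re_i, w_i) = c(w_i),
\]
for some $c(w_i)$ that a priori depends on $w_i$. The continuity of $\thetaast$ enters here, ensuring that the indirect utility underlying the Nisan et al.\ argument varies regularly on $\mathcal R$.

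Second, I use the bridging hypothesis to eliminate the dependence of $c(w_i)$ on $w_i$. By assumption there exist $\re^*_i$ and $\theta$ such that $\thetaast((\re^*_i, \overrightarrow{\re}_{-i}), (w_i, \vec{w}_{-i}), \thetainit) = \theta$ for every $w_i \in \mathcal W$. Consider an agent whose true type is $(\re^*_i, w_i)$ contemplating the report $(\re^*_i, w'_i)$: the training outcome is the same model $\theta$ in both cases, so the valuation term $w_i\, v(\theta; \re^*_i)$ is identical under truth and deviation, and the DSIC inequality collapses to $p_i(\re^*_i, w_i) \le p_i(\re^*_i, w'_i)$. Swapping the roles of $w_i$ and $w'_i$ gives the reverse inequality, so $p_i(\re^*_i, \cdot)$ is constant on $\mathcal W$; the same reasoning applies to $p'_i$. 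Evaluating $c(w_i) = p'_i(\re^*_i, w_i) - p_i(\re^*_i, w_i)$ at the bridging reward then shows $c$ does not depend on $w_i$, which establishes payment equivalence on the full domain $\mathcal R \times \mathcal W$.

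For the remark on maximum normalization, observe that the bridging forces the $\mathrm{ASW}$-maximizer to be insensitive to the coefficient $w_i$ of $v(\theta;\re^*_i)$ in the objective; hence $v(\theta;\re^*_i) = \sum_{\bmx} \llm(\bmx)\,\re^*_i(\bmx)$ must be constant across achievable distributions over $T^\ast$. Because $\Theta$ is expressive enough to place arbitrary mass on any single sequence, $\re^*_i$ must be a constant function on $T^\ast$, and the normalization $\max_{\bmx} \re^*_i(\bmx) = 1$ then uniquely pins down $\re^*_i \equiv \mathds 1$. The main obstacle is the slicing step: one has to verify that the machinery of \cref{thm:equivalence-wo-w} applies cleanly with an arbitrary fixed $w_i$ (and not only $w_i = 1$) and with $\mathcal R$ treated as a convex body lying inside its normalization hyperplane. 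Once per-slice equivalence is in hand, the bridging step is essentially algebraic and transparent.
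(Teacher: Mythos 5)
Your proposal is correct in its essentials, but it takes a genuinely different route from the paper's. The paper works directly on the product domain $\mathcal R \times \mathcal W$: it invokes a Heydenreich-et-al.-style characterization (payment equivalence follows once $V(t_i,t'_i) = -V(t'_i,t_i)$ for the shortest-path function $V$ built from one-step valuation gains), obtains $V(t_i,t'_i)+V(t'_i,t_i)\geq 0$ from cycle monotonicity of any implementable rule, and then, for every $\epsilon>0$, constructs an explicit finite path from $t_i$ to $t'_i$ that moves the reward in small convex-combination steps toward $\re^*_i$ at fixed $w_i$, switches the weight at $\re^*_i$ (where the outcome is unchanged), and then moves toward $\re'_i$ at $w'_i$; continuity of $\thetaast$ is precisely what makes each small step contribute $O(\epsilon)$, and the stipulation $\re^*_i=\mathds 1$ under maximum normalization is there so that the intermediate convex combinations remain valid reward models. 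You instead slice by $w_i$, apply the connected-domain payment-equivalence result (\cref{thm:equivalence-wo-w}) on each slice to get $p'_i-p_i=c(w_i)$, and then align the slice constants with a two-line DSIC argument at the bridge type, where the identical outcome forces $p_i(\re^*_i,\cdot)$ and $p'_i(\re^*_i,\cdot)$ to be constant in $w_i$. Your slicing step does go through for arbitrary fixed $w_i$: the underlying connected-domain lemma only needs the type domain to be connected, and with $w_i$ fixed the valuation $w_i\langle \llm,\re_i\rangle$ is linear in $\re_i$ on the convex set $\mathcal R$, so a standard envelope/subgradient argument pins the payment down up to a constant on each slice. What each approach buys: yours is shorter and, granting \cref{thm:equivalence-wo-w} in the generality the paper states it, it never uses the continuity hypothesis nor the $\mathds 1$ stipulation, so it in fact proves a formally stronger statement; the paper's path construction buys self-containedness on the product domain, at the price of needing continuity to control the small steps.

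Two caveats. First, your remark that continuity of $\thetaast$ ``enters'' in the slicing step is inaccurate -- nothing in the slice argument uses it, which is fine (an unused hypothesis does no harm) but should not be presented as load-bearing. Second, you read the final sentence of the theorem as a conclusion to be derived; in the paper it is a restriction on the hypothesis in the maximum-normalization case (the bridge must be $\mathds 1$ so the paper's convex-combination path stays inside $\mathcal R$), and your attempted derivation of it (expressiveness of $\Theta$ forcing $v(\theta;\re^*_i)$ to be constant, hence $\re^*_i$ constant) is only sound for welfare-maximizing-type rules, not for an arbitrary implementable training rule. Since your main argument never uses that sentence, this does not create a gap in your proof of payment equivalence.
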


We provide some intuitions of the theorem. 
Here, when fixing $\overrightarrow \re_{-i}$, $\vec w_{-i}$, and $\thetainit$, if we can find a $\re^*_i$ such that when group $i$ reports $\re^*_i$ then the reported $w_i$ will not affect the training result, $\re^*_i$ actually serves to connect different $w_i \in \mathcal W$. 
For \tr\ rules, we observe that the reward model $\re$ that assigns the same value for all $\bmx$s, i.e., $\forall \bmx$, $\re(\bmx) = 1$ for maximum normalization, and $\re(\bmx) = 1 / |T^\ast|$ for summation normalization, serves the role of $\re^*_i$. 
With the continuity of the training rule, this makes the domain of $\mathcal R \times \mathcal W$ connected in another sense that can also induce payment equivalence. 
Based on this, we derive the payment equivalence property:
\begin{restatable}{corollary}{thmequivalencesw}\label{thm:equivalence_sw}
    Each continuous training rule $\thetaast \in \Psi^{SW}$ satisfies payment equivalence. 
\end{restatable}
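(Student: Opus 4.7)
The plan is to derive the corollary as a one-line consequence of Theorem~\ref{thm:equivalence}. Continuity of $\thetaast$ is already assumed in the statement, so the only thing left to verify is the second hypothesis of that theorem: for every $i$ and every fixed $\overrightarrow{\re}_{-i}$, $\vec w_{-i}$, $\thetainit$, there is some $\re^*_i\in\mathcal R$ such that $\thetaast((\re^*_i,\overrightarrow{\re}_{-i}),(w_i,\vec w_{-i}),\thetainit)$ returns the same $\theta$ for every $w_i\in\mathcal W$.

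I would take $\re^*_i$ to be the (essentially unique) constant reward model in $\mathcal R$, namely $\re^*_i(\bmx)\equiv 1$ in the maximum-normalization case (this matches the $\mathds{1}$ required by Theorem~\ref{thm:equivalence}) and $\re^*_i(\bmx)\equiv 1/|T^\ast|$ in the summation-normalization case. Both lie in $\mathcal R$, and since $\llm$ is a probability distribution over $T^\ast$,
\begin{equation*}
v_i(\theta;\re^*_i) \;=\; \sum_{\bmx\in T^\ast}\llm(\bmx)\,\re^*_i(\bmx) \;=\; c
\end{equation*}
for a constant $c$ that does not depend on $\theta$. Hence group $i$'s contribution $w_i v_i(\theta;\re^*_i)=w_i c$ is a $\theta$-independent constant, and so
\begin{equation*}
\arg\max_{\theta\in\Theta}\mathrm{ASW}\bigl(\theta;(\re^*_i,\overrightarrow{\re}_{-i}),(w_i,\vec w_{-i}),\thetainit\bigr) \;=\; \arg\max_{\theta\in\Theta}\Bigl[\sum_{j\neq i} w_j v_j(\theta;\re_j) - D_f(\llm || \llminit)\Bigr],
\end{equation*}
an expression free of $w_i$.

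The one subtlety to check is the tie-breaking rule, which further orders candidates by the $v_j(\theta;\re_j)$s: because $v_i(\cdot;\re^*_i)$ is constant in $\theta$, it cannot discriminate between tied maximizers, so the tie-breaking reduces to the remaining coordinates $\{v_j(\cdot;\re_j)\}_{j\neq i}$, which are also $w_i$-independent. This delivers the required $w_i$-invariance and lets Theorem~\ref{thm:equivalence} conclude payment equivalence. The main (and essentially only) obstacle is spotting the correct $\re^*_i$ in both normalization regimes and observing that the additive structure of $\mathrm{ASW}$ is exactly what makes group $i$'s weight irrelevant once it reports a flat preference; everything else is mechanical once Theorem~\ref{thm:equivalence} is in hand.
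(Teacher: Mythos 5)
Your proposal is correct and follows essentially the same route as the paper: it picks the uniform reward model ($\mathds{1}$ under maximum normalization, $1/|T^\ast|$ under summation normalization) as $\re^*_i$, uses the fact that this makes $v_i(\theta;\re^*_i)$ constant in $\theta$ so that $w_i$ only shifts the objective by a constant (the paper packages this same observation as a short contradiction argument, including the tie-breaking case you also address), and then invokes Theorem~\ref{thm:equivalence}.
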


As a continuous \tr\ rule always satisfies payment equivalence, we can establish the relationship between $p^{AFF}$ and any other payment rule that implements it in DSIC. 
Combined with the inherent property of $p^{AFF}$, we derive the necessary conditions for a payment rule to satisfy more conditions, such as non-negativity and IR. 

\begin{restatable}{theorem}{thmnonnegative}\label{thm:nonnegative}
    Given a continuous training rule $\thetaast \in \Psi^{SW}$ and a payment rule $p$ implements it in DSIC:
    If $p$ is always non-negative, it holds that for all $i$, $\overrightarrow{\re}$, $\vec w$, and $\thetainit$,
    \begin{equation*}
        p_i(\overrightarrow{\re}, \vec w, \thetainit) \geq p_i^{AFF}(\overrightarrow{\re}, \vec w, \thetainit).
    \end{equation*}
    If $p$ implements $\thetaast$ in IR, then for any $\epsilon >0$ and $i$, there exists $\overrightarrow{\re}_{-i}$, $\vec w_{-i}$, and $\thetainit$, such that for all $\re_i$ and $w_i$,
    \begin{equation*}
        p_i(\overrightarrow{\re}, \vec w, \thetainit) \leq p_i^{AFF}(\overrightarrow{\re}, \vec w, \thetainit) + \epsilon.
    \end{equation*} 
\end{restatable}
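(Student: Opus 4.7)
The plan is to reduce both parts to controlling the residual $g_i(\overrightarrow{\re}_{-i}, \vec{w}_{-i}, \thetainit) := p_i(\overrightarrow{\re}, \vec{w}, \thetainit) - p_i^{AFF}(\overrightarrow{\re}, \vec{w}, \thetainit)$, which Corollary~\ref{thm:equivalence_sw} guarantees is well-defined (independent of $(\re_i, w_i)$), since $p$ and $p^{AFF}$ both implement the same continuous SW-Max rule in DSIC. Part 1 then reduces to showing $g_i \geq 0$, and Part 2 to showing that, for a suitable choice of the other arguments, $g_i \leq \epsilon$. For Part 1, I evaluate $p_i^{AFF}$ at the neutral report $\re_i = \re^*_i$ (namely $\mathds{1}$ under max normalization and the uniform reward under sum normalization, as in Theorem~\ref{thm:equivalence}). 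Since $w_i v_i(\theta; \re^*_i)$ is constant in $\theta$, adjoining it to the SW-Max objective does not move the argmax, so $\thetaast((\re^*_i, \overrightarrow{\re}_{-i}), \vec{w}, \thetainit) = \thetaast(\overrightarrow{\re}_{-i}, \vec{w}_{-i}, \thetainit) =: \theta^{-i}$ under consistent tie-breaking. Because $\mathrm{ASW}_{-i}$ never involves $i$'s report, the two terms of the affine maximizer formula both equal $\mathrm{ASW}_{-i}(\theta^{-i}; \overrightarrow{\re}_{-i}, \vec{w}_{-i}, \thetainit)$ and hence $p_i^{AFF}((\re^*_i, \overrightarrow{\re}_{-i}), \vec{w}, \thetainit) = 0$. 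Non-negativity of $p$ then forces $g_i \geq 0$, so $p_i \geq p_i^{AFF}$ for every $(\re_i, w_i)$.

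For Part 2, IR at truth-telling yields $p_i \leq w_i v_i(\theta; \re_i)$, so $g_i \leq w_i v_i(\theta; \re_i) - p_i^{AFF}(\re_i, w_i) = \mathrm{ASW}(\theta; \overrightarrow{\re}, \vec{w}, \thetainit) - \mathrm{ASW}_{-i}(\theta^{-i}; \overrightarrow{\re}_{-i}, \vec{w}_{-i}, \thetainit)$, which rewrites as $[\mathrm{ASW}(\theta) - \mathrm{ASW}(\theta^{-i})] + w_i v_i(\theta^{-i}; \re_i)$, the marginal affine-social-welfare contribution of group $i$. Since $g_i$ is independent of $(\re_i, w_i)$, it suffices to exhibit one instance and one report making this bound at most $\epsilon$. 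My construction takes $\thetainit$ almost-concentrated on a single sequence $x^* \in T^\ast$ (with $\llminit(x^*) \geq 1 - \eta$ and a tiny positive mass on each other sequence), sets $\overrightarrow{\re}_{-i}$ to the neutral reward $\re^*$ throughout so that $\theta^{-i} = \theta_{\text{init}}$, takes $\vec{w}_{-i}$ arbitrary, $w_i = 1$, and chooses $\re_i \in \mathcal R$ supported away from $x^*$. Then $v_i(\theta^{-i}; \re_i) \leq \eta$ immediately, and the $f$-divergence penalty combined with the continuity of $\thetaast$ keeps $\text{LLM}_\theta$ close to $\llminit$, so $\mathrm{ASW}(\theta) - \mathrm{ASW}(\theta^{-i})$ also shrinks to zero with $\eta$.

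The delicate step is the quantitative estimate in Part 2. Definition~\ref{ass:continue} bounds the perturbation of $\text{LLM}_\theta$ only in terms of the weighted-reward perturbation $\sum_i w_i \re_i$, but in our comparison this perturbation has magnitude $w_i = 1$, not automatically small. I therefore expect to lean on the regularization directly: a near-deterministic $\llminit$ together with the $D_f$-penalty pins both $\theta$ and $\theta^{-i}$ near $\llminit$, and their affine-welfare gap can then be bounded by a quantity tending to zero with $\eta$ uniformly over $(\re_i, w_i)$. Making this estimate pass for every continuous SW-Max rule—not only the KL and $\chi^2$ cases of Proposition~\ref{thm:distance}—while respecting the constraints $\llminit(x) > 0$ and $\re_i \in \mathcal R$, is where the care is needed.
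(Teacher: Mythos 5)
Your first part is essentially the paper's own argument: payment equivalence for continuous SW-Max rules (\cref{thm:equivalence_sw}) reduces everything to the report-independent offset $g_i$, and evaluating $p^{AFF}_i$ at the neutral report $\re^*_i$ (where the outcome coincides with the no-$i$ outcome, so $p^{AFF}_i=0$) turns non-negativity of $p$ into $g_i\ge 0$. That is exactly how the paper proves the first inequality, and it is correct.

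The second part is where there is a genuine gap, and it sits precisely at the step you flagged. Your reduction $g_i \le u_i^{AFF} = \bigl[\mathrm{ASW}(\theta)-\mathrm{ASW}(\theta_{-i})\bigr] + w_i v_i(\theta_{-i};\re_i)$ is fine, but with your instance (other groups neutral, $\llminit$ concentrated on $\bmx^*$, $\re_i$ supported off $\bmx^*$) the term $\mathrm{ASW}(\theta)-\mathrm{ASW}(\theta_{-i})$ is not controlled by anything in the theorem's hypotheses. Adding group $i$ perturbs the aggregate reward $\sum_j w_j\re_j$ by an amount of order $w_i\ge 1$, so \cref{ass:continue} gives no bound at all; and for a general convex $f$ the divergence penalty need not pin $\llm$ near a nearly degenerate $\llminit$ — the cost of moving mass $m$ onto a sequence with initial mass $\eta$ is $\eta f(m/\eta)$, which stays bounded (or even small) for sublinearly growing $f$, so the welfare gain from adding group $i$ need not vanish as $\eta\to 0$. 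To make your estimate go through one needs quantitative curvature of $f$ (e.g.\ $\alpha$-strong convexity as in \cref{lemma:sufficient_for_continuous}, which via the KKT condition forces $\llm(\bmx)\le C\,\llminit(\bmx)$ off $\bmx^*$), and the theorem only assumes continuity of the rule, so your argument as written does not cover the stated generality. The paper avoids this issue entirely with a different instance: instead of neutral opponents, it makes the other groups' preferences \emph{oppose} group $i$'s (e.g.\ $T^\ast=\{\bmx_1,\bmx_2\}$, $\llminit(\bmx_1)=\epsilon$, $\re_i=\mathds{1}_{\bmx_1}$, $\re_j=\mathds{1}_{\bmx_2}$ for $j\neq i$, all sizes $1$), so that the optimality condition \cref{eqn:OPT} directly yields $\llm(\bmx_1)\le\llminit(\bmx_1)=\epsilon$ for the model trained \emph{with} group $i$ included; then $u_i^{AFF}\le w_i v_i(\theta;\re_i)\le\epsilon$ simply because $p^{AFF}\ge 0$, with no need to bound the welfare gap $\mathrm{ASW}(\theta)-\mathrm{ASW}(\theta_{-i})$ at all, and the argument works for any convex $f$ under the standing full-support assumption. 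I recommend replacing your construction with one of this "opposed preferences" type, or explicitly strengthening the hypothesis to strongly convex, $C^2$-smooth $f$ if you want to keep yours.
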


This result implies that if we want to design a payment $p$ to satisfy all these properties, $p^{AFF}$ is a ``lower bound'' for $p$, and $p$ should be sufficiently close to $p^{AFF}$ in some inputs.

\begin{figure*}[t]
    \centering
    \includegraphics[width=0.94\linewidth]{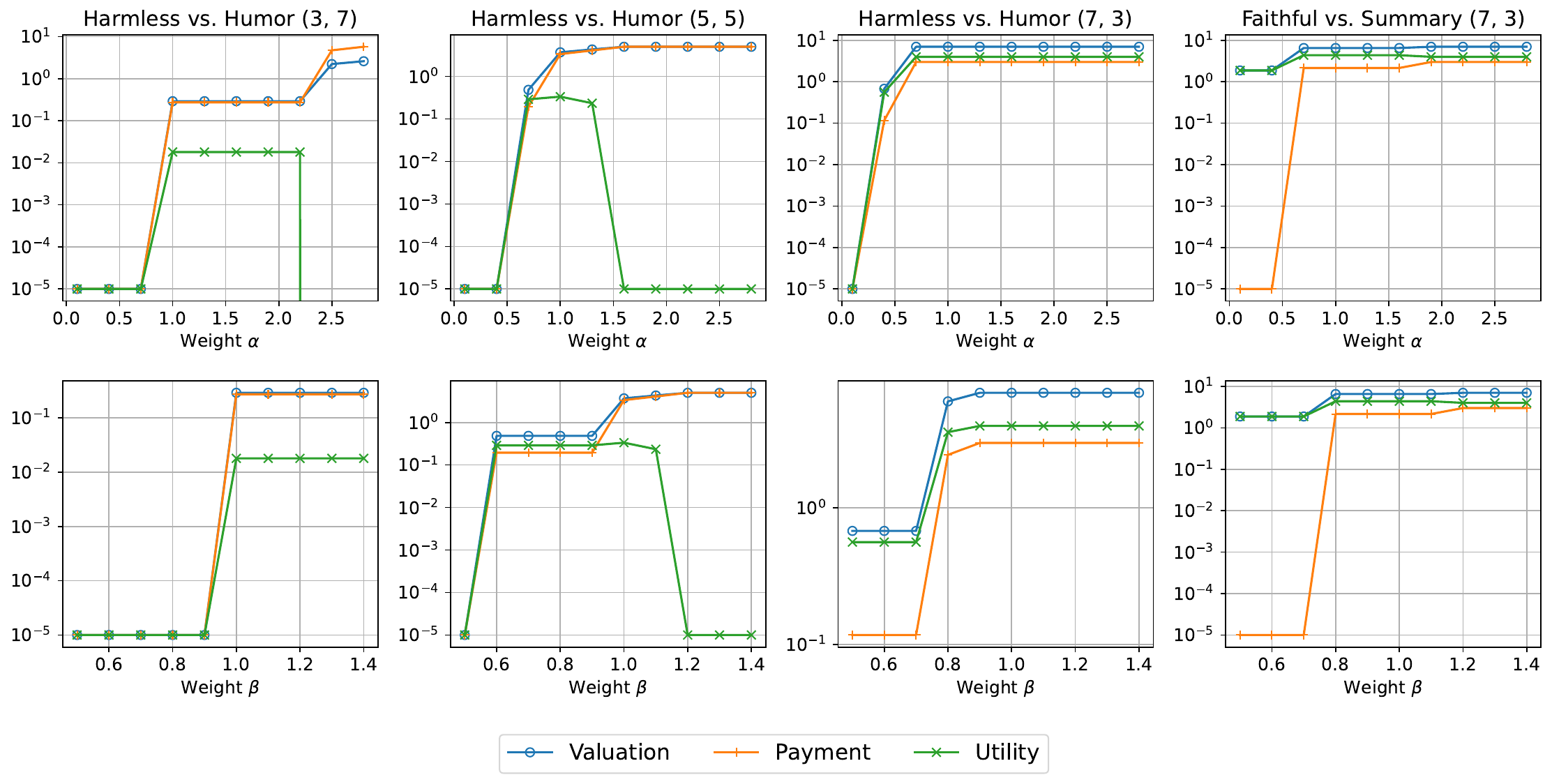}
    \caption{The simulation result for the mechanism $(\thetaast, p^{AFF})$ on real LLM setup.
    We set the group number $n=2$, and the group size $(w_1, w_2)$ for each column is in the title. 
    We report the valuation, the payment, and the utility for group $1$ when adopting different reporting parameters $\alpha$ and $\beta$ (defined in Section~\ref{sec:empirical}).
    Truthfully reporting ($\alpha=1$ and $\beta=1$) brings the highest utility for all cases.
    }\label{fig:empirical_result}
\end{figure*}

\subsection{Approximate Valuation}\label{sec:incentives:approximate}
In this part, we study the influence of errors generated in practice on the incentive property in the RLHF Game. 
We abstract it as an approximate valuation problem~\cite{chiesa2012mechanism}. 
Formally, when group $i$ reports its reward model $\re_i$, the mechanism may not use $\re_i$ but rather a noisy reward model $\widehat{\re}_i$ as the input.
We assume that the noise is independently generated and there is an underlying joint distribution $F(\cdot| \overrightarrow \re)$ for the $\overrightarrow{\widehat{\re}}$.
This abstraction captures various errors that may occur in practical training. 
One example is that the calculation of valuation defined in \cref{ass} requires sampling sequences from LLM, which may result in a deviation from the true valuation.

When the groups are rational, they could be aware of the noise and consider the influence of that on their utility. 
For group $i$ with reward model $\re_i$ and group size $w_i$, it will computes an expected utility $U_i$ for reporting $(\re'_i, w'_i)$ given by
\begin{equation*}
    U_i((\re'_i, \overrightarrow\re_{-i}), (w'_i, \vec w_{-i}); \thetaast, p, \re_i, w_i) = \mathbb E_{\overrightarrow{\widehat{\re}} \sim F(\cdot|(\re'_i, \overrightarrow\re_{-i}))} u_i(\overrightarrow{\widehat{\re}}, (w'_i, \vec w_{-i}); \thetaast, p, \re_i, w_i).
\end{equation*}

We consider the case that the noisy input reward models $\widehat{\re}_i$ and the reported reward models $\re_i$ are close. 
In that case, we show that when using a training rule $\thetaast \in \Psi^{SW}$, the distance between the true optimal point and the training outcome with noisy input is bounded.
Based on that, we calculate the utility of a group under the mechanism $(\thetaast, p^{AFF})$ and derive the approximate incentive compatibility of the mechanism.

\begin{restatable}{theorem}{thmapprox}\label{thm:approximate}
    Assume that for any noisy input $\overrightarrow{\widehat{\re}}$ generated from $F(\cdot| \overrightarrow \re)$, and $i \in [n]$, there is 
    \[
    \max_{\bmx \in T^\ast} |\widehat \re_i(\bmx) - \re_i(\bmx)| \leq \epsilon.
    \]
    Then, with a training rule $\thetaast \in \Psi^{SW}$, $(\thetaast, p^{AFF})$ ensures that each group $i$ can benefit at most $2 w_i \epsilon$ from misreporting the reward model. 
\end{restatable}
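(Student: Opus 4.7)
I would adapt the VCG welfare argument to the noisy setting. The starting point is to substitute $p^{AFF}$ into the utility and use the identity $\mathrm{ASW}(\theta;(a_i,b_{-i}),\vec w,\thetainit) = w_i v_i(\theta;a_i) + \mathrm{ASW}_{-i}(\theta;b_{-i},\vec w_{-i},\thetainit)$ with $a_i = \re_i$ (group $i$'s \emph{true} reward model) to rewrite the expected utility of reporting $\re'_i$ as
\begin{equation*}
U_i(\re'_i) = \mathbb{E}_{\overrightarrow{\widehat{\re}}}\bigl[\mathrm{ASW}(\theta^{\widehat r};(\re_i,\widehat{\re}_{-i}),\vec w,\thetainit)\bigr] - \mathbb{E}_{\overrightarrow{\widehat{\re}}}\bigl[\mathrm{ASW}_{-i}(\theta^{\widehat r}_{-i};\widehat{\re}_{-i},\vec w_{-i},\thetainit)\bigr],
\end{equation*}
with $\theta^{\widehat r} := \thetaast(\overrightarrow{\widehat{\re}},\vec w,\thetainit)$ and $\theta^{\widehat r}_{-i} := \thetaast(\overrightarrow{\widehat{\re}}_{-i},\vec w_{-i},\thetainit)$. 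Because the noise is generated independently across groups, the marginal distribution of $\widehat{\re}_{-i}$ is the same whether $i$ reports $\re_i$ or $\re'_i$, so the second expectation is a constant that cancels when taking the difference $U_i(\re'_i) - U_i(\re_i)$.

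I would then sandwich the remaining welfare term against the ``ideal'' value $M(\widehat{\re}_{-i}) := \max_\theta \mathrm{ASW}(\theta;(\re_i,\widehat{\re}_{-i}),\vec w,\thetainit)$, i.e.\ the welfare that would be produced if $i$'s true reward model were fed into the training rule. An upper bound is automatic: for any noise realization $\overrightarrow{\widehat{\re}}$, $\theta^{\widehat r}$ cannot beat the optimizer of $\mathrm{ASW}(\cdot;(\re_i,\widehat{\re}_{-i}))$, giving $U_i(\re'_i) \le \mathbb{E}_{\widehat{\re}_{-i}}[M(\widehat{\re}_{-i})] - \mathrm{const}$ uniformly in $\re'_i$. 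The core step is the matching lower bound for truthful reporting. Fix $\widehat{\re}_{-i}$ and a truthful-case realization $\widehat{\re}_i^r$ with $\max_{\bmx}|\widehat{\re}_i^r(\bmx)-\re_i(\bmx)| \le \epsilon$; with $\theta' := \thetaast((\widehat{\re}_i^r,\widehat{\re}_{-i}),\vec w,\thetainit)$ and $\theta^*$ the maximizer of $\mathrm{ASW}(\cdot;(\re_i,\widehat{\re}_{-i}))$, I would decompose $M - \mathrm{ASW}(\theta';(\re_i,\widehat{\re}_{-i}))$ as
\begin{equation*}
\bigl[\mathrm{ASW}(\theta^*;(\widehat{\re}_i^r,\widehat{\re}_{-i})) - \mathrm{ASW}(\theta';(\widehat{\re}_i^r,\widehat{\re}_{-i}))\bigr] + w_i\bigl[(v_i(\theta^*;\re_i) - v_i(\theta^*;\widehat{\re}_i^r)) - (v_i(\theta';\re_i) - v_i(\theta';\widehat{\re}_i^r))\bigr].
\end{equation*}
The first bracket is $\le 0$ by the optimality of $\theta'$ for the noisy objective, and each of the two inner differences in the second bracket is bounded by $\max_{\bmx}|\re_i(\bmx)-\widehat{\re}_i^r(\bmx)| \le \epsilon$ via \cref{ass}. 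Hence the per-realization gap is at most $2w_i\epsilon$. Averaging over $\widehat{\re}_i^r \sim F_i(\cdot|\re_i)$ and $\widehat{\re}_{-i}$ gives $U_i(\re_i) \ge \mathbb{E}_{\widehat{\re}_{-i}}[M(\widehat{\re}_{-i})] - 2w_i\epsilon - \mathrm{const}$, and subtracting from the upper bound yields the desired $U_i(\re'_i) - U_i(\re_i) \le 2w_i\epsilon$.

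The main subtlety I anticipate is pinpointing how the factor of $2$ arises: one $\epsilon$ is contributed by the valuation difference at the ideal maximizer $\theta^*$ and one by the same difference at the noise-induced maximizer $\theta'$; the optimality of $\theta'$ absorbs the ``training-mismatch'' gap between the noisy and true objectives but does not remove either valuation gap, so both must be paid. Independence of the noise across groups is equally essential---it is precisely what allows the externality term $\mathrm{ASW}_{-i}$ to cancel in the $U_i(\re'_i) - U_i(\re_i)$ comparison. Without independence, a coupling argument on $\widehat{\re}_{-i}$ would be needed, and the clean $2w_i\epsilon$ bound would likely inflate with additional cross terms.
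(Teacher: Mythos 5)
Your proposal is correct and follows essentially the same route as the paper's proof: you rewrite the utility under $p^{AFF}$ as $\mathrm{ASW}$ measured with group $i$'s true reward minus an $i$-independent term, use independence of the noise to cancel that term, and pay one $\epsilon$ on each side of the noisy-objective optimality step (via the valuation bound $|v(\theta;\re)-v(\theta;\widehat\re)|\le\epsilon$), with the true-reward optimum dominating any noise-induced outcome for the misreport side. The paper arranges this as a single chain of inequalities rather than an explicit sandwich around the benchmark $M(\widehat{\re}_{-i})$, but the decomposition and the source of the factor $2w_i\epsilon$ are identical.
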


This theoretical result guarantees a considerable utility for truthful reporting.
Since the maximum gain of misreporting for group $i$ is less than $2 w_i \epsilon$ regardless of the others' reports, groups will tend to truthfully report in cases where finding the optimal strategy and modifying its reward model is costlier than $2 w_i \epsilon$. 

\subsection{Efficient Implementation of the Mechanism}\label{sec:incentives:efficient}

At the end of the whole section, we discuss how $p^{AFF}$ can be implemented in practice, as \cref{thm:ne} and \cref{thm:nonnegative} show that it is ``unique'' to implement \tr\ rules in DSIC. 
As is defined in \cref{def:payment}, we have to compute $\thetaast(\overrightarrow{\re}_{-i}, \vec{w}_{-i}, \thetainit)$ for each $i$ aside from the final model $\theta^\ast := \thetaast(\overrightarrow \re, \vec w, \thetainit)$. 
From the definition $\thetaast(\overrightarrow{\re}_{-i}, \vec{w}_{-i}, \thetainit) := \max_{\theta \in \Theta} \text{OBJ} (\theta; \overrightarrow{\re}_{-i}, \vec{w}_{-i}, \thetainit)$, finding a maximum over whole space $\Theta$ requires a whole training process. 
This results in $n$ additional trainings when we have $n$ groups. 
To address this problem, we propose two heuristic methods that approximate the payment computation; the core of both is to take the maximum on a constraint space $\Theta'$ instead of the whole space $\Theta$. 

\begin{itemize}[leftmargin=1.8em]
    \item \textbf{Heuristic 1: Intermediate Models} \\
    During the training to obtain the model $\thetaast(\overrightarrow \re, \vec w, \thetainit)$, we usually save intermediate models at different training steps. We can set $\Theta'$ to be these intermediate models. This requires no additional training and maintains payment non-negativity since $\theta^\ast$ is also in $\Theta'$, but DSIC is not strictly guaranteed as $\Theta'$ depends on group $i$'s report. However, the complex dependency makes strategic manipulation practically difficult.

    \item \textbf{Heuristic 2: Early-Stopped Training} \\
    We can perform early-stopped training to compute the $\thetaast(\overrightarrow{\re}_{-i}, \vec{w}_{-i}, \thetainit)$. 
    This means that we use a less powerful $\Theta'$ that is only dependent on $\overrightarrow{\re}_{-i}, \vec{w}_{-i}, \thetainit$. 
    Since the independence of group $i$, this preserves DSIC theoretically. However, the payment may be negative as $\thetaast(\overrightarrow \re, \vec w, \thetainit)$ may outperform the early-stopped $\thetaast(\overrightarrow{\re}_{-i}, \vec{w}_{-i}, \thetainit)$ in terms of $\mathrm{ASW}_{-i}$.
\end{itemize}

These heuristics provide a practical trade-off: Heuristic 1 offers maximum computational efficiency with relaxed theoretical guarantees, while Heuristic 2 preserves DSIC with moderate additional cost. 
From a theory perspective, we can derive the following result based on \cref{thm:nonnegative}.
\begin{restatable}{corollary}{thmrelationship}\label{thm:relationship}
    Given a continuous training rule $\thetaast \in \Psi^{SW}$, if the payment rule $p$ implements it in DSIC, IR and is always non-negative, then for any $\epsilon > 0$, there exists $i$, $\overrightarrow{\re}_{-i}$, $\vec w_{-i}$ and $\thetainit$, such that for all $\re_i$ and $w_i$, denote $\re = (\re_i, \overrightarrow{\re}_{-i})$ and $w = (w_i, \vec w_{-i})$, we have
    \begin{equation*}
        p^{AFF}_i(\re, w, \thetainit) \leq p_i(\re, w, \thetainit)\leq  p^{AFF}_i(\re, w, \thetainit) + \epsilon.
    \end{equation*}
\end{restatable}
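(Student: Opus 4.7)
The plan is to derive Corollary \ref{thm:relationship} as a direct consequence of Theorem \ref{thm:nonnegative}, which already treats the non-negativity hypothesis and the IR hypothesis separately. Since the corollary hypothesises that $p$ simultaneously satisfies DSIC, IR, and non-negativity, both conclusions of Theorem \ref{thm:nonnegative} apply to the same payment rule $p$, and the corollary merely chains them together.

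First, I would invoke the non-negativity clause of Theorem \ref{thm:nonnegative}. Because $\thetaast$ is a continuous \tr\ rule and $p$ is a non-negative payment rule implementing it in DSIC, this clause yields the global lower bound $p_i(\overrightarrow{\re}, \vec w, \thetainit) \geq p_i^{AFF}(\overrightarrow{\re}, \vec w, \thetainit)$ uniformly in $i$, $\overrightarrow{\re}$, $\vec w$, and $\thetainit$. In particular, this inequality remains valid for whatever specific $\overrightarrow \re_{-i}$, $\vec w_{-i}$, $\thetainit$ we choose later, and for all $\re_i$ and $w_i$.

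Next, I would apply the IR clause of Theorem \ref{thm:nonnegative} with the same rule $p$ and the given $\epsilon > 0$. This clause produces, for every $i$, a choice of $\overrightarrow \re_{-i}$, $\vec w_{-i}$, $\thetainit$ such that $p_i(\overrightarrow{\re}, \vec w, \thetainit) \leq p_i^{AFF}(\overrightarrow{\re}, \vec w, \thetainit) + \epsilon$ holds uniformly in $\re_i$ and $w_i$. Fixing this particular $(\overrightarrow \re_{-i}, \vec w_{-i}, \thetainit)$ and combining with the lower bound from the previous step gives the two-sided sandwich claimed in the corollary.

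There is no substantive obstacle here: the corollary is essentially a bookkeeping combination of the two halves of Theorem \ref{thm:nonnegative}, packaged to emphasise that under the joint hypotheses any DSIC-IR non-negative payment must be squeezed arbitrarily close to $p^{AFF}$ on a suitable sub-profile. The only point that requires a moment of care is to observe that the $(\overrightarrow \re_{-i}, \vec w_{-i}, \thetainit)$ furnished by the IR clause is also admissible for the non-negativity clause, which is immediate because the latter is a universal statement over all inputs. No additional properties of $\Psi^{SW}$, continuity, or payment equivalence need to be re-established, since they were already consumed inside Theorem \ref{thm:nonnegative}.
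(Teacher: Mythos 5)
Your proposal is correct and matches the paper's intent exactly: the paper presents \cref{thm:relationship} as a direct consequence of \cref{thm:nonnegative}, obtained by applying the non-negativity clause (a universal lower bound by $p^{AFF}_i$) together with the IR clause (existence, for the given $\epsilon$, of a sub-profile $(\overrightarrow \re_{-i}, \vec w_{-i}, \thetainit)$ on which $p_i \leq p^{AFF}_i + \epsilon$ uniformly in $(\re_i, w_i)$), just as you do. Your observation that the universally quantified lower bound remains valid on the sub-profile furnished by the IR clause is the only point of care, and you handle it correctly.
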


This indicates that any payment rule $p$ that satisfies all these properties must closely approximate $p^{AFF}$ in certain inputs. 
This somewhat showcases a \emph{tradeoff between theoretical guarantees and computational efficiency}. 
A more rigorous analysis of the efficiency loss caused by these heuristics or an ``impossibility theorem'' regarding efficient implementation is left for future work.

\section{Empirical Study}\label{sec:empirical}

In this section, we present an empirical evaluation of the proposed mechanism. 
Our objectives are twofold: first, to demonstrate that in practical LLM settings, agents can benefit from misreporting their preferences and distorting the learning outcomes; and second, to intuitively show how our mechanism incentivizes truthful reporting\footnote{The code for the simulation is available at \href{https://github.com/Haoran0301/RLHF-Game}{GitHub}.}. 

\paragraph{Models and Datasets.}
<<<<<<< HEAD
Our experimental setup follows the literature on Multiple-Objective RLHF~\cite{finegrained,Rewardedsoups,MOD}. We consider two tasks: the Helpful Assistants task~\cite{assistant} and the Reddit Summary task~\cite{datasummary}, using $\text{LLAMA2-7B}$~\cite{llama} as the base model for both.
For the Helpful Assistants task, the initial model $\llminit$ is obtained by supervised fine-tuning a $\text{LLAMA2-7B}$ model on the Anthropic-HH dataset~\cite{assistant}. We then apply two reward models during the RLHF process to measure harmlessness and humor, respectively. For the Reddit Summary task, the model is fine-tuned on the Summarize-from-Feedback dataset~\cite{datasummary}, with two reward models assessing the summary's quality and faithfulness.
=======
Our experimental setup follows the literature on Multiple-Objective RLHF~\cite{finegrained,Rewardedsoups,MOD}. We consider two tasks: the Helpful Assistants task~\cite{assistant} and the Reddit Summary task~\cite{datasummary}, using $\text{Llama-2 7b}$~\cite{llama} as the base model for both.
For the Helpful Assistants task, the initial model $\llminit$ is obtained by supervised fine-tuning a $\text{Llama-2 7b}$ model on the Anthropic-HH dataset~\cite{assistant}. We then apply two reward models during the RLHF process to measure harmlessness and humor, respectively. For the Reddit Summary task, the model is fine-tuned on the Summarize-from-Feedback dataset~\cite{datasummary}, with two reward models assessing the summary's quality and faithfulness. 
>>>>>>> d4b9e74973cd23ef134866d7ed8a6ab36ecd3bbc

We formulate these tasks as two mechanism design scenarios: the ``Harmless vs. Humor'' game for the Helpful Assistants task, and the ``Faithful vs. Summary'' game for the Reddit Summary task. In each game, we assume that there are two groups whose joint preferences are captured by a reward model. For example, in ``Harmless vs. Humor, '' group 1 prioritizes harmlessness, while group 2 values humor. The corresponding reward models for these preferences are denoted as $\re_1$ (harmlessness) and $\re_2$ (humor), with synthetic group size vectors $(w_1, w_2)$ selected from $\{(3, 7), (5, 5), (7, 3)\}$, varying across different settings.

\paragraph{Implementation Details.}
We implement the basic training rule from \cref{def:affine}, using the KL-divergence as the distance measure $f$. 
To balance model optimality with training cost, we simplify the problem by replacing the entire parameter space $\Theta$ with a representative finite set $\Theta'$. 
Models are first trained using single reward models and then combined via the Rewarded Soups technique~\cite{Rewardedsoups} to produce a set of hybrid models, $\{\theta_1, \theta_2, \ldots, \theta_K\}$, which constitute $\Theta'$. 
Optimality is then defined over this finite set. 
As shown in Rame et al.~\cite{Rewardedsoups}, this approach reduces training costs while maintaining performance comparable to full multi-objective fine-tuning.

Given the large space of potential misreporting strategies, we focus on two simple ones:
\begin{itemize}[leftmargin=1.8em]
    \item \textbf{Strategy (1)}: $\widetilde{\re}_i = \re_i$ and $\tilde{w}_i = \alpha w_i$ \\
    \textit{Naïve overstatement}: Exaggerating group size to gain more influence, requiring no knowledge of other groups.
    
    \item \textbf{Strategy (2)}: $\widetilde{\re}_i = \beta \re_i + (1 - \beta) \re_{-i}$ and $\tilde{w}_i = w_i$ \\
    \textit{Strategic manipulation}: Leveraging other groups' preferences to downplay opposing outcomes, requiring some information about conflicts.
\end{itemize}

Intuitively, $\alpha=1$ and $\beta=1$ represent truthful reporting. Increasing $\alpha$ or $\beta$ allows a group to gain more influence in the training process. Our experiments confirm that both strategies can be profitable. However, the DSIC of our mechanism ensures that truthful reporting yields higher utility than any misreporting strategy.

\paragraph{Result Analysis.}
Since the outputs of different reward models have varying scales, we normalize all reward values to $[0, 1]$, where the maximum and minimum values are 1 and 0, respectively. We then report the normalized valuations, payments, and utilities of group $i$ for different reporting strategies in Figure~\ref{fig:empirical_result}. Each column represents a specific RLHF Game with a given group size $(w_1, w_2)$.

As shown in the figure, increasing $\alpha$ or $\beta$ leads to a higher valuation for the group, confirming that groups can benefit from simple misreporting in the absence of payments. However, when the payment $p^{AFF}$ is applied, it increases with $\alpha$ or $\beta$, offsetting the gains in valuation. This ensures that truthful reporting ($\alpha=1$, $\beta=1$) maximizes utility in all cases. Additional simulation settings are provided in Appendix~\ref{app:additional_experiments}.

\section{Conclusion and Future Work}\label{sec:conclusion}

This paper studies incentive issues in a potential economic scenario where a platform offers LLM fine-tuning services to aggregate preferences and agents strategically report to get a preferred outcome. 
We focus on aggregation objectives that maximize social welfare subject to regularization constraints, referred to as SW-Max rules.
Through a comprehensive analysis of strategic reporting, we demonstrate the critical role of payment schemes in incentivizing truthful reporting under SW-Max rules. We derive sufficient conditions for payment equivalence and identify necessary conditions for implementing SW-Max rules within additional constraints. 
Moreover, we analyze how perturbed input will influence the mechanism to account for practical errors that inevitably arise and show that the mechanism satisfies approximate DSIC.
Finally, we conduct experiments within real-world LLM setups, showcasing how the proposed mechanism effectively incentivizes truthful reporting. 

Building on our proposed scenario and formulated model, we identify several promising directions for future research from both theoretical and empirical perspectives. First, exploring and modeling more general training rules could enhance our understanding of the framework. As noted in Appendix E, cycle monotonicity is a necessary and sufficient condition for implementability, but its validation is complex. Identifying a simpler condition to ensure implementability and investigating properties like payment equivalence for these rules are critical next steps. Second, studying preference aggregation across multiple models, particularly with diversity considerations, is a valuable direction. Third, as discussed in Section~\ref{sec:incentives:efficient}, developing mechanisms or criteria that balance computational efficiency and incentive compatibility in the RLHF Game could improve its real-world applicability. Finally, applying mechanism design theory to other large language model contexts, such as API pricing, retrieval-augmented generation (RAG), and prompt engineering, offers significant opportunities for further exploration.

\section*{Acknowledgments}
This work is supported by the National Natural Science Foundation of China (Grant No. 62572010 and No. 62506010). 
We thank all anonymous reviewers for their helpful feedback. 

\bibliographystyle{plainnat}
\bibliography{_reference}

\clearpage
\appendix

\section*{Limitation}\label{sec:limitation}
The main limitation of this paper is that we mainly consider the \tr\ rules and their theoretical properties. Further study could consider more training rules and extend our model to the DPO scenario, in which each group only provides pairs of data rather than a reward model. 

\section{Further Related Work}\label{sec:related}

In this section, we review relevant research across various domains that are related to our paper, including works on RLHF with multiple reward models, multi-parameter auctions, and the intersection of game theory and LLMs.

\subsection{RLHF with Multiple Reward Models}
Research involving multiple reward models primarily focuses on developing algorithms to enhance practical performance. Some studies design methods simultaneously satisfying multiple preferences~\cite{WARM, finegrained, Personalizedsoups, MaxMinRLHF, MOD, RiC, Rewardedsoups}. They develop more efficient algorithms to extend the Pareto frontier among different objectives~\cite{Rewardedsoups, Personalizedsoups, MOD, RiC} and balance issues from various perspectives~\cite{park2024principled, MaxMinRLHF, WARM}. 

Additionally, there is a body of work that trains multiple models for a single preference and then ensembles them to improve the robustness of RLHF~\cite{ensembles, ensemble1}, mitigate the influence of incorrect and ambiguous preferences in the dataset~\cite{multiple1}, and reduce reward hacking~\cite{hacking}. Unlike these approaches, our work considers how to collect misaligned preferences truthfully from different agents.
As we have mentioned, these works are often assumed to be accessible to humans' actual preferences, neglecting the incentive issue for motivating rational agents to report truthfully.

\subsection{Multi-parameter Auctions}
Several studies have explored the properties relevant to our paper in various multi-parameter auction scenarios, such as implementability~\cite{rochet1987necessary, miyake1998incentive, conitzer2004self, saks2005weak, bikhchandani2006weak, ashlagi2010monotonicity} and payment equivalence~\cite{ivanova2008revenue, heydenreich2009characterization, bergemann2010dynamic, pavan2014dynamic}.
Another central topic in auction theory is to design mechanisms that satisfy DSIC and IR while maximizing the expected revenue for the auctioneer. 
Although the single-parameter scenario has been resolved by Myerson~\cite{myerson1981optimal}, the optimal auction design for multi-parameter settings remains an open question.
Therefore, there is a stream of research focusing on a specific subset, affine maximizer auctions, which inherently satisfy DSIC and IR~\cite{VVCA, AMA, likhodedov2004methods, briest2010pricing, tang2012mixed, jehiel2007mixed}, and proposing optimizations to enhance empirical performance~\cite{curry2022differentiable, duan2024scalable, duan2024scalable1}. 
Compared to these works, we are the first to discuss the property of payment equivalence and the revenue-maximizing solution for \tr\ rules in the scenario of fine-tuning LLMs.

\subsection{Game Theory and LLMs}
In addition to the work we review in the primary related work, others have explored the intersection of game theory and large language models from different perspectives.
A line of work studies other LLM-related scenarios from the algorithmic game theory perspective. 
Laufer et al.~\cite{finetuninggame} abstracted the fine-tuning process as a bargaining game and characterized the perfect sub-game equilibria. 
Dubey et al.~\cite{summary} proposed an auction where bidders compete to place their content within a summary generated by an LLM. 
Conitzer et al.~\cite{conitzer2024social} considered incorporating social choice theory in LLM alignment. 
Feizi et al.~\cite{feizi2023online} explored the potential for leveraging LLMs in online advertising systems. 

More broadly, some research has proposed algorithms for training LLMs inspired by concepts in game theory, such as Nash learning from human feedback~\cite{munos2023nash}, consensus game~\cite{jacob2023consensus}, direct Nash optimization~\cite{rosset2024direct}, and Gemp et al.~\cite{gemp2024states}. 
And various studies assess LLMs from a game-theoretical perspective, examining aspects such as rationality~\cite{chen2023emergence, fan2023can}, behavior in matrix games~\cite{akata2023playing, gandhi2023strategic, lore2023strategic}, and performance in strategic games like auctions~\cite{guo2023large, guo2024economics}, Werewolf~\cite{werewolf1, werewolf2}, Avalon~\cite{wang2023avalon}, Diplomacy~\cite{Welfarediplomacy, meta2022human}, card game~\cite{feng2024chessgpt} and electronic game~\cite{LLMcoordination, starcraft, shao2024swarmbrain}. There are also comprehensive surveys~\cite{zhang2024llm, gallotta2024large, guo2024large}.

\section{Omitted Proofs in Section~\ref{sec:incentives:necessity}}\label{sec:app1}
\thmdominated*

\begin{proof}
    If $s_i = 1$, the group gets the same utility from all training outcomes. Therefore, any strategy is optimal. 
    We then analyze the case $s_i \geq 2$ and $\underline{\re_i} > 0$ in the following. 
    First, the optimization of $\thetaast$ can be written as an equivalent constraint programming problem on the $\llm$:
    \begin{align*}
        \arg \max_{\llm} \quad \sum_{i=1}^n w_i v_i(\theta;\re_i) & - \sum_{\bmx \in T^*} \llminit(\bmx) f\left(\frac{\llm(\bmx)}{\llminit(\bmx)}\right)\\
        \text{s.t.} \quad \sum_{\bmx \in T^*} \llm(\bmx) &= 1 \\
        \llm(\bmx) &\geq 0 \quad \forall \bmx \in T^*
    \end{align*}
    Because of the assumption that the optimal policy satisfies $\llm(\bmx) > 0$ for all $\bmx \in T^*$, we can infer that the condition $\llm(\bmx) \geq 0$, $\forall \bmx \in T^*$ is not active for the optimal solution.
    Since the convexity of the function $f$, by KKT condition, the necessary condition for the optimal $\theta^*$ is that there exists $\mu \in \mathbb R$~\cite{luenberger1984linear}, such that
    \begin{equation*}
        \sum_{i=1}^n w_i \frac{\partial v_i}{\partial \llm(\bmx)} - f'\left(\frac{\llm(\bmx)}{\llminit(\bmx)}\right)  = \mu \quad \forall \bmx \in T^*, \quad \sum_{\bmx \in T^*} \llm(\bmx) = 1.
    \end{equation*}
    Under \cref{ass}, $\frac{\partial v_i}{\partial \llm(\bmx)} = \re_i(\bmx)$, so we have 
    \begin{equation}\tag{OPT}\label{eqn:OPT}
        \sum_{i=1}^n w_i \re_i(\bmx) - f'\left(\frac{\llm(\bmx)}{\llminit(\bmx)}\right) = \mu \quad \forall \bmx \in T^*.
    \end{equation}
    We mainly discuss the strategies other than simply over-reporting the group size $\vec w$. 
    We omit the notation $\vec w$ for simplicity. 
    
    Next, our main technique is to construct a report reward model $\re'_i \neq \re_i$ for group $i$ such that $v_i(\thetaast((\re'_i, \overrightarrow \re_{-i}), \thetainit); \re_i) > v_i(\thetaast((\re_i, \overrightarrow \re), \thetainit); \re_i)$ holds for all $\overrightarrow \re_{-i}$ and $\thetainit$.

    \textbf{The Summation Normalization Case.}
    We first discuss the case of the reward model being normalized by summation. 
    We take the $\bmx_1 \in \arg \max_{\bmx \in T^*} \re_i(\bmx), \bmx_2 \in \arg \min_{\bmx \in T^*} \re_i(\bmx)$. 
    Since $\min_{\bmx \in T^*} \re_i(\bmx) > 0$, we have $\re_i(\bmx_1) < 1$ and $\re_i(\bmx_2) > 0$.
    Then we take a small $\epsilon < \min\{1-\re_i(\bmx_1), \re_i(\bmx_2)\}$ and define $\re'_i$ as:
    \begin{align*}
        \re'_i(\bmx) =
        \begin{cases}
        \re_i(\bmx) + \epsilon, & \bmx=\bmx_1,\\
        \re_i(\bmx) - \epsilon, & \bmx=\bmx_2 \\
        \re_i(\bmx), & \bmx\neq \bmx_1, \bmx\neq \bmx_2.
        \end{cases}
    \end{align*}
    Intuitively, by reporting $\re'_i$, group $i$ pretends to value more for the most preferred $\bmx$ and less for the least preferred $\bmx$. Let $\theta = \thetaast((\re_i, \overrightarrow \re_{-i}), \thetainit)$ and $\theta' = \thetaast((\re'_i, \overrightarrow \re_{-i}), \thetainit)$, we use $\mu$ and $\mu'$ to denote the variable in the necessary condition for $\llm$ and $\llmprime$, and we can derive the following results.

    (a) $\llmprime(\bmx_1) > \llm(\bmx_1)$ and $\llmprime(\bmx_2) < \llm(\bmx_2)$. We prove the former by contradiction: if $\llmprime(\bmx_1) \leq \llm(\bmx_1)$, then by the convexity of $f$, we have 
    \[
    f'\left(\frac{\llmprime(\bmx_1)}{\llminit(\bmx)}\right) \leq f'\left(\frac{\llm(\bmx_1)}{\llminit(\bmx)}\right).
    \]
    With $\re'_i(\bmx_1) > \re_i(\bmx_1)$, we can infer that $\mu' > \mu$. However, since for all $\bmx \neq \bmx_1$, we have $\re'_i(\bmx) \leq \re_i(\bmx)$, to satisfy the optimal condition in (OPT), there must be for all $\bmx \neq \bmx_1$, 
    \[
    f'\left(\frac{\llmprime(\bmx)}{\llminit(\bmx)}\right) < f'\left(\frac{\llm(\bmx)}{\llminit(\bmx)}\right).
    \]
    Which is equivalent to $\llmprime(\bmx) < \llm(\bmx)$, and hence results in $\sum_{\bmx \in T^*} \llmprime(\bmx) < \sum_{\bmx \in T^*} \llm(\bmx) = 1$.
    The latter, $\llmprime(\bmx_2) < \llm(\bmx_2)$, can be proved by totally same method.
    
    (b) The order of $\llm(\bmx)$ and $\llmprime(\bmx)$ for all $\bmx \notin \{\bmx_1, \bmx_2\}$ is consistent. Without loss of generality, we assume there is $\bmx_3 \notin \{\bmx_1, \bmx_2\}$ such that $\llmprime(\bmx_3) \geq \llm(\bmx_3)$. Then we have 
    \[
    f'\left(\frac{\llmprime(\bmx_3)}{\llminit(\bmx)}\right) \geq f'\left(\frac{\llm(\bmx_3)}{\llminit(\bmx)}\right).
    \]
    Then, we can infer that $\mu' \leq \mu$. For all $\bmx \notin \{\bmx_1, \bmx_2\}$, to satisfy \cref{eqn:OPT}, there must be 
    \[
    f'\left(\frac{\llmprime(\bmx)}{\llminit(\bmx)}\right) \geq f'\left(\frac{\llm(\bmx)}{\llminit(\bmx)}\right).
    \]
    which is equivalent to $\llmprime(\bmx) \geq \llm(\bmx)$.
    Similarly, if there is $\bmx_3 \notin \{\bmx_1, \bmx_2\}$ such that $\llmprime(\bmx_3) \leq \llm(\bmx_3)$, then for all $\bmx\notin \{\bmx_1, \bmx_2\}$, there is $\llmprime(\bmx) \leq \llm(\bmx)$.

    Finally, with the results in (a) and (b), when $\llmprime(\bmx) \leq \llm(\bmx)$ for all $\bmx \notin \{\bmx_1, \bmx_2\}$, the change in the utility of group $i$ can be calculated by
    \begin{align*}
        &\sum_{\bmx \in T^*}\left(\llmprime(\bmx) - \llm(\bmx) \right)\re_i(\bmx)\\
        =&\sum_{\bmx\neq \bmx_1, \bmx \in T^*} \left(\llmprime(\bmx) - \llm(\bmx) \right) \re_i(\bmx) + \left( \llmprime(\bmx_1) - \llm(\bmx_1) \right)\re_i(\bmx_1) \\
        =& - \sum_{\bmx\neq \bmx_1, \bmx \in T^*} \left(\llm(\bmx) - \llmprime(\bmx)\right) \re_i(\bmx) + \left( \llmprime(\bmx_1) - \llm(\bmx_1) \right)\re_i(\bmx_1) \\
        \overset{(2)}{\geq}& - \sum_{\bmx\neq \bmx_1, \bmx \in T^*} \left(\llm(\bmx) - \llmprime(\bmx)\right) \re_i(\bmx_1) + \left( \llmprime(\bmx_1) - \llm(\bmx_1) \right)\re_i(\bmx_1) \\
        =& \re_i(\bmx_1) \left(\llmprime(\bmx_1) - \llm(\bmx_1) - \sum_{\bmx\neq \bmx_1, \bmx \in T^*} \left(\llm(\bmx) - \llmprime(\bmx)\right) \right) \\
        =&  \re_i(\bmx_1) \sum_{\bmx \in T^*} \left(\llmprime(\bmx) - \llm(\bmx)\right) = 0.
    \end{align*}
    When $\llmprime(\bmx) \geq \llm(\bmx)$ for all $\bmx \neq \bmx_1, \bmx_2$, the change in the utility of group $i$ can be calculated by
    \begin{align*}
        &\sum_{\bmx \in T^*}\left(\llmprime(\bmx) - \llm(\bmx) \right)\re_i(\bmx) \\
        =&\sum_{\bmx\neq \bmx_2, \bmx \in T^*} \left(\llmprime(\bmx) - \llm(\bmx) \right) \re_i(\bmx) + \left( \llmprime(\bmx_2) - \llm(\bmx_2) \right)\re_i(\bmx_2) \\
        =&\sum_{\bmx\neq \bmx_2, \bmx \in T^*} \left(\llmprime(\bmx) - \llm(\bmx)\right) \re_i(\bmx) - \left( \llm(\bmx_2) - \llmprime(\bmx_2) \right)\re_i(\bmx_2) \\
        \overset{(3)}{\geq}&\sum_{\bmx\neq \bmx_2, \bmx \in T^*} \left(\llmprime(\bmx) - \llm(\bmx)\right) \re_i(\bmx_2) - \left( \llm(\bmx_2) - \llmprime(\bmx_2) \right)\re_i(\bmx_2) \\
        =& \re_i(\bmx_2) \left(\sum_{\bmx\neq \bmx_2, \bmx \in T^*} \left(\llmprime(\bmx) - \llm(\bmx)\right) - \left( \llm(\bmx_2) - \llmprime(\bmx_2) \right) \right) \\
        =&  \re_i(\bmx_2) \sum_{\bmx \in T^*} \left(\llmprime(\bmx) - \llm(\bmx)\right) = 0.
    \end{align*}
    Note that both (2) and (3) are because of $\re_i(\bmx_1) \geq \re_i(\bmx_2)$. And unless $\re_i(\bmx_1) = \re_i(\bmx_2)$, which is excluded by $s_i \geq 2$, the ``$>$''s are hold.

    \textbf{The Maximum Normalization Case.}
    The case of the reward model being normalized by the maximum is similar. We take the $\bmx_1 \in \arg \min_{\bmx \in T^*} \re_i(\bmx)$. 
    Since $\min_{\bmx \in T^*} \re_i(\bmx) > 0$, we have $\re_i(\bmx_1) > 0$.
    Then we take a small $\epsilon < \re_i(\bmx_1)$ and define $\re'_i$ as:
    \begin{align*}
        \re'_i(\bmx) =
        \begin{cases}
        \re_i(\bmx) - \epsilon, & \bmx=\bmx_1, \\
        \re_i(\bmx), & \bmx\neq \bmx_1.
        \end{cases}
    \end{align*}
    With the same technique, we first show that $\llmprime(\bmx_1) < \llm(\bmx_1)$ and $\llmprime(\bmx) > \llm(\bmx)$ for all $\bmx \neq \bmx_1$. After that, it is easy to derive that when $s_i \geq 2$, the change in the utility of group $i$ satisfies
    \begin{equation*}
        \sum_{\bmx \in T^*}\left(\llmprime(\bmx) - \llm(\bmx) \right)\re_i(\bmx) > 0. 
    \end{equation*}
\end{proof}

\begin{lemma}\label{lemma:sufficient_for_continuous}
    When the training rule $\thetaast \in \Psi^{SW}$, and the divergence function $f$ is $\alpha$-strongly convex and $C^2$-smooth, then $\thetaast$ satisfies \cref{ass:continue}. 
\end{lemma}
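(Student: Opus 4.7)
The plan is to leverage the KKT characterization of the optimum derived in the proof of \cref{thm:dominated} and reduce continuity to controlling the Lagrange multiplier under perturbation. For a SW-Max training rule with regularizer $f$, the equation~\eqref{eqn:OPT} gives that the optimum $\theta = \thetaast(\overrightarrow \re, \vec w, \thetainit)$ satisfies
\begin{equation*}
    f'\!\left(\tfrac{\llm(\bmx)}{\llminit(\bmx)}\right) = R(\bmx) - \mu, \qquad R(\bmx) := \sum_{i=1}^n w_i \re_i(\bmx),
\end{equation*}
for some scalar $\mu$, together with $\sum_{\bmx} \llm(\bmx) = 1$. Writing $q(\bmx) := \llm(\bmx)/\llminit(\bmx)$, strong convexity of $f$ means $f'$ is strictly increasing with a $(1/\alpha)$-Lipschitz inverse on its range, so $q(\bmx) = (f')^{-1}(R(\bmx) - \mu)$. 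The same holds for the perturbed instance $(\overrightarrow\re', \vec w')$ with multiplier $\mu'$ and sums $R'(\bmx)$. By hypothesis $\|R - R'\|_\infty \leq \delta$, so all that remains is to bound $|\mu - \mu'|$ and assemble the pieces.

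The key step will be to show $|\mu - \mu'| \leq \delta$ by a monotone-comparison argument. Suppose toward contradiction that $\mu' - \mu > \delta$; then for every $\bmx$ we have $R'(\bmx) - \mu' < R(\bmx) - \mu$, hence strict monotonicity of $(f')^{-1}$ forces $q'(\bmx) < q(\bmx)$ for all $\bmx$. Weighting by $\llminit(\bmx) > 0$ and summing would give $1 = \sum_{\bmx} \llminit(\bmx) q'(\bmx) < \sum_{\bmx} \llminit(\bmx) q(\bmx) = 1$, a contradiction; the symmetric case is identical. This is the substantive step, and it is where I expect the main difficulty — one has to confirm that both the unperturbed and perturbed KKT systems admit common-form solutions with a well-defined $(f')^{-1}$, which is justified by the interiority assumption $\llm(\bmx)>0$ used throughout Section~\ref{sec:incentives:necessity}.

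Given $|\mu - \mu'| \leq \delta$, the Lipschitz bound on $(f')^{-1}$ yields
\begin{equation*}
    |q(\bmx) - q'(\bmx)| \leq \tfrac{1}{\alpha}\,\bigl|(R(\bmx) - R'(\bmx)) - (\mu - \mu')\bigr| \leq \tfrac{2\delta}{\alpha}.
\end{equation*}
Multiplying by $\llminit(\bmx) \leq 1$ gives $|\llm(\bmx) - \llmprime(\bmx)| \leq 2\delta/\alpha$ uniformly in $\bmx$. Choosing $\delta = \alpha\epsilon/2$ then satisfies \cref{ass:continue}. Note that $C^2$-smoothness is not directly needed for this lemma — strong convexity alone supplies the strict monotonicity and Lipschitz inverse — but it is consistent with the hypotheses used in \cref{thm:thmhigher} and ensures $f''$ is well-behaved on the bounded range of $q$, which is the image of the finite set $T^\ast$ under a bounded ratio since $\llminit(\bmx)$ is bounded away from zero.
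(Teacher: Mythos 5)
Your proposal is correct and follows essentially the same route as the paper's proof: use the KKT/Lagrangian characterization of the optimum, bound $|\mu-\mu'|$ by the sup-norm perturbation via a contradiction with the normalization $\sum_{\bmx}\llm(\bmx)=1$, and then convert the resulting bound on $f'$-values into a bound on the probabilities using $\alpha$-strong convexity (your Lipschitz-inverse phrasing is the same estimate the paper writes as $|\llm(\bmx)-\llmprime(\bmx)|\leq \frac{\llminit(\bmx)}{\alpha}|f'(\cdot)-f'(\cdot)|$). Your side remark that $C^2$-smoothness is not strictly needed for this particular lemma is a fair minor refinement but does not change the argument.
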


\begin{proof}
    As is shown in the proof of~\cref{thm:dominated}, we have two Lagrangian variables $\mu$ and $\mu'$ for $(\overrightarrow \re, \vec w)$ and $(\overrightarrow \re, \vec w)$, respectively.
    We have the following equations:
    \begin{equation*}
        \begin{aligned}
            \sum_{i=1}^n w_i \re_i(\bmx) - f'\left(\frac{\llm(\bmx)}{\llminit(\bmx)} \right) &= \mu, \quad \forall \bmx \in T^*.\\
            \sum_{i=1}^n w'_i \re'_i(\bmx) - f'\left(\frac{\llmprime(\bmx)}{\llminit(\bmx)} \right) &= \mu', \quad \forall \bmx \in T^*.
        \end{aligned}
    \end{equation*}
    Firstly, we have $|\mu' - \mu| \leq \max_{\bmx \in T^*}|\sum_{i=1}^n w_i \re_i(\bmx) - \sum_{i=1}^n w'_i \re'_i(\bmx)|$. Otherwise, without loss of generality, assume that $\mu' - \mu > \max_{\bmx \in T^*} |\sum_{i=1}^n w_i \re_i(\bmx) - \sum_{i=1}^n w'_i \re'_i(\bmx)|$, then we can derive that $\forall \bmx \in T^*$, 
    \[
    f'\left(\frac{\llm(\bmx)}{\llminit(\bmx)} \right) < f'\left(\frac{\llmprime(\bmx)}{\llminit(\bmx)}\right).
    \]
    This means that $\llm(\bmx) < \llmprime(\bmx)$ for all $\bmx$, which leads the contradiction.
    Therefore, we have for all $\bmx \in T^*$
    \begin{equation*}
        \begin{aligned}
            \left|f'\left(\frac{\llm(\bmx)}{\llminit(\bmx)} \right) - f'\left(\frac{\llmprime(\bmx)}{\llminit(\bmx)}\right)\right| 
            &= \left|\sum_{i=1}^n w_i \re_i(\bmx) - \sum_{i=1}^n w'_i \re'_i(\bmx) + \mu' - \mu\right| \\
            &\leq 2 \left|\sum_{i=1}^n w_i \re_i(\bmx) - \sum_{i=1}^n w'_i \re'_i(\bmx) \right|.
        \end{aligned}
    \end{equation*}

    By $C^2$-smoothness of $f$ and the $\alpha$-strongly convexity, we have for all $\bmx \in T^*$
    \begin{equation*}
        \begin{aligned}
            \left|\llm(\bmx) - \llmprime(\bmx) \right| 
            &\leq \frac{\llminit(\bmx)}{\alpha}\left|f'\left(\frac{\llm(\bmx)}{\llminit(\bmx)} \right) - f'\left(\frac{\llmprime(\bmx)}{\llminit(\bmx)}\right)\right| \\
            &\leq \frac{2\llminit(\bmx)}{\alpha} \left|\sum_{i=1}^n w_i \re_i(\bmx) - \sum_{i=1}^n w'_i \re'_i(\bmx) \right|.
        \end{aligned}
    \end{equation*}
    Therefore, for any $\epsilon > 0$, if $\left|\sum_{i=1}^n w_i \re_i(\bmx) - \sum_{i=1}^n w'_i \re'_i(\bmx) \right| < \frac{\alpha \epsilon}{2}$, then $\left|\llm(\bmx) - \llmprime(\bmx) \right| \leq \epsilon$.
\end{proof}

\begin{theorem}[Detailed version of \cref{thm:thmhigher}]\label{thm:thmhigher_detailed}
    In the RLHF Game with mechanism $(\thetaast, p)$ that $\thetaast \in \Psi^{SW}$ and $p \equiv 0$, when $f$ is $\alpha$-strongly convex and $C^2$-smooth, suppose group $i$ has preference $\re_i$ and group size $w_i$, other groups report $(\overrightarrow \re_{-i}, \vec w_{-i})$ and the initial model $\thetainit$, we define 
    \[
    t(\bmz) := \sum_{\bmx \in T^*} \frac{(\re_i(\bmz) - \re_i(\bmx)) \llminit(\bmx)}{f''\left(\frac{\llm(\bmx)}{\llminit(\bmx)} \right)}, 
    \]
    in which $\theta = \psi(\overrightarrow \re, \vec w, \thetainit)$. When $s_i \geq 2$ and $\underline{\re_i} = 0$: 
    \begin{enumerate}
        \item For the maximum normalization case, if there exist $\bmx_1 \in T^*$, $t(\bmx_1) \neq 0$ and $0 < \re_i(\bmx_1) < 1$, truthful reporting is not the optimal strategy. 
        \item For the summation normalization case, if there exist $\bmx_1 \in T^*$, $t(\bmx_1) < 0$ and $0 < \re_i(\bmx_1) < 1$, truthful reporting is not the optimal strategy. 
        \item For the summation normalization case, if there exist $\bmx_1 \in T^*$, $t(\bmx_1) > 0$ and we can also find $\bmx_2 \in T^*$, such that $1 > \re_i(\bmx_1) \geq \re_i(\bmx_2) > 0$ and $\frac{1}{\llminit(\bmx_1)} f''\left(\frac{\llm(\bmx_1)}{\llminit(\bmx_1)}\right) < \frac{1}{\llminit(\bmx_2)} f''\left(\frac{\llm(\bmx_2)}{\llminit(\bmx_2)}\right)$, truthful reporting is not the optimal strategy. 
    \end{enumerate}
\end{theorem}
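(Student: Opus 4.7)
The plan is to compute the first-order change in group $i$'s valuation under an infinitesimal perturbation of its reported reward model, using implicit differentiation of the KKT condition from the proof of \cref{thm:dominated}, and then, for each of the three sub-cases, exhibit a feasible perturbation direction $\delta_i$ along which the derivative is strictly positive, which immediately refutes truthful reporting.

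First, I would fix $\overrightarrow{\re}_{-i}, \vec w_{-i}$, and $\thetainit$, write $\re'_i = \re_i + \epsilon\,\delta_i$, and differentiate the stationarity condition $\sum_{j} w_j \re_j(\bmx) - f'(\llm(\bmx)/\llminit(\bmx)) = \mu$ together with $\sum_{\bmx} \llm(\bmx) = 1$ at $\epsilon = 0$. The $C^2$-smoothness and $\alpha$-strong convexity of $f$ give $f'' > 0$ everywhere and invertibility of the bordered Hessian on the simplex, so the implicit function theorem applies, yielding
\[
\partial \llm(\bmx) \;=\; g(\bmx)\bigl(w_i\,\delta_i(\bmx) - \partial \mu\bigr), \qquad g(\bmx) := \frac{\llminit(\bmx)}{f''(\llm(\bmx)/\llminit(\bmx))} > 0,
\]
with $\partial \mu$ fixed by $\sum_{\bmx}\partial \llm(\bmx) = 0$. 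Substituting into $\partial v_i = \sum_{\bmz} \re_i(\bmz)\,\partial \llm(\bmz)$ and simplifying gives, up to a positive factor, $\partial v_i \,\propto\, \sum_{\bmz} g(\bmz)\,\delta_i(\bmz)\,t(\bmz)$ with $t$ exactly the statistic from the theorem. Two useful consequences: writing $\bar{\re}_i := \sum_{\bmx} g(\bmx) \re_i(\bmx)/\sum_{\bmx} g(\bmx)$, one has $t(\bmz) = (\sum_\bmx g(\bmx))(\re_i(\bmz) - \bar{\re}_i)$, so $\sum_{\bmz} g(\bmz) t(\bmz) = 0$, and, since $s_i \geq 2$ forces $\re_i$ non-constant, $t$ necessarily takes values of both signs.

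Next, I would construct a feasible $\delta_i$ case by case. For case (1), the maximum-normalization setting: since $0 < \re_i(\bmx_1) < 1$, the argmax of $\re_i$ is attained at some point $\neq \bmx_1$, so the direction $\delta_i = \mathrm{sign}(t(\bmx_1))\cdot\mathds{1}_{\{\bmx = \bmx_1\}}$ keeps $\max \re'_i = 1$ for small $\epsilon$ and gives $\partial v_i \propto g(\bmx_1)|t(\bmx_1)| > 0$. For case (2), with $t(\bmx_1) < 0$ and $0 < \re_i(\bmx_1) < 1$: pick any $\bmx_0$ with $t(\bmx_0) > 0$ (existing by the sign observation), and set $\delta_i(\bmx_1) = -1,\ \delta_i(\bmx_0) = +1$; this preserves $\sum_{\bmx}\re'_i(\bmx) = 1$ and keeps $\re'_i \geq 0$ for small $\epsilon$, while $\partial v_i \propto -g(\bmx_1)t(\bmx_1) + g(\bmx_0)t(\bmx_0) > 0$. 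For case (3), set $\delta_i(\bmx_1) = +1,\ \delta_i(\bmx_2) = -1$; feasibility in the simplex follows from $\re_i(\bmx_1) < 1$ and $\re_i(\bmx_2) > 0$, and one verifies $g(\bmx_1) t(\bmx_1) - g(\bmx_2) t(\bmx_2) > 0$ by a short split on the sign of $t(\bmx_2)$: if $t(\bmx_2) \leq 0$ both terms combine favorably, whereas if $0 < t(\bmx_2) \leq t(\bmx_1)$ (using $\re_i(\bmx_1) \geq \re_i(\bmx_2)$) the strict inequality $g(\bmx_1) > g(\bmx_2)$ gives the conclusion.

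The main obstacle is making the implicit differentiation step fully rigorous, namely showing that the unique optimizer $\llm(\bmx)$ of the strictly concave objective depends $C^1$-smoothly on $\re_i$ on a neighborhood of the reported point, and that the system of linearized KKT equations is solvable for $\partial \llm$ and $\partial \mu$; this is precisely where the $\alpha$-strong convexity and $C^2$-smoothness of $f$ are used. Once this is in hand, the three case constructions are short, and the conditions $0 < \re_i(\bmx_1) < 1$ (and analogously $\re_i(\bmx_2) > 0$ in case (3)) are exactly what is needed to guarantee that a genuine first-order interior perturbation exists inside $\mathcal{R}$.
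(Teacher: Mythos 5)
Your proposal is correct, and it reaches the same conclusion as the paper by a cleaner analytic route. The paper works with a \emph{finite} perturbation (the same directions you use: a single-coordinate bump at $\bmx_1$ under maximum normalization, and a transfer between two coordinates under summation normalization), then controls the resulting change in $\llm$ through the shift $\mu'-\mu$ of the Lagrange multiplier, the mean value theorem applied to $f'$, and the continuity estimate of Lemma~\ref{lemma:sufficient_for_continuous}, with explicit $\delta$-bookkeeping to show the leading term is within $t(\bmx_1)/2$ of its limit. You instead differentiate the KKT system exactly, obtaining $\partial\llm(\bmx)=g(\bmx)\bigl(w_i\delta_i(\bmx)-\partial\mu\bigr)$ with $g(\bmx)=\llminit(\bmx)/f''\!\left(\llm(\bmx)/\llminit(\bmx)\right)$ and hence $\partial v_i\propto\sum_{\bmz}g(\bmz)\delta_i(\bmz)t(\bmz)$, which makes the role of $t$ transparent and reduces each case to signing a two-term expression; your observation that $t(\bmz)$ is a positive multiple of $\re_i(\bmz)-\bar{\re}_i$ (so $\sum_{\bmz}g(\bmz)t(\bmz)=0$ and $t$ changes sign when $s_i\geq 2$) also lets you choose any $\bmx_0$ with $t(\bmx_0)>0$ in case (2), where the paper fixes $\bmx_0=\arg\max_{\bmx}\re_i(\bmx)$ and splits on the sign of $\mu'-\mu$; your case (3) split on the sign of $t(\bmx_2)$, combined with $g(\bmx_1)>g(\bmx_2)$ and $t(\bmx_1)\geq t(\bmx_2)$, matches the paper's use of the same hypothesis. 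What your route buys is brevity and an exact gradient formula; what it requires in exchange is the deferred step you flag: a rigorous implicit-function-theorem argument that the optimizer depends $C^1$ on the reported reward model. That step is genuinely standard here --- the bordered system with diagonal block $\mathrm{diag}\bigl(f''(\llm(\bmx)/\llminit(\bmx))/\llminit(\bmx)\bigr)\succ 0$ is nonsingular by $\alpha$-strong convexity, and the paper's standing assumption $\llm(\bmx)>0$ keeps the nonnegativity constraints inactive in a neighborhood --- but it must be written out for the proof to be complete, whereas the paper's finite-perturbation estimates avoid invoking differentiability of the solution map altogether. You should also state explicitly that a strictly positive directional derivative of $v_i$ in a direction feasible within $\mathcal{R}$ yields, for sufficiently small $\epsilon>0$, a report with strictly higher utility (since $p\equiv 0$), which is the final link to ``truthful reporting is not optimal.''
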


\begin{proof}
    As is shown in the proof of~\cref{thm:dominated}, the necessary condition for the solution $\theta$ is that there exists a $\mu \in \mathbb R$ such that 
    \begin{equation*}
        \sum_{i=1}^n w_i \re_i(\bmx) - f'\left(\frac{\llm(\bmx)}{\llminit(\bmx)}\right)  = \mu \quad \forall \bmx \in T^*, \sum_{\bmx \in T^*} \llm(\bmx) = 1.
    \end{equation*}
    And by \cref{lemma:sufficient_for_continuous}, we can also use the condition \cref{ass:continue}. 
    
    \textbf{The Maximum Normalization Case (1).}
    Without loss of generality, we assume that there exists $\bmx_1$ such that $t(\bmx_1) > 0$, we take $0 < \epsilon < 1 - \re_i(\bmx_1)$ to construct a report $\re'_i$
    \begin{align*}
        \re'_i(\bmx) =
        \begin{cases}
        \re_i(\bmx) + \epsilon, & \bmx=\bmx_1, \\
        \re_i(\bmx), & \bmx\neq \bmx_1.
        \end{cases}
    \end{align*}
    Suppose that $\mu'$ is the Lagrangian variable for the optimal solution $\theta'$ when reporting $\re'_i$, we can derive that 
    \[
    \mu' - \mu = w_i \epsilon \mathbb{I}_{\bmx = \bmx_1} -  \left( f'\left(\frac{\llmprime(\bmx)}{\llminit(\bmx)}\right) - f'\left(\frac{\llm(\bmx)}{\llminit(\bmx)}\right) \right) \quad \forall \bmx \in T^*.
    \]
    With a similar analyze in the proof of \cref{thm:dominated}, we can induce that $\mu' > \mu$ and $\llmprime(\bmx) < \llm(\bmx)$ for all $\bmx \neq \bmx_1$. 
    By the $C^2$-smoothness of $f$, for each $\bmx \neq \bmx_1$, there exits a $\llmprime(\bmx) \leq \bmz \leq \llm(\bmx)$ such that 
    \[
    \mu' - \mu = -  f''(\frac{\bmz}{\llminit(\bmx)}) \left(\frac{\llmprime(\bmx) - \llm(\bmx)}{\llminit(\bmx)}\right) .
    \]
    For convenience, we let $\llmprimeprime(\bmx)$ refer to the corresponding $\bmz$ for $\bmx$, note that $\llmprimeprime$ is not necessarily a distribution.
    We then compute the change in the group $i$'s utility:
    \begin{align*}
        &\sum_{\bmx \in T^*} \re_i(\bmx) (\llmprime(\bmx) - \llm(\bmx)) \\
        = &\re_i(\bmx_1) (\llmprime(\bmx_1) - \llm(\bmx_1)) + \sum_{\bmx \neq \bmx_1} \re_i(\bmx) (\llmprime(\bmx) - \llm(\bmx)) \\
        = &\re_i(\bmx_1) \sum_{\bmx \neq \bmx_1} (\llm(\bmx) - \llmprime(\bmx) ) - \sum_{\bmx \neq \bmx_1} \re_i(\bmx)  (\llm(\bmx) - \llmprime(\bmx) ) \\
        = & \sum_{\bmx \neq \bmx_1} \frac{(\mu' - \mu) (\re_i(\bmx_1) - \re_i(\bmx)) \llminit(\bmx)}{ f''(\frac{\llmprimeprime(\bmx)}{\llminit(\bmx)})}.
    \end{align*}
    Then, we show that when the $\epsilon$ we choose is sufficiently small, the above term is positive. 
    We define the lower bound:
    \[
    \delta_1 := \min_{\bmx \in T^*} f''(\frac{\llm(\bmx)}{\llminit(\bmx)}).
    \]
    Since function $f$ is $\alpha$-strongly convex, $\delta_1 \geq \alpha > 0$. 
    By the $C^2$-smoothness of the $f$, there exists an $\delta_2 > 0$, such that for each $\theta, \theta'$ satisfying $\max_{\bmx} |\llm(\bmx) - \llmprime(\bmx)| < \delta_2$, we have
    \[
    \max_{\bmx \in T^*} \left| f''(\frac{\llm(\bmx)}{\llminit(\bmx)}) - f''(\frac{\llmprime(\bmx)}{\llminit(\bmx)}) \right| \leq \min \{\frac{\delta_1}{2}, \frac{\delta_1^2 t(\bmx_1)}{4|T^*|} \}.
    \]
    Besides, because of the~\cref{ass:continue}, there exists $\delta_3$, such that for each $(\vec w, \overrightarrow \re)$ and $(\vec w', \overrightarrow \re')$ satisfying $\max_{\bmx \in T^*} |\sum_{i=1}^n w_i \re_i(\bmx) - \sum_{i=1}^n w'_i \re'_i(\bmx)| \leq \delta_3$, we have $\max_{\bmx} |\llm(\bmx) - \llmprime(\bmx)| < \delta_2$.

    Combining these, we set $\epsilon < \frac{\delta_3}{w_i}$, then it is suffice to show that 
    \begin{align*}
        &\left|\sum_{\bmx \neq \bmx_1} \frac{(\re_i(\bmx_1) - \re_i(\bmx)) \llminit(\bmx)}{f''(\frac{\llmprimeprime(\bmx)}{\llminit(\bmx)})} - \sum_{\bmx \neq \bmx_1} \frac{(\re_i(\bmx_1) - \re_i(\bmx))\llminit(\bmx)}{f''(\frac{\llm(\bmx)}{\llminit(\bmx)})}\right| \\
        = &\left|\sum_{\bmx \neq \bmx_1} \frac{(\re_i(\bmx_1) - \re_i(\bmx)) \left( f''(\frac{\llm(\bmx)}{\llminit(\bmx)}) - f''(\frac{\llmprimeprime(\bmx)}{\llminit(\bmx)}) \right) \llminit(\bmx)}{f''(\frac{\llmprimeprime(\bmx)}{\llminit(\bmx)}) \cdot f''(\frac{\llm(\bmx)}{\llminit(\bmx)})}\right|  \\
        \leq &\sum_{\bmx \neq \bmx_1} \frac{\left|\re_i(\bmx_1) - \re_i(\bmx)\right| \left| f''(\frac{\llm(\bmx)}{\llminit(\bmx)}) - f''(\frac{\llmprimeprime(\bmx)}{\llminit(\bmx)}) \right| \llminit(\bmx) }{f''(\frac{\llmprimeprime(\bmx)}{\llminit(\bmx)}) \cdot f''(\frac{\llm(\bmx)}{\llminit(\bmx)})}  \\
        < & |T^*| \cdot \frac{\delta_1^2 t(\bmx_1)}{4|T^*|} \cdot \frac{2}{\delta_1 \cdot \delta_1} = \frac{t(\bmx_1)}{2}.
    \end{align*}
    This means that 
    \begin{equation*}
        \begin{aligned}
            \sum_{\bmx \neq \bmx_1} \frac{(\re_i(\bmx_1) - \re_i(\bmx)) \llminit(\bmx)}{f''(\frac{\llmprimeprime(\bmx)}{\llminit(\bmx)})} 
            &> \sum_{\bmx \neq \bmx_1} \frac{(\re_i(\bmx_1) - \re_i(\bmx)) \llminit(\bmx)}{f''(\frac{\llm(\bmx)}{\llminit(\bmx)})} - \frac{t(\bmx_1)}{2} \\
            &= t(\bmx_1) -  \frac{t(\bmx_1)}{2} =  \frac{t(\bmx_1)}{2} > 0.
        \end{aligned}
    \end{equation*}
    Combined with $\mu' > \mu$, the proof concludes. 

    \textbf{The Summation Normalization Case (2).}
    Assume that there exists $\bmx_1$ such that $t(\bmx_1) < 0$, we select $\bmx_2 := \arg\max_{\bmx \in T^*} \re_i(\bmx)$
    and take $0 < \epsilon < \min \{\re_i(\bmx_1), 1 - \re_i(\bmx_2)\}$ to construct a report $\re'_i$ 
    \begin{align*}
        \re'_i(\bmx) =
        \begin{cases}
        \re_i(\bmx) - \epsilon, & \bmx=\bmx_1, \\
        \re_i(\bmx) + \epsilon, & \bmx=\bmx_2, \\
        \re_i(\bmx), & \bmx \notin \{\bmx_1, \bmx_2\}.
        \end{cases}
    \end{align*}
    Still, we use $\mu'$ to denote the Lagrangian variable for the optimal solution $\theta'$ when reporting $\re'_i$.
    Then, there are two possibilities for the relationship between $\mu$ and $\mu'$. 
    If $\mu \leq \mu'$, by the optimal condition~\ref{eqn:OPT}, for all $\bmx \neq \bmx_2$, we have $\llm(\bmx) \geq \llmprime(\bmx)$. 
    Since $\bmx_2$ has the highest reward value, such a change in the training outcome must be more preferred by the group $i$.
    Therefore, we only have to consider the case that $\mu > \mu'$. Similarly, in this case, for all $\bmx \neq \bmx_1$, we have $\llm(\bmx) < \llmprime(\bmx)$.  
    By the $C^2$-smoothness of $f$, for each $\bmx \neq \bmx_1$, there exits a $\llm(\bmx) \leq \bmz \leq \llmprime(\bmx)$ such that 
    \[
    \mu' - \mu = w_i \epsilon \mathbb{I}_{\bmx = \bmx_2} -  f''(\frac{\bmz}{\llminit(\bmx)}) (\frac{\llmprime(\bmx) - \llm(\bmx)}{\llminit(\bmx)}) .
    \] 
    Let $\llmprimeprime(\bmx)$ refer to the corresponding $\bmz$ for $\bmx$, we then compute the change in the group $i$'s utility:
    \begin{align*}
        &\sum_{\bmx \in T^*} \re_i(\bmx) (\llmprime(\bmx) - \llm(\bmx)) \\
        = &\re_i(\bmx_1) (\llmprime(\bmx_1) - \llm(\bmx_1)) + \sum_{\bmx \neq \bmx_1} \re_i(\bmx) (\llmprime(\bmx) - \llm(\bmx)) \\
        = &\re_i(\bmx_1) \sum_{\bmx \neq \bmx_1} (\llm(\bmx) - \llmprime(\bmx) ) - \sum_{\bmx \neq \bmx_1} \re_i(\bmx)  (\llm(\bmx) - \llmprime(\bmx) ) \\
        = & \sum_{\bmx \neq \bmx_1} \frac{(\mu' - \mu) (\re_i(\bmx_1) - \re_i(\bmx)) \llminit(\bmx)}{ f''(\frac{\llmprimeprime(\bmx)}{\llminit(\bmx)})} - w_i \epsilon \frac{(\re_i(\bmx_1) - \re_i(\bmx_2)) \llminit(\bmx_2)}{ f''(\frac{\llmprimeprime(\bmx_2)}{\llminit(\bmx_2)})} \\
        \geq & \sum_{\bmx \neq \bmx_1} \frac{(\mu' - \mu) (\re_i(\bmx_1) - \re_i(\bmx)) \llminit(\bmx)}{ f''(\frac{\llmprimeprime(\bmx)}{\llminit(\bmx)})}.
    \end{align*}
    With the same technique we used in the maximum normalized case (1), we can show that with sufficient small $\epsilon >0$, the above term $\sum_{\bmx \neq \bmx_1} \frac{(\re_i(\bmx_1) - \re_i(\bmx)) \llminit(\bmx)}{f''(\frac{\llmprimeprime(\bmx)}{\llminit(\bmx)})} < \frac{t(\bmx_1)}{2} < 0$. Combined with $\mu' < \mu$, the proof concludes. 

    \textbf{The Summation Normalization Case (3).}
    Assume that there exists $\bmx_1$ such that $t(\bmx_1) > 0$,and $\bmx_2$, $\re_i(\bmx_1) \geq \re_i(\bmx_2) > 0$, 
    we take $0 < \epsilon < \min \{\re_i(\bmx_2), 1 - \re_i(\bmx_1)\}$ to construct a report $\re'_i$ 
    \begin{align*}
        \re'_i(\bmx) =
        \begin{cases}
        \re_i(\bmx) + \epsilon, & \bmx=\bmx_1, \\
        \re_i(\bmx) - \epsilon, & \bmx=\bmx_2, \\
        \re_i(\bmx), & \bmx \notin \{\bmx_1, \bmx_2\}.
        \end{cases}
    \end{align*}
    Still, we use $\mu'$ to denote the Lagrangian variable for the optimal solution $\theta'$ when reporting $\re'_i$.
    Since we know for sure that $\llm(\bmx_1) < \llmprime(\bmx_1)$ and $\llm(\bmx_2) > \llmprime(\bmx_2)$, by the $C^2$-smoothness of $f$, $\llmprime(\bmx_2) \leq \llmprimeprime(\bmx_2) \leq \llm(\bmx_2)$ and $\llm(\bmx_1) \leq \llmprimeprime(\bmx_1) \leq \llmprime(\bmx_1)$ such that 
    \begin{equation}\label{eqn:app:theorem_higher_3_2}
        \begin{aligned}
            \mu' - \mu = w_i \epsilon -  f''(\frac{\llmprimeprime(\bmx_1)}{\llminit(\bmx_1)}) \frac{\llmprime(\bmx_1) - \llm(\bmx_1)}{\llminit(\bmx_1)}, \\        
            \mu' - \mu = - w_i \epsilon -  f''(\frac{\llmprimeprime(\bmx_2)}{\llminit(\bmx_2)}) \frac{\llmprime(\bmx_2) - \llm(\bmx_2)}{\llminit(\bmx_2)}.
        \end{aligned}
    \end{equation}
    
    Let $\delta_1 := \min_{\bmx} \llminit(\bmx)$, by the $C^2$-smoothness of the $f$, there exists an $\delta_2 > 0$, such that for each $\theta, \theta'$ satisfying $\max_{\bmx} |w_i \re_i(\bmx) - w'_i \re'_i(\bmx)| < \delta_2$, we have
    \begin{equation}\label{eqn:app:theorem_higher_3_1}
        \max_{\bmx \in T^*} \left| f''(\frac{\llm(\bmx)}{\llminit(\bmx)}) - f''(\frac{\llmprime(\bmx)}{\llminit(\bmx)}) \right| \leq \frac{\frac{\delta_1}{\llminit(\bmx_2)} f''\left(\frac{\llm(\bmx_2)}{\llminit(\bmx_2)}\right) - \frac{\delta_1}{\llminit(\bmx_1)} f''\left(\frac{\llm(\bmx_1)}{\llminit(\bmx_1)}\right)}{3}.
    \end{equation}
    
    \textbf{We take $\epsilon < \frac{\delta_2}{w_i}$ and first prove that when taking such $\epsilon$, there is $\mu \leq \mu'$. }
    By contradiction, if $\mu' < \mu$, then by condition \cref{eqn:OPT}, for all $\bmx \notin \{\bmx_1, \bmx_2\}$, there is $\llmprime(\bmx) > \llm(\bmx)$.
    Therefore, $\llmprime(\bmx_1) - \llm(\bmx_1) = \sum_{\bmx \notin \{\bmx_1, \bmx_2\}} (\llm(\bmx) - \llmprime(\bmx)) + \llm(\bmx_2) - \llmprime(\bmx_2) < \llm(\bmx_2) - \llmprime(\bmx_2)$.
    However, by \cref{eqn:app:theorem_higher_3_2}, if $\mu' < \mu$, we get
    \[
    f''\left(\frac{\llmprimeprime(\bmx_1)}{\llminit(\bmx_1)}\right) \frac{\llmprime(\bmx_1) - \llm(\bmx_1)}{\llminit(\bmx_1)} >
    f''\left(\frac{\llmprimeprime(\bmx_2)}{\llminit(\bmx_2)}\right) \frac{\llm(\bmx_2) - \llmprime(\bmx_2)}{\llminit(\bmx_2)}
    \]
    By \cref{eqn:app:theorem_higher_3_1}, we can derive that 
    \[
    f''\left(\frac{\llmprimeprime(\bmx_1)}{\llminit(\bmx_1)}\right) \frac{1}{\llminit(\bmx_1)} <
    f''\left(\frac{\llmprimeprime(\bmx_2)}{\llminit(\bmx_2)}\right) \frac{1}{\llminit(\bmx_2)},
    \]
    and thus, we get 
    \[
    \llmprime(\bmx_1) - \llm(\bmx_1) > \llm(\bmx_2) - \llmprime(\bmx_2),
    \]
    which brings the contradiction.
    
    \textbf{After proving that $\mu \leq \mu'$, we know that for all $\bmx \notin \{\bmx_1, \bmx_2\}$, $\llm(\bmx) \geq \llmprime(\bmx)$}. Then, by the $C^2$-smoothness of $f$, for each $\bmx \neq \bmx_1$, there exits a $\llmprime(\bmx) \leq \bmz \leq \llm(\bmx)$ such that 
    \[
    \mu' - \mu = - w_i \epsilon \mathbb{I}_{\bmx = \bmx_2} -  f''(\frac{\bmz}{\llminit(\bmx)}) (\frac{\llmprime(\bmx) - \llm(\bmx)}{\llminit(\bmx)}) .
    \] 
    Let $\llmprimeprime(\bmx)$ refer to the corresponding $\bmz$ for $\bmx$, we then compute the change in the group $i$'s utility:
    \begin{align*}
        &\sum_{\bmx \in T^*} \re_i(\bmx) (\llmprime(\bmx) - \llm(\bmx)) \\
        = &\re_i(\bmx_1) (\llmprime(\bmx_1) - \llm(\bmx_1)) + \sum_{\bmx \neq \bmx_1} \re_i(\bmx) (\llmprime(\bmx) - \llm(\bmx)) \\
        = &\re_i(\bmx_1) \sum_{\bmx \neq \bmx_1} (\llm(\bmx) - \llmprime(\bmx) ) - \sum_{\bmx \neq \bmx_1} \re_i(\bmx)  (\llm(\bmx) - \llmprime(\bmx) ) \\
        = & \sum_{\bmx \neq \bmx_1} \frac{(\mu' - \mu) (\re_i(\bmx_1) - \re_i(\bmx)) \llminit(\bmx)}{ f''(\frac{\llmprimeprime(\bmx)}{\llminit(\bmx)})} + w_i  \epsilon \frac{(\re_i(\bmx_1) - \re_i(\bmx_2)) \llminit(\bmx_2)}{ f''(\frac{\llmprimeprime(\bmx_2)}{\llminit(\bmx_2)})} \\
        \geq & \sum_{\bmx \neq \bmx_1} \frac{(\mu' - \mu) (\re_i(\bmx_1) - \re_i(\bmx)) \llminit(\bmx)}{ f''(\frac{\llmprimeprime(\bmx)}{\llminit(\bmx)})}.
    \end{align*}
    With the same technique we used in the maximum normalized case (1), we can show that with sufficient small $\epsilon >0$, the above term $\sum_{\bmx \neq \bmx_1} \frac{(\re_i(\bmx_1) - \re_i(\bmx)) \llminit(\bmx)}{f''(\frac{\llmprimeprime(\bmx)}{\llminit(\bmx)})} > \frac{t(\bmx_1)}{2} > 0$. Combined with $\mu' < \mu$, the proof concludes. 
\end{proof}

\section{Omitted Proofs in Section~\ref{sec:incentives:affine}}\label{sec:app2}

\thmne*
\begin{proof}
    We assume that for group $i$, the true reward model is $\re_i$, and the agent number is $w_i$. The reports of other groups are $(\overrightarrow \re_{-i}, \vec w_{-i})$ and the initial model is $\thetainit$.

    (1) $(\thetaast, p^{AFF})$ satisfies DSIC.
    
    We compare the utility between reporting $(\re_i, w_i)$ and any other $(\re'_i, w'_i)$.
    For convenience, we first simplify the notations by letting
    \begin{align*}
        \theta &= \thetaast(((\re_i, \overrightarrow{\re}_{-i}), (w_i, \vec{w}_{-i})),\thetainit), \\
        \theta' &= \thetaast(((\re'_i, \overrightarrow{\re}_{-i}), (w'_i, \vec{w}_{-i})),\thetainit).
    \end{align*}
    The valuation of group $i$ is the valuation for each agent multiplied by the real agent number:
    \begin{align*}
        v_i &= w_i v_i(\theta; \re_i), \\
        v_i' &= w_i v_i(\theta'; \re_i).
    \end{align*}
    According to the payment rule $p^{AFF}$, the payment $p_i$ for $(\re_i, w_i)$ and $p'_i$ for $(\re'_i, w'_i)$ is
    \begin{align*}
        p_i &= \mathrm{ASW}_{-i}(\thetaast(\overrightarrow \re_{-i}, \vec w_{-i}, \thetainit); \overrightarrow \re_{-i}, \vec w_{-i}, \theta_{\text{init}}) - \mathrm{ASW}_{-i}(\theta; \overrightarrow \re_{-i}, \vec w_{-i}, \theta_{\text{init}}) \\
        p_i' &= \mathrm{ASW}_{-i}(\thetaast(\overrightarrow \re_{-i}, \vec w_{-i}, \thetainit); \overrightarrow \re_{-i}, \vec w_{-i}, \theta_{\text{init}}) - \mathrm{ASW}_{-i}(\theta'; \overrightarrow \re_{-i}, \vec w_{-i}, \theta_{\text{init}})
    \end{align*}    
    Therefore, we can calculate the change in the utility:
    \begin{align*}
        u_i' - u_i = &(v_i' - p_i') - (v_i - p_i)\\
        = &\left(w_i v_i(\theta'; \re_i) + \mathrm{ASW}_{-i}(\theta'; \overrightarrow \re_{-i}, \vec w_{-i}, \theta_{\text{init}})\right) \\
        - &\left(w_i v_i(\theta; \re_i) + \mathrm{ASW}_{-i}(\theta; \overrightarrow \re_{-i}, \vec w_{-i}, \theta_{\text{init}})\right)  \\
        = &\mathrm{ASW}((\theta'; (\re_i, \overrightarrow{\re}_{-i}), (w_i, \vec{w}_{-i})), \theta_{\text{init}}) - \mathrm{ASW}((\theta; (\re_i, \overrightarrow{\re}_{-i}), (w_i, \vec{w}_{-i})), \theta_{\text{init}}) \\
        \leq&0.
    \end{align*}
    The last inequality holds by the definition of $\theta$
    \begin{equation*}
        \theta = \thetaast(((\re_i, \overrightarrow{\re}_{-i}), (w_i, \vec{w}_{-i})),\thetainit) = \arg \max_{\hat \theta \in \Theta} \mathrm{ASW}((\hat \theta; (\re_i, \overrightarrow{\re}_{-i}), (w_i, \vec{w}_{-i})), \theta_{\text{init}}).
    \end{equation*}
    Therefore, we can conclude that, for all $\overrightarrow \re$, $\vec w$, $\re'_i$, $w'_i$, we have
    \begin{align*}
        u_i((\overrightarrow{\re}, \vec{w}); \thetaast, p^{AFF}, \re_i, w_i) \geq u_i((\re'_i, \overrightarrow{\re}_{-i}), (w'_i, \vec{w}_{-i}); \thetaast, p^{AFF}, \re_i, w_i).
    \end{align*}

    (2) $(\thetaast, p^{AFF})$ satisfies IR.
    
    We reuse the notations above and denote $\theta_{-i}$ to be the optimal parameter for groups except for $i$, i.e. $\theta_{-i} = \thetaast(\overrightarrow \re_{-i}, \vec w_{-i}, \thetainit)$.
    When group $i$ truthfully report its reward model $\re_i$ and agent number $w_i$, the utility can be written as:
    \begin{align*}
        u_i &= v_i - p_i \\
        &= w_i v_i(\theta; \re_i) - \mathrm{ASW}_{-i}(\theta_{-i}; \overrightarrow \re_{-i}, \vec w_{-i}, \theta_{\text{init}}) + \mathrm{ASW}_{-i}(\theta; \overrightarrow \re_{-i}, \vec w_{-i}, \theta_{\text{init}}) \\
         &= w_i v_i(\theta; \re_i) + \mathrm{ASW}_{-i}(\theta; \overrightarrow \re_{-i}, \vec w_{-i}, \theta_{\text{init}}) - \mathrm{ASW}_{-i}(\theta_{-i}; \overrightarrow \re_{-i}, \vec w_{-i}, \theta_{\text{init}}) \\
         &= \mathrm{ASW}(\theta; \overrightarrow \re, \vec w, \theta_{\text{init}}) - \mathrm{ASW}_{-i}(\theta_{-i}; \overrightarrow \re_{-i}, \vec w_{-i}, \theta_{\text{init}}) \\
         &\geq \mathrm{ASW}(\theta_{-i}; \overrightarrow \re, \vec w, \theta_{\text{init}}) - \mathrm{ASW}_{-i}(\theta_{-i}; \overrightarrow \re_{-i}, \vec w_{-i}, \theta_{\text{init}}) \\
         &= w_i v_i(\theta_{-i}; \re_i) + \mathrm{ASW}_{-i}(\theta_{-i}; \overrightarrow \re, \vec w, \theta_{\text{init}}) - \mathrm{ASW}_{-i}(\theta_{-i}; \overrightarrow \re_{-i}, \vec w_{-i}, \theta_{\text{init}}) \\
         &= w_i v_i(\theta_{-i}; \re_i) \geq 0.
    \end{align*}
    Therefore, we can conclude that, for all $\overrightarrow \re$, $\vec w$, we have
    \begin{align*}
        u_i((\overrightarrow{\re}, \vec{w}); \thetaast, p^{AFF}, \re_i, w_i) \geq 0.
    \end{align*}
\end{proof}

\thmequivalencew*
\begin{proof}
    We follow the result Theorem 1.37 in Nisan et al.~\cite{nisan2007introduction}.
    \begin{lemma}[Theorem 1.37 in Nisan et al.~\cite{nisan2007introduction}]
        Let $\mathcal R_i$ be group $i$'s preference domain. Assume that the $\mathcal R_1, \mathcal R_2, \ldots, \mathcal R_n$ are connected sets in the Euclidean space, then all implementable training rules $\thetaast$ satisfy payment equivalence.
    \end{lemma}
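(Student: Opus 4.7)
The plan is to show directly that, for any two payment rules $p$ and $p'$ that both implement the same training rule $\thetaast$ in DSIC with $\vec w \equiv 1$ public, the difference $p'_i - p_i$ depends only on $\overrightarrow{\re}_{-i}$ and $\thetainit$ for each $i$. Fixing $i$, $\overrightarrow{\re}_{-i}$, and $\thetainit$ throughout, I view the training rule and payments as functions of $\re_i \in \mathcal{R}$ alone, and for each payment $q \in \{p, p'\}$ define the indirect utility
\[
U^q(\re_i) := v_i(\thetaast(\re_i);\,\re_i) - q_i(\re_i) = \sum_{\bmx \in T^\ast} \thetaast(\re_i)(\bmx)\,\re_i(\bmx) - q_i(\re_i).
\]
Since $v_i(\theta;\re_i)$ is linear in $\re_i$ and DSIC gives $U^q(\re_i) = \sup_{\re'_i \in \mathcal R}\{v_i(\thetaast(\re'_i);\re_i) - q_i(\re'_i)\}$, each $U^q$ is a pointwise supremum of affine functions of $\re_i$ and hence convex on $\mathcal R$ (viewed inside the ambient Euclidean space $\mathbb{R}^{|T^\ast|}$).

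The key envelope step is to observe that $\thetaast(\re'_i)$, regarded as the vector $(\thetaast(\re'_i)(\bmx))_{\bmx \in T^\ast}$, is a subgradient of both $U^p$ and $U^{p'}$ at $\re'_i$. This is immediate from DSIC, which rewrites as
\[
U^q(\re_i) \;\ge\; U^q(\re'_i) + \bigl\langle \thetaast(\re'_i),\; \re_i - \re'_i\bigr\rangle \qquad \forall\,\re_i, \re'_i \in \mathcal R.
\]
Hence at every $\re_i \in \mathcal R$, the single vector $\thetaast(\re_i)$ lies simultaneously in $\partial U^p(\re_i)$ and $\partial U^{p'}(\re_i)$, so the two convex functions share a common subgradient at every point.

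The last step promotes ``shared subgradient at every point'' to ``differ by a constant,'' which is exactly where connectedness of $\mathcal R$ is used. Given any $\re_i^{(0)}, \re_i^{(1)} \in \mathcal R$, I join them by a piecewise-linear path $\gamma: [0,1] \to \mathcal R$; such a path exists because $\mathcal R$ is defined by finitely many linear equalities/inequalities and is therefore locally path-connected. Along each linear segment, the restriction of $U^q$ is a one-dimensional convex function, so the fundamental theorem of calculus for convex functions gives
\[
U^q(\gamma(1)) - U^q(\gamma(0)) \;=\; \int_0^1 \bigl\langle h(t),\,\gamma'(t)\bigr\rangle\,dt
\]
for any measurable selection $h(t) \in \partial U^q(\gamma(t))$. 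Choosing the common selection $h(t) = \thetaast(\gamma(t))$ for both $q = p$ and $q = p'$ produces the same integral, so $U^p - U^{p'}$ is constant along $\gamma$. Since $U^p - U^{p'} = p' - p$, the payment rules differ only by a function of $(\overrightarrow{\re}_{-i}, \thetainit)$, as required by \cref{def:equivalence}.

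The main obstacle will be the measure-theoretic justification of the line integral when $\mathcal R$ is connected but not convex (e.g., the max-normalized reward domain is a union of faces of the unit hypercube rather than a convex body). I plan to address this either by extending $U^q$ to the convex hull of $\mathcal R$ via its defining supremum formula — on which absolute continuity of convex restrictions is classical — or by choosing paths whose linear pieces each lie within a single affine face of $\mathcal R$, so that one-dimensional convex calculus applies directly on each piece. Telescoping across the finitely many pieces then yields payment equivalence for all implementable rules in this setting.
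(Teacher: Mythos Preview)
The paper does not prove this lemma at all: it is quoted verbatim as Theorem~1.37 from Nisan et al.\ and used as a black box. Your proposal therefore goes well beyond what the paper does, supplying the standard envelope-theorem argument that underlies the cited result. The route you take---writing the indirect utility as a pointwise supremum of affine functions, reading off the allocation $\thetaast(\re_i)$ as a common subgradient of $U^p$ and $U^{p'}$, and then integrating along a path to conclude the two utilities (hence payments) differ by a constant---is exactly the classical proof and is sound.

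One point worth flagging: the lemma is stated for arbitrary \emph{connected} domains, but your argument relies on piecewise-linear path-connectedness, which you justify by appealing to the specific polyhedral structure of the reward domains $\mathcal R$ in this paper. That is fine here, since both the summation- and maximum-normalized domains are indeed polyhedral and path-connected, but it means your proof is tailored to the concrete setting rather than the abstract lemma as stated. Your proposed fix of extending $U^q$ to the convex hull via the supremum formula is the cleanest way to handle the max-normalized case, since the extended function remains convex on all of $\mathbb{R}^{|T^\ast|}$ and the subgradient inequality at points of $\mathcal R$ survives; the line-integral step is then standard one-dimensional convex analysis on each segment. This resolves the only real obstacle you identify.
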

    In our paper, we assume that for all $i \in [n]$, $\mathcal R_i$ is the set of all non-negative and normalized $|T^*|$-dim vectors.
    Either in the summation normalization case or the maximum normalization case, this is a connected set in the Euclidean space. Hence, the theorem holds.
\end{proof}

\thmdistance*
\begin{proof}
    (1) For $f_{\mathrm{KL}}(x)$ = $\lambda x$ $\log x$ (KL-divergence), since $T^*$ is a finite set, we can rewrite the training rule $\thetaast$ as an optimization problem as follows:
    \begin{align*}
        \arg \max_{\llm} \sum_{\bmx \in T^*} \Bigg(\llm(\bmx) &\sum_{i=1}^n w_i \re_i(\bmx) - \lambda \llm(\bmx) \log \frac{\llm(\bmx)}{\llminit(\bmx)} \Bigg)\\
        \text{s.t.} \quad \sum_{\bmx \in T^*} \llm(\bmx) &= 1 \\
        \llm(\bmx) &\geq 0 \quad \forall \bmx \in T^*.
    \end{align*}
    Since for $KL$ divergence, the optimal model $\llm$ must satisfy that $\llm(\bmx) > 0$, for all $\bmx \in T^*$. 
    The necessary condition for an optimal $\theta$ is that there exists $\mu \in \mathbb R$, such that
    \begin{align*}
        \sum_{i=1}^n w_i \re_i(\bmx) -  \lambda \log \frac{\llm(\bmx)}{\llminit(\bmx)} - \lambda = \mu \quad \forall \bmx \in T^*, \quad \sum_{\bmx \in T^*} \llm(\bmx) = 1.
    \end{align*}
    Similarly, for the input $(\overrightarrow \re', \vec w')$, there exists $\mu' \in \mathbb R$, such that the optimal $\theta'$ satisfies
    \begin{align*}
        \sum_{i=1}^n w'_i \re'_i(\bmx) -  \lambda \log \frac{\llmprime(\bmx)}{\llminit(\bmx)} - \lambda = \mu' \quad \forall \bmx \in T^*, \quad \sum_{\bmx \in T^*} \llmprime(\bmx) = 1.
    \end{align*}
    For convenience, we define $\Delta(\bmx) = \sum_{i=1}^n w'_i \re'_i(\bmx) - \sum_{i=1}^n w_i \re_i(\bmx)$.
    Then the relationship between $\llm(\bmx)$ and $\llmprime(\bmx)$ is given by
    \begin{equation*}
        \llmprime(\bmx) = \llm(\bmx) e^{\frac{1}{\lambda} (\Delta(\bmx) + \mu - \mu')}.
    \end{equation*}
    Note that we also have the condition
    \begin{equation*}
        \sum_{\bmx \in T^*} \llmprime(\bmx) =  \sum_{\bmx \in T^*} \llm(\bmx) e^{\frac{1}{\lambda} (\Delta(\bmx) + \mu - \mu')} = 1.
    \end{equation*}
    Since $\sum_{\bmx \in T^*} \llm(\bmx) e^{\frac{1}{\lambda} (\Delta(\bmx) + \mu - \mu')} = e^{\frac{1}{\lambda} (\mu - \mu')} \sum_{\bmx \in T^*} \llm(\bmx) e^{\frac{1}{\lambda} \Delta(\bmx)}$, we can infer that 
    \begin{align*}
        1 &= e^{\frac{1}{\lambda} (\mu - \mu')} \sum_{\bmx \in T^*} \llm(\bmx) e^{\frac{1}{\lambda} \Delta(\bmx)} \leq e^{\frac{1}{\lambda} (\mu - \mu')} \max_{\bmx \in T^*} e^{\frac{1}{\lambda} \Delta(\bmx)}, \\
        1 &=e^{\frac{1}{\lambda} (\mu - \mu')} \sum_{\bmx \in T^*} \llm(\bmx) e^{\frac{1}{\lambda} \Delta(\bmx)} \geq e^{\frac{1}{\lambda} (\mu - \mu')} \min_{\bmx \in T^*} e^{\frac{1}{\lambda} \Delta(\bmx)}. \\
    \end{align*}
    This is equivalent to 
    \begin{align*}
        \min_{\bmx \in T^*} \Delta(\bmx) \leq \mu' - \mu  \leq  \max_{\bmx \in T^*} \Delta(\bmx). 
    \end{align*}
    Thus, the difference for $\llm(\bmx)$ and $\llmprime(\bmx)$ can be bounded by
    \begin{align*}
        |\llmprime(\bmx) - \llm(\bmx)| &= \left| 1 - e^{\frac{1}{\lambda} (\Delta(\bmx) + \mu - \mu')}\right| \llm(\bmx)  \\
        &\leq \left| 1 - e^{\frac{1}{\lambda} (\Delta(\bmx) + \mu - \mu')} \right| \\
        &\leq \max \{\max_{\bmx \in T^*} e^{\frac{2\Delta(\bmx)}{\lambda}} - 1, \max_{\bmx \in T^*} 1 - e^{\frac{2\Delta(\bmx)}{\lambda}} \}.
    \end{align*}
    For any $\delta > 0$, when we set $\max_{\bmx \in T^*} |\Delta(\bmx)| \leq \min \{\frac{\lambda}{2}\log\frac{1}{1-\delta}, \frac{\lambda}{2}\log(1+\delta) \}$, we have 
    \begin{equation*}
        |\llmprime(\bmx) - \llm(\bmx)| \leq \max \{\max_{\bmx \in T^*} e^{\frac{2\Delta(\bmx)}{\lambda}} - 1, \max_{\bmx \in T^*} 1 - e^{\frac{2\Delta(\bmx)}{\lambda}} \} \leq \delta.
    \end{equation*}
    
    (2) For $f_2(x)$ = $\lambda {(x - 1)}^2$ ($\chi^2$ divergence), 
    since $T^*$ is a finite set, we can rewrite the training rule $\thetaast$ as an optimization problem as follows:
    \begin{align*}
        \arg \max_{\llm} \sum_{x \in T^*} \Bigg(\llm(x) &\sum_{i=1}^n w_i \re_i(x) - \lambda \frac{{(\llm(x) - \llminit(x))}^2}{\llminit(\bmx)} \Bigg)\\
        \text{s.t.} \quad \sum_{x \in T^*} \llm(x) &= 1\\
        \llm(x) &\geq 0 \quad \forall x \in T^*.
    \end{align*}
    Since we have assumed a relatively large $\lambda$ so that the optimal model $\llm$ satisfies that $\llm(x) > 0$, for all $x \in T^*$. 
    The necessary condition for an optimal $\theta$ is that there exists $\mu \in \mathbb R$, such that
    \begin{equation*}
        \sum_{i=1}^n w_i \re_i(x) - 2 \lambda \frac{\llm(x) - \llminit(x)}{\llminit(x)} = \mu \quad \forall x \in T^*, \quad \sum_{\bmx \in T^*} \llm(\bmx) = 1.
    \end{equation*}
    Similarly, for the input $(\overrightarrow \re', \vec w')$, there exists $\mu' \in \mathbb R$, such that the optimal $\theta'$ satisfies
    \begin{equation*}
        \sum_{i=1}^n w'_i \re'_i(x) - 2 \lambda \frac{\llmprime(x) - \llminit(x)}{\llminit(x)} = \mu' \quad \forall x \in T^*, \quad \sum_{\bmx \in T^*} \llmprime(\bmx) = 1.
    \end{equation*}
    For convenience, we define $\Delta(x) = \sum_{i=1}^n w'_i \re'_i(x) - \sum_{i=1}^n w_i \re_i(x)$
    Then the relationship between $\llm(x)$ and $\llmprime(x)$ is given by
    \begin{equation*}
        \llmprime(x) = \llm(x) + \frac{\llminit(x)}{2\lambda} (\Delta(x) + \mu - \mu').
    \end{equation*}
    Note that we also have the condition
    \begin{equation*}
        \sum_{x \in T^*} \llmprime(x) =  \sum_{x \in T^*} \llm(x) + \frac{\llminit(x)}{2\lambda} (\Delta(x) + \mu - \mu') = 1.
    \end{equation*}
    Since $\sum_{x \in T^*} \llm(x) = 1$, we can infer that 
    \begin{align*}
        \sum_{x \in T^*} \frac{\llminit(x)}{2\lambda} (\Delta(x) + \mu - \mu') = 0.
    \end{align*}
    This is equivalent to 
    \begin{align*}
        \mu' - \mu = \sum_{x \in T^*} \llminit(x) \Delta(x).
    \end{align*}
    Thus, the difference for $\llm(x)$ and $\llmprime(x)$ can be bounded by
    \begin{align*}
        |\llmprime(x) - \llm(x)| = \left| \frac{\llminit(x)}{2\lambda} (\Delta(x) + \mu - \mu') \right| \leq \frac{1}{\lambda} \max_{x \in T^*}|\Delta(x)|
    \end{align*}
    For any $\delta > 0$, when we set $\max_{x \in T^*} |\Delta(x)| \leq \lambda \delta$, we have 
    \begin{equation*}
        |\llmprime(x) - \llm(x)| \leq \frac{1}{\lambda} \max_{x \in T^*}|\Delta(x)| \leq \delta.
    \end{equation*}
\end{proof}

\thmequivalence*
\begin{proof}
    We prove the equivalent version of payment equivalence: For any group $i$, when fixing other groups reports $(\overrightarrow \re_{-i}, \vec w_{-i})$ and $\thetainit$, any two payment rules $p$, $p'$ that implement $\thetaast$ in DSIC must satisfy that there exists a constant $c$, such that $p_i(\re_i, w_i) - p'_i(\re_i, w_i) = c$ for any $\re_i$ and $w_i$.\ \emph{Therefore, in the rest of the proof, we suppose fixed $(\overrightarrow \re_{-i}, \vec w_{-i})$ and $\thetainit$ and will omit these notations.}
    
    Firstly, we introduce a new notation $t_i$ to represent the combination $(\re_i, w_i)$, whose domain is $\mathcal R \times \mathcal W$.
    Without specially claim, $t_i$ is used to represented for the $\re_i$ and $w_i$ with the same superscript and subscript, for example, $t_i^k = (\re_i^k, w_i^k)$.
    Then, we define the functions $l(\cdot, \cdot)$ and $V(\cdot, \cdot)$ as follows.  
    $l(t'_i, t_i)$ is the change in valuation from misreporting type $t'_i$ to reporting type $t_i$ truthfully. In formal,
    \begin{equation*}
        l(t'_i, t_i) := w_i v_i(\thetaast(t_i); \re_i) - w_i v_i(\thetaast(t'_i); \re_i).
    \end{equation*}
    And $V(t'_i, t_i)$ refers to the smallest values of $l$ on a finite and distinct path from $t'_i$ to $t_i$
    \begin{equation*}
        V(t'_i, t_i) := \inf_{\substack{\text{A finite and distinct sequence} \\ [t_i^0 := t'_i, t_i^1, \ldots, t_i^k, t_i^{k+1} := t_i]}} \sum_{j=0}^{k}l(t_i^j, t_i^{j+1}).
    \end{equation*}
    We prove the following lemma, which is a special case in Heydenreich et al.~\cite{heydenreich2009characterization},
    \begin{lemma}[Heydenreich et al.~\cite{heydenreich2009characterization}]\label{lemma:sufficient}
        In the RLHF Game, an implemented training rule $\thetaast$ satisfies payment equivalence if for any agent $i$, and any types $t_i$, $t'_i$, we have
        \[
        V(t_i, t'_i) = - V(t'_i, t_i).
        \]
    \end{lemma}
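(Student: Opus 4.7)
The plan is to derive payment equivalence from the standard cycle-monotonicity argument specialized to the RLHF Game. First, I would translate DSIC into a one-step inequality on payment differences. Fixing the other groups' reports and $\thetainit$, for any two types $t_i$ and $t'_i$, the DSIC condition applied to an agent of type $t_i$ who considers reporting $t'_i$ gives
\begin{equation*}
w_i v_i(\thetaast(t_i); \re_i) - p_i(t_i) \geq w_i v_i(\thetaast(t'_i); \re_i) - p_i(t'_i),
\end{equation*}
which rearranges to $p_i(t_i) - p_i(t'_i) \leq l(t'_i, t_i)$, using the definition of $l$.

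Second, I would chain this inequality along any finite distinct sequence $t_i^0 := t'_i, t_i^1, \ldots, t_i^{k+1} := t_i$. Telescoping the per-step bound gives
\begin{equation*}
p_i(t_i) - p_i(t'_i) = \sum_{j=0}^{k} \bigl(p_i(t_i^{j+1}) - p_i(t_i^{j})\bigr) \leq \sum_{j=0}^{k} l(t_i^{j}, t_i^{j+1}).
\end{equation*}
Taking the infimum over all such paths yields $p_i(t_i) - p_i(t'_i) \leq V(t'_i, t_i)$. By swapping the roles of $t_i$ and $t'_i$, the same argument also gives $p_i(t'_i) - p_i(t_i) \leq V(t_i, t'_i)$, i.e.\ $p_i(t_i) - p_i(t'_i) \geq -V(t_i, t'_i)$.

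Third, I would invoke the hypothesis $V(t_i, t'_i) = -V(t'_i, t_i)$ to collapse these two bounds. They become $-V(t_i, t'_i) = V(t'_i, t_i)$, so the two-sided inequality forces
\begin{equation*}
p_i(t_i) - p_i(t'_i) = V(t'_i, t_i),
\end{equation*}
a quantity determined entirely by $\thetaast$ (together with the fixed $\overrightarrow{\re}_{-i}, \vec w_{-i}, \thetainit$), not by the choice of payment rule. Hence for any two DSIC payment rules $p$ and $p'$ implementing $\thetaast$, both satisfy $p_i(t_i) - p_i(t'_i) = p'_i(t_i) - p'_i(t'_i)$ for all $t_i, t'_i \in \mathcal{R}\times\mathcal{W}$, so $p_i - p'_i$ is a constant in $t_i$ whenever the other inputs are held fixed. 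Restoring the suppressed arguments, this constant is exactly the group-independent function $g_i(\overrightarrow{\re}_{-i}, \vec w_{-i}, \thetainit)$ demanded by \cref{def:equivalence}.

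I expect the routine step to be the telescoping chain, since it is purely algebraic. The only conceptual care needed is checking that the infimum in the definition of $V$ is well-defined and that restricting to finite distinct paths does not lose any generality for the upper-bound argument — both follow because the one-step DSIC inequality already yields a bound on the whole telescoped sum, so passing to the infimum is legitimate. No continuity or connectedness assumption on $\mathcal{R}\times\mathcal{W}$ is used here; the assumption $V(t_i, t'_i) = -V(t'_i, t_i)$ itself plays the role previously served by connectedness of the type space in \cref{thm:equivalence-wo-w}, which is precisely why this lemma is the right abstract tool to build towards \cref{thm:equivalence}.
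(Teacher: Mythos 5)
Your proposal is correct and follows essentially the same route as the paper's proof: one-step DSIC inequalities telescoped along finite distinct paths, an infimum to obtain $p_i(t_i) - p_i(t'_i) \leq V(t'_i, t_i)$, the symmetric bound, and the hypothesis $V(t_i, t'_i) = -V(t'_i, t_i)$ to pin down $p_i(t_i) - p_i(t'_i) = V(t'_i, t_i)$, which is rule-independent and hence makes $p_i - p'_i$ constant in $t_i$ for fixed $\overrightarrow{\re}_{-i}, \vec w_{-i}, \thetainit$. The only cosmetic difference is that the paper finishes by anchoring to a reference type (the uniform reward model with $w_i = 1$) to exhibit the constant explicitly, whereas you conclude constancy directly from the rule-independence of $V$; these are equivalent.
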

    \begin{proof}
        Assume there is a mechanism $(\thetaast, p)$ that satisfies DSIC.
        For any two types $t_i$, $t'_i$ and a finite and distinct sequence $[t'_i, t_i^1, \ldots, t_i^k, t_i]$, let $t_i^0 = t'_i$ and $t_i^{k+1} = t_i$, we have that 
        \begin{align*}
            w^{j+1}_i v_i(\thetaast(t^{j+1}_i), \re^{j+1}_i) - p_i(t^{j+1}_i) \geq w^{j+1}_i v_i(\thetaast(t^{j}_i), \re^{j+1}_i) - p_i(t^{j}_i) \quad \forall 0 \leq j \leq k.
        \end{align*}
        This can be rewritten as 
        \begin{align*}
            w^{j+1}_i v_i(\thetaast(t^{j+1}_i), \re^{j+1}_i) - w^{j+1}_i v_i(\thetaast(t^{j}_i), \re^{j+1}_i) \geq p_i(t^{j+1}_i) - p_i(t^{j}_i) \quad \forall 0 \leq j \leq k.
        \end{align*}
        Sum over $j$, we get the following inequality
        \begin{align*}
            \sum_{j=0}^{k} l(t_i^{j}, t_i^{j+1}) &= \sum_{j=0}^{k} w^{j+1}_i v_i(\thetaast(t^{j+1}_i), \re^{j+1}_i) - w^{j+1}_i v_i(\thetaast(t^{j}_i), \re^{j+1}_i) \\ 
            &\geq \sum_{j=0}^{k} p_i(t^{j+1}_i) - p_i(t^{j}_i) = p(t_i) - p(t'_i).
        \end{align*}
        Since this holds for arbitrary finite and distinct sequences, we can infer that $V(t'_i, t_i) \geq p(t_i) - p(t'_i)$.
        Similarly, there is $V(t_i, t'_i) \geq p(t'_i) - p(t_i)$. 
        Combining these results with $V(t_i, t'_i) = - V(t'_i, t_i)$, there is 
        \[
        V(t_i, t'_i) = - V(t'_i, t_i) \leq p(t'_i) - p(t_i) \leq V(t_i, t'_i),
        \]
        which means that $p(t'_i) - p(t_i) = V(t_i, t'_i)$. 
        Note that this holds for arbitrary $t_i$ and $t'_i$.
        Therefore, when for some $t_i$, the payment $p(t_i)$ is determined, then the payment for all other $t'_i$s is determined. 
        For example, if there are any two payment rules $p$ and $p'$ both implement $\thetaast$ in DSIC, and we set the payment when $i$ reports preference $\re$ defined in \cref{uniform-re} and $w_i = 1$ as $p^*$ and $p'^*$ respectively, then $\forall t_i$
        \begin{align*}
            &p_i(t_i) - p'_i(t_i) \\
            = &\left(p_i(t_i) - p^*\right) - \left(p'_i(t_i) - p'^* \right) + p^* - p'^*\\
            = &V((\re, 1), t_i) - V((\re, 1), t_i) + p^* - p'^* \\
            = &p^* - p'^*.
        \end{align*}
        Note that $p^*$ and $p'^*$ are not influenced by $i$'s report, but they may vary for different $\overrightarrow \re_{-i}$, $\vec w_{-i}$ and $\thetainit$, which means that we can consider the term $p^* - p'^*$ as a function $f$ on $(\overrightarrow \re_{-i}, \thetainit)$. 
    \end{proof}

    Then, we show that the training rule satisfying the conditions in \cref{thm:equivalence} is sufficient for the condition stated in \cref{lemma:sufficient}.
    Firstly, we show that for any $t_i, t'_i$, we have $V(t_i, t'_i) + V(t'_i, t_i) \geq 0$.
    By definition of the function $V(\cdot, \cdot)$, $V(t_i, t'_i)$ and $V(t'_i, t_i)$ correspond to the shortest path from $t_i$ to $t'_i$ and from $t'_i$ to $t_i$ respectively, which means that $V(t_i, t'_i) + V(t'_i, t_i)$ is the shortest weight for a cycle that goes through $t_i$ and $t'_i$.
    Since the \tr\ rule is implementable, we know that the weight for any cycle is non-negative by cycle monotonicity~\cite{rochet1987necessary}.
    Therefore, $V(t_i, t'_i) + V(t'_i, t_i) \geq 0$ must be satisfied.

    Then we show that for any $t_i, t'_i$ and $\epsilon > 0$, $V(t_i, t'_i) + V(t'_i, t_i) \leq \epsilon$.
    We prove this by constructing a finite and distinct sequence $[t_i, t_i^1, \ldots, t_i^k, t'_i]$ such that 
    \begin{equation}\label{eqn:app:equiv}
        \sum_{j=0}^k l(t_i^j, t_i^{j+1}) + \sum_{j=0}^k l(t_i^{j+1}, t_i^{j}) \leq \epsilon.
    \end{equation}
    This suffices for proving $V(t_i, t'_i) + V(t'_i, t_i) \leq \epsilon$ since $V(t_i, t'_i)$ and $V(t'_i, t_i)$ are the lower bound for $\sum_{j=0}^k l(t_i^j, t_i^{j+1})$ and $\sum_{j=0}^k l(t_i^{j+1}, t_i^{j})$ respectively.

    Initially, we rewrite the LHS of \cref{eqn:app:equiv} by using the definition of the function $l(\cdot, \cdot)$.
    \begin{align*}
        &\sum_{j=0}^k l(t_i^j, t_i^{j+1}) + \sum_{j=0}^k l(t_i^{j+1}, t_i^{j})\\
        = &\sum_{j=1}^k \left(w^{j+1}_i v_i(\thetaast(t^{j+1}_i), \re^{j+1}_i) - w^{j+1}_i v_i(\thetaast(t^{j}_i), \re^{j+1}_i)\right) + 
        \sum_{j=0}^k \left(w^{j}_i v_i(\thetaast(t^{j}_i), \re^{j}_i) - w^{j}_i v_i(\thetaast(t^{j+1}_i), \re^{j}_i)\right)  \\
        = &\sum_{j=0}^k w_i^{j+1} (\textllm_{\theta^{j+1}} - \textllm_{\theta^{j}}) \cdot \re_i^{j+1} + \sum_{j=0}^k w_i^j (\textllm_{\theta^{j}} - \textllm_{\theta^{j+1}}) \cdot \re_i^j \\
        = &\sum_{j=0}^k (\textllm_{\theta^{j+1}} - \textllm_{\theta^{j}}) \cdot (w_i^{j+1} \re_i^{j+1} - w_i^j \re_i^{j}) \\
        = &\sum_{j=0}^k \sum_{x \in T^*} (\textllm_{\theta^{j+1}}(x) - \textllm_{\theta^{j}}(x))(w_i^{j+1} \re_i^{j+1}(x) - w_i^j \re_i^{j}(x)).
    \end{align*}
    In the above equations, $\theta^{j} = \thetaast(t_i^j)$ for $0 \leq j \leq k$ refers to the fine-tuned model when group $i$ reports $t_i^j$.
    
    By the condition, when $\overrightarrow \re_{-i}$, $\vec w_{-i}$ and $\thetainit$ are fixed, there exits $\delta > 0$ such that if $\max_{x\in T^*} |w_i \re_i(x) - w'_i \re'_i(x)| \leq \delta$, then $\max_{x\in T^*} |\llm(x) - \llmprime(x)| \leq \frac{\epsilon}{4 \bar w}$ (in maximum normalization case, we take $\frac{\epsilon}{4 \bar{w} |T^*|}$), where $\theta := \thetaast((\re_i, \overrightarrow \re_{-i}), (w_i, \vec w_{-i}); \thetainit)$ and $\theta' := \thetaast((\re'_i, \overrightarrow \re_{-i}), (w'_i, \vec w_{-i}); \thetainit)$. 
    
    We construct the sequence $P$ as follows: we set $k = 2n$, $n \geq \frac{\bar w}{\delta} + 1$ and let $t_i^0 = t_i, t_i^{k+1} = t'_i$.
    For each $0 \leq j \leq n$, 
    \[w_i^j = w_i, \quad \re_i^j = \re + j (\frac{\re_i^* - \re}{n}). \]
    And for each $n+1 \leq j \leq 2n+1$, 
    \[w_i^j = w'_i, \quad \re_i^j = \re_i^* + (j-n-1) (\frac{\re' - \re_i^*}{n}).\]
    Note that the $\re_i^*$ is the one given by the condition in \cref{thm:equivalence}.
    In this construction, any $\re_i^j$ is either an weighted average of $\re$ and $\re_i^*$ or $\re'$ and $\re_i^*$. This ensures that all reward models in the sequence are valid (normalized by summation or maximum and non-negative).
    We can then divide the above equation into three parts, making the $w_i$ the same in the first and the last parts.
    \begin{align*}
        &\sum_{j=0}^k \sum_{x \in T^*} (\textllm_{\theta^{j+1}}(x) - \textllm_{\theta^{j}}(x))(w_i^{j+1} \re_i^{j+1}(x) - w_i^j \re_i^{j}(x)) \\
        = &\sum_{j=0}^{n-1} \sum_{x \in T^*} w_i (\textllm_{\theta^{j+1}}(x) - \textllm_{\theta^{j}}(x))(\re_i^{j+1}(x) - \re_i^{j}(x)) \tag{a}\\
        + &\sum_{x \in T^*} (\textllm_{\theta^{n+1}}(x) - \textllm_{\theta^{n}}(x))(w'_i \re_i^{n+1}(x) - w_i \re_i^{n}(x)) \tag{b}\\
        + &\sum_{j=n+1}^{2n} \sum_{x \in T^*} w'_i (\textllm_{\theta^{j+1}}(x) - \textllm_{\theta^{j}}(x))(\re_i^{j+1}(x) - \re_i^{j}(x)) \tag{c}
    \end{align*}
    
    We first claim that (b) equals $0$. This is because of the property of $\re_i^{n} = \re_i^{n+1} = \re_i^*$, which can induces $\textllm_{\theta^n} = \textllm_{\theta^{n+1}}$.
    
    Then we turn to (a).
    By the construction, for any $x \in T^*$ and $0 \leq j \leq n-1$, $|w_i^j \re^j_i(x) - w_i^j \re^{j+1}_i(x)| \leq \frac{\bar w}{n} \leq \delta$, so that $|\textllm_{\theta^{j}}(x) - \textllm_{\theta^{j+1}}(x)| \leq \frac{\epsilon}{4 \bar w}$ holds for all $x$.
    Then we can derive that:
    \begin{align*}
        &\sum_{j=0}^{n-1} \sum_{x \in T^*} w_i (\textllm_{\theta^{j+1}}(x) - \textllm_{\theta^{j}}(x))(\re_i^{j+1}(x) - \re_i^{j}(x)) \\
        = &\sum_{j=0}^{n-1} \sum_{x \in T^*} w_i (\textllm_{\theta^{j+1}}(x) - \textllm_{\theta^{j}}(x))\frac{\re_i^*(x) - \re_i(x)}{n} \\
        \leq &\sum_{j=0}^{n-1} \sum_{x \in T^*} w_i \frac{\epsilon}{4 \bar w} \frac{|\re_i^*(x) - \re_i(x)|}{n} \\
        \leq & \sum_{x \in T^*} \frac{\epsilon}{4} |\re_i^*(x) - \re_i(x)|  \\
        \leq & \sum_{x \in T^*} \frac{\epsilon}{4} (\re_i^*(x) + \re_i(x)) \leq \frac{\epsilon}{2}.
    \end{align*}
    
    The case is similar to (c).  
    By the construction, for any $x \in T^*$ and $n+1 \leq j \leq 2n$, $|w_i^j \re^j_i(x) - w_i^j \re^{j+1}_i(x)| \leq \frac{\bar w}{n} \leq \delta$, so that $|\textllm_{\theta^{j}}(x) - \textllm_{\theta^{j+1}}(x)| \leq \frac{\epsilon}{4 \bar w}$ holds for all $x$.
    Then we can derive that:
    \begin{align*}
        &\sum_{j=n+1}^{2n} \sum_{x \in T^*} w_i (\textllm_{\theta^{j+1}}(x) - \textllm_{\theta^{j}}(x))(\re_i^{j+1}(x) - \re_i^{j}(x)) \\
        = &\sum_{j=n+1}^{2n} \sum_{x \in T^*} w_i (\textllm_{\theta^{j+1}}(x) - \textllm_{\theta^{j}}(x))\frac{\re'_i(x) - \re_i^*(x)}{n} \\
        \leq &\sum_{j=n+1}^{2n} \sum_{x \in T^*} w_i \frac{\epsilon}{4 \bar w} \frac{|\re'_i(x) - \re_i^*(x)|}{n} \\
        \leq & \sum_{x \in T^*} \frac{\epsilon}{4} |\re'_i(x) - \re_i^*(x)|  \\
        \leq & \sum_{x \in T^*} \frac{\epsilon}{4} (\re'_i(x) + \re_i^*(x)) \leq \frac{\epsilon}{2}.
    \end{align*}
     Combining the results from (a), (b), and (c), we have that under this construction, 
     \[
     \sum_{j=0}^k l(t_i^j, t_i^{j+1}) + \sum_{j=0}^k l(t_i^{j+1}, t_i^{j}) \leq \frac{\epsilon}{2} + \frac{\epsilon}{2} = \epsilon.
     \]
    By the arbitrariness of $\epsilon > 0$, this is suffice to demonstrate that $V(t_i, t'_i) + V(t_i, t'_i) \leq 0$. 

    Therefore, it is proven that
    \begin{equation*}
        V(t_i, t'_i) + V(t_i, t'_i) = 0.    
    \end{equation*}
    which means that $V(t_i, t'_i) = - V(t'_i, t_i)$.
    By \cref{lemma:sufficient}, this is a sufficient condition for the payment equivalence of $\thetaast$.
\end{proof}

\thmequivalencesw*
\begin{proof}  
    We construct the reward model as follows and show that this satisfies the condition in \cref{thm:equivalence_sw} for when the mechanism uses \tr\ rules.
    \begin{align}\label{uniform-re}
        \re^*(x) = 
        \left\{
        \begin{aligned}
            \frac{1}{|T^*|} &\quad \text{ Summation Normalization Case}, \\
            1 &\quad \text{ Maximum Normalization Case}.
        \end{aligned}
        \right.
    \end{align}
    We prove this by contradiction.
    Assuming that there exist $i$, $\overrightarrow \re_{-i}$, $\vec w_{-i}$, $\thetainit$, $w_i$, $w'_i$ such that 
    \[
    \theta := \psi((\re_i^*, \overrightarrow \re_{-i}), (w_i, \vec w_{-i}), \thetainit) \neq \psi((\re_i^*, \overrightarrow \re_{-i}), (w'_i, \vec w_{-i}), \thetainit) =: \theta'
    \]
    We denote the further tie-breaking rule as $\succ_{\overrightarrow\re}$.  
    Then, considering the optimality of $\theta$, we have one of the following satisfied. 
    \begin{align*}
        \mathrm{ASW}(\theta; (\re_i^*, \overrightarrow \re_{-i}), (w_i, \vec w_{-i}), \thetainit) > \mathrm{ASW}(\theta'; (\re_i^*, \overrightarrow \re_{-i}), (w_i, \vec w_{-i}), \thetainit),
    \end{align*}
    or 
    \begin{align*}
        \mathrm{ASW}(\theta; (\re_i^*, \overrightarrow \re_{-i}), (w_i, \vec w_{-i}), \thetainit) = \mathrm{ASW}(\theta'; (\re_i^*, \overrightarrow \re_{-i}), (w_i, \vec w_{-i}), \thetainit), \text{ and } \llm \succ_{\overrightarrow\re} \llmprime.
    \end{align*}
    Note that $v_i(\theta; \re_i^*) = v_i(\theta'; \re_i^*)$, and $\mathrm{ASW}(\theta; (\re_i^*, \overrightarrow \re_{-i}), (w_i, \vec w_{-i}), \thetainit)$ = $(w'_i - w_i)v_i(\theta; \re_i^*)$ + $\mathrm{ASW}(\theta; (\re_i^*, \overrightarrow \re_{-i}), (w'_i, \vec w_{-i}), \thetainit)$, we have
    \begin{align*}
        \mathrm{ASW}(\theta; (\re_i^*, \overrightarrow \re_{-i}), (w'_i, \vec w_{-i}), \thetainit) > \mathrm{ASW}(\theta'; (\re_i^*, \overrightarrow \re_{-i}), (w'_i, \vec w_{-i}), \thetainit)
    \end{align*}
    or 
    \begin{align*}
        \mathrm{ASW}(\theta; (\re_i^*, \overrightarrow \re_{-i}), (w'_i, \vec w_{-i}), \thetainit) = \mathrm{ASW}(\theta'; (\re_i^*, \overrightarrow \re_{-i}), (w'_i, \vec w_{-i}), \thetainit), \text{ and } \llm \succ_{\overrightarrow\re} \llmprime.
    \end{align*}
    Both cases contradicted the optimality of $\theta'$. 
\end{proof}

\thmnonnegative*
\begin{proof}
    For a continuous \tr\ rule $\thetaast$, we know that it satisfies payment equivalence. By the definition of payment equivalence, for any other payment rule $p$ that also implements $\thetaast$ in DSIC, there exists a function $g_i$ such that
    \[
    p_i(\overrightarrow{\re}, \vec{w}, \thetainit) = p_i^{AFF}(\overrightarrow{\re}, \vec{w}, \thetainit) + g_i(\overrightarrow{\re}_{-i}, \vec{w}_{-i}, \thetainit).
    \]
    
    \textbf{Non-negative Payment.}
    To ensure that $p_i(\overrightarrow{\re}, \vec{w}, \thetainit) \geq 0$ always satisfied, we have the equivalent condition:
    \[
    g_i(\overrightarrow{\re}_{-i}, \vec{w}_{-i}, \thetainit) \geq  - \inf_{\re_i, w_i} p_i^{AFF}(\overrightarrow{\re}, \vec{w}, \thetainit).
    \]
    However, for any $\overrightarrow{\re}_{-i}, \vec{w}_{-i}, \thetainit$, when we set $\re_i$ to the uniform reward model~\cref{uniform-re}, we have shown in the previous proof that this will not change the training outcome regardless of the value of $w_i$ and hence does not impact the $\mathrm{ASW}_{-i}$. This means that the payment defined by the affine maximizer is exactly $0$, and the RHS of the above equation will always be non-negative. Therefore, there must be $g_i \geq 0$ for all inputs, which means that for all $i$, $\overrightarrow{\re}$, $\vec{w}$, and $\thetainit$, we have $
    p_i(\overrightarrow{\re}, \vec{w}, \thetainit) \geq p_i^{AFF}(\overrightarrow{\re}, \vec{w}, \thetainit)$.
    
    \textbf{Individually Rationality.}
    To ensure the utility of any group is not negative, we have to constrain the function $g_i$ as follows:
    \[
    g_i(\overrightarrow{\re}_{-i}, \vec{w}_{-i}, \thetainit) \leq \inf_{\re_i, w_i} u_i^{AFF}(\overrightarrow{\re}, \vec{w}, \thetainit),
    \]
    where we denote $u_i^{AFF}$ the utility of group $i$ under the mechanism. We construct an extreme case such that the RHS can be sufficiently small. Without loss of generality, we assume that $T^{\ast} = \{\bmx_1, \bmx_2\}$. 
    The initial model $\llminit(\bmx_1) = \epsilon$, $\llminit(\bmx_2) = 1 - \epsilon$. Group $i$ has preference $\re_i(\bmx_1) = 1$ and $\re_i(\bmx_2) = 0$, and other groups have opposite preference: $\re_j(\bmx_1) = 0$ and $\re_j(\bmx_2) = 1$ for $j \neq i$. The group size is set to $w_k=1$ for all $k \in [n]$. 

    In this case, as we have $\sum_{k=1}^n w_k \re_k(\bmx_1) < \sum_{k=1}^n w_k \re_k(\bmx_2)$, we can directly derived from the optimal condition \cref{eqn:OPT} that the final model satisfies that $\llm(\bmx_1) \leq \llminit(\bmx_1)$. 
    Since $p^{AFF}$ is always non-negative, the utility of group $i$ is at most $\re_i(\bmx_1) \cdot \llminit(\bmx_1) = \epsilon$. 
    To ensure that $p$ implements $\thetaast$ in IR, we have to set $g_i(\overrightarrow{\re}_{-i}, \vec{w}_{-i}, \thetainit) \leq \epsilon$ for this case. This is equivalent to $p_i(\overrightarrow{\re}, \vec{w}, \thetainit) \leq p^{AFF}_i(\overrightarrow{\re}, \vec{w}, \thetainit)$.
\end{proof}

\section{Omitted Proofs in Section~\ref{sec:incentives:approximate}}\label{sec:app3}

\begin{lemma}\label{lemma:err}
    For any $\re, \re'$, if $\max_{\bmx \in T^*} |\re(\bmx) - \re'(\bmx)| = \epsilon$, then for any model $\theta$, we have 
    \[
    |v(\theta; \re) - v(\theta; \re')| \leq \epsilon
    \]
\end{lemma}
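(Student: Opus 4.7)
The plan is to unfold the definition of the valuation function and apply the triangle inequality, observing that $\llm$ is a probability distribution. By Definition~\ref{ass}, $v(\theta;\re) = \sum_{\bmx \in T^\ast} \llm(\bmx)\,\re(\bmx)$, so the difference $v(\theta;\re) - v(\theta;\re')$ is linear in the reward model and can be written as $\sum_{\bmx \in T^\ast} \llm(\bmx)(\re(\bmx) - \re'(\bmx))$.

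Taking absolute values and pushing the absolute value inside the sum via the triangle inequality yields the bound
\[
|v(\theta;\re) - v(\theta;\re')| \;\leq\; \sum_{\bmx \in T^\ast} \llm(\bmx)\,|\re(\bmx) - \re'(\bmx)|.
\]
Each pointwise gap is at most $\epsilon$ by hypothesis, so I would factor this uniform bound out of the sum and use the fact that $\sum_{\bmx \in T^\ast} \llm(\bmx) = 1$ (since $\llm$ is a distribution over $T^\ast$) to conclude the right-hand side is at most $\epsilon \cdot 1 = \epsilon$.

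There is really no obstacle here; the lemma is essentially Hölder's / Jensen's inequality applied to a bounded perturbation under a probability measure. The only thing to be careful about is making sure the valuation is expressed as an expectation under $\llm$ (which Definition~\ref{ass} guarantees) and that $\llm$ normalizes to $1$ over $T^\ast$ — both are given by the preliminaries. I would keep the write-up to two or three lines and cite Definition~\ref{ass} for the expansion, then conclude by the triangle inequality and normalization.
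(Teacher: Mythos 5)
Your argument matches the paper's proof exactly: both expand $v(\theta;\re) - v(\theta;\re')$ as $\sum_{\bmx \in T^\ast} \llm(\bmx)(\re(\bmx) - \re'(\bmx))$, apply the triangle inequality, bound each pointwise gap by $\epsilon$, and use $\sum_{\bmx \in T^\ast} \llm(\bmx) = 1$. No gaps; your write-up is correct and essentially identical to the paper's.
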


\begin{proof}
    We can derive that
    \begin{align*}
        |v(\theta; \re) - v(\theta; \re')| &= |\sum_{\bmx \in T^*} \llm(\bmx) (\re(\bmx) - \re'(\bmx))| \leq \sum_{\bmx \in T^*} \llm(\bmx) |\re(\bmx) - \re'(\bmx)|  \\
        &\leq \sum_{\bmx \in T^*} \llm(\bmx) \epsilon = \epsilon.
    \end{align*}
\end{proof}

\begin{lemma}\label{thm:approximatesw}
    Assume that for any noisy input $\overrightarrow{\widehat{\re}}$ generated from $F(\cdot| \overrightarrow \re)$, and $i \in [n]$, there is 
    \[
    \max_{\bmx \in T^\ast} |\widehat \re_i(\bmx) - \re_i(\bmx)| \leq \epsilon.
    \]
    Then for any $\thetaast \in \Psi^{SW}$ and $\overrightarrow{\widehat{\re}}$ generated from $F(\cdot| \overrightarrow \re)$, the distance between the training outcome and the optimal is bounded by:
    \begin{align*}
        \mathrm{ASW}(\thetaast(\overrightarrow{\widehat{\re}}, \vec w, \thetainit); &\overrightarrow \re, \vec w, \thetainit) \geq \\
        &\mathrm{ASW}(\thetaast(\overrightarrow \re, \vec w, \thetainit); \overrightarrow \re, \vec w, \thetainit) - 2 \epsilon \sum_{i=1}^n w_i.
    \end{align*}
\end{lemma}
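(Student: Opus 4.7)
The plan is to compare the affine social welfare of the optimizer under the true preferences $\overrightarrow\re$ with the affine social welfare of the optimizer under the noisy preferences $\overrightarrow{\widehat\re}$, using the noisy optimizer's optimality as a bridge. Concretely, let $\theta^* := \thetaast(\overrightarrow \re, \vec w, \thetainit)$ and $\widehat\theta := \thetaast(\overrightarrow{\widehat\re}, \vec w, \thetainit)$, and write $W := \sum_{i=1}^n w_i$. I want to show that swapping which reward vector is used to \emph{evaluate} $\mathrm{ASW}$ at a fixed model $\theta$ moves the value by at most $\epsilon W$.

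First I would invoke \cref{lemma:err}: since $\max_{\bmx}|\widehat\re_i(\bmx)-\re_i(\bmx)|\le\epsilon$, for every $\theta$ and every $i$ we have $|v_i(\theta;\re_i)-v_i(\theta;\widehat\re_i)|\le\epsilon$. Because the $D_f(\llm\|\llminit)$ term in $\mathrm{ASW}$ does not depend on the reward models at all, summing the valuation differences (weighted by $w_i$) gives, for every $\theta$,
\begin{equation*}
    \bigl|\mathrm{ASW}(\theta;\overrightarrow\re,\vec w,\thetainit)-\mathrm{ASW}(\theta;\overrightarrow{\widehat\re},\vec w,\thetainit)\bigr|\le \epsilon W.
\end{equation*}

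Next I would apply this bound twice, at $\theta=\widehat\theta$ and at $\theta=\theta^*$, sandwiching the key optimality inequality $\mathrm{ASW}(\widehat\theta;\overrightarrow{\widehat\re},\vec w,\thetainit)\ge \mathrm{ASW}(\theta^*;\overrightarrow{\widehat\re},\vec w,\thetainit)$ that comes from $\widehat\theta$ being a maximizer of $\mathrm{ASW}(\cdot;\overrightarrow{\widehat\re},\vec w,\thetainit)$ on $\Theta$. Chaining:
\begin{equation*}
    \mathrm{ASW}(\widehat\theta;\overrightarrow\re,\vec w,\thetainit)\ge \mathrm{ASW}(\widehat\theta;\overrightarrow{\widehat\re},\vec w,\thetainit)-\epsilon W\ge \mathrm{ASW}(\theta^*;\overrightarrow{\widehat\re},\vec w,\thetainit)-\epsilon W\ge \mathrm{ASW}(\theta^*;\overrightarrow\re,\vec w,\thetainit)-2\epsilon W,
\end{equation*}
which is exactly the claimed inequality.

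There is no real obstacle here: the only subtle point is recognising that the regularizer $D_f(\llm\|\llminit)$ is \emph{independent} of the reward vectors, so the perturbation error $\epsilon W$ transfers cleanly from the valuation-sum component to the entire $\mathrm{ASW}$ functional. Everything else is the standard ``approximate optimiser'' triangle argument combined with \cref{lemma:err}.
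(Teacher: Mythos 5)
Your proposal is correct and follows essentially the same argument as the paper: both apply \cref{lemma:err} to bound the change in $\mathrm{ASW}$ under swapping $\overrightarrow\re$ and $\overrightarrow{\widehat\re}$ at a fixed model (noting the regularizer is reward-independent), and then chain these two bounds around the optimality inequality $\mathrm{ASW}(\widehat\theta;\overrightarrow{\widehat\re},\vec w,\thetainit)\ge \mathrm{ASW}(\theta^*;\overrightarrow{\widehat\re},\vec w,\thetainit)$. No gaps.
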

\begin{proof}
    Let $\hat \theta = \thetaast(\overrightarrow{\widehat \re}, \vec w, \thetainit)$ and $\theta = \thetaast(\overrightarrow \re, \vec w, \thetainit)$. $\hat\theta$ is the optimal parameter for biased input, and $\theta$ is the optimal parameter for the true input.
    \begin{align*}
        \mathrm{ASW}(\hat \theta; \overrightarrow \re, \vec w, \thetainit) 
        &= \sum_{i=1}^n w_i v_i(\hat \theta; \re_i) -  D_f(\llmhat||\llminit) \\
        &\overset{(1)}{\geq} \sum_{i=1}^n w_i \left(v_i(\hat \theta; \widehat\re_i) - \epsilon\right) -  D_f(\llmhat||\llminit)\\
        &= \mathrm{ASW}(\hat \theta; \overrightarrow{\widehat \re}, \vec w, \thetainit)  - \sum_{i=1}^n w_i \epsilon \\
        &\overset{(2)}{\geq} \mathrm{ASW}(\theta; \overrightarrow{\widehat \re}, \vec w, \thetainit)  - \sum_{i=1}^n w_i \epsilon \\
        &= \sum_{i=1}^n w_i v_i(\theta; \widehat{\re}_i)-  D_f(\llm||\llminit)- \sum_{i=1}^n w_i \epsilon \\
        &\overset{(3)}{\geq} \sum_{i=1}^n w_i \left(v_i(\theta; \re_i) - \epsilon\right) -  D_f(\llm||\llminit)- \sum_{i=1}^n w_i \epsilon \\
        &= \mathrm{ASW}(\theta; \overrightarrow \re, \vec w, \thetainit) - 2 \sum_{i=1}^n w_i \epsilon. \\
    \end{align*}
    (1) and (3) can be directly induced by \cref{lemma:err}, and (2) holds by the definition of $\hat \theta$.
    \[
    \hat \theta = \thetaast(\overrightarrow{\widehat \re}, \vec w, \thetainit) = \arg \max_{\theta \in \Theta} \mathrm{ASW}(\theta; \overrightarrow{\widehat \re}, \vec w, \thetainit).
    \]
\end{proof}

\thmapprox*
\begin{proof}
    Recall that the calculation of payment in $p^{AFF}$ is
    \begin{align*}
        p_i^{AFF}(\overrightarrow{\re}, \vec w, \theta_{\text{init}}) = \mathrm{ASW}_{-i}(\thetaast(\overrightarrow{\re}_{-i}, \vec w_{-i}, \theta_{\text{init}}); \overrightarrow\re, \vec w, \theta_{\text{init}}) - \mathrm{ASW}_{-i}(\thetaast(\overrightarrow{\re}, \vec w,\theta_{\text{init}}); \overrightarrow\re, \vec w, \theta_{\text{init}}).
    \end{align*}
    Let $\vec w = (w_i, \vec w_{-i})$, the utility function can be written as:
    \begin{align*}
        u_i((\re'_i, \overrightarrow\re_{-i}), \vec w; \thetaast, p, \re_i, w_i) &
        = w_i v_i(\theta;\re_i) - p_i^{AFF}((\re'_i, \overrightarrow\re_{-i}), \vec w, \theta_{\text{init}}) \\
        &= w_i v_i(\theta;\re_i) - \mathrm{ASW}_{-i}(\theta_{-i}; \overrightarrow\re, \vec w, \theta_{\text{init}}) + \mathrm{ASW}_{-i}(\theta; \overrightarrow\re, \vec w, \theta_{\text{init}}) \\
        &= \mathrm{ASW}(\theta; \overrightarrow\re, \vec w, \theta_{\text{init}}) - \mathrm{ASW}_{-i}(\theta_{-i}; \overrightarrow\re, \vec w, \theta_{\text{init}}),
    \end{align*}
    where we define $\theta = \thetaast((\re'_i, \overrightarrow\re_{-i}), \vec w,\theta_{\text{init}})$, and $\theta_{-i} = \thetaast(\overrightarrow{\re}_{-i}, \vec w_{-i}, \theta_{\text{init}})$.
    Note that the term $\mathrm{ASW}_{-i}(\theta_{-i}; \overrightarrow\re, \vec w, \theta_{\text{init}})$ is not influenced by the change of $\re_i$ or $w_i$. 
    
    Therefore, we can derive that for any $\overrightarrow \re_{-i}$, $\vec w$, let $\theta_{-i} = \thetaast(\overrightarrow{\re}_{-i}, \vec w_{-i}, \theta_{\text{init}})$:
    \begin{align*}
        &\mathbb E_{\widehat{\re}_i \sim \mathcal F_i(\cdot|\re_i)} \left[ u_i((\widehat{\re}_i, \overrightarrow\re_{-i}), \vec w; \thetaast, p, \re_i, w_i) + \mathrm{ASW}_{-i}(\theta_{-i}; \overrightarrow\re, \vec w, \theta_{\text{init}}) \right]\\
        = &\mathbb E_{\widehat{\re}_i \sim \mathcal F_i(\cdot|\re_i)} \left[\mathrm{ASW}(\hat \theta; \overrightarrow\re, \vec w,  \theta_{\text{init}})\right] \\
        = &\mathbb E_{\widehat{\re}_i \sim \mathcal F_i(\cdot|\re_i)} \left[w_i v_i(\hat \theta; \re_i) + \sum_{j \neq i} w_j v_j(\hat \theta; \re_j) -  D_f(\llmhat||\llminit)\right] \\
        \overset{(1)}{\geq} &\mathbb E_{\widehat{\re}_i \sim \mathcal F_i(\cdot|\re_i)} \left[w_i v_i(\hat \theta;\widehat \re_i) + \sum_{j \neq i} w_j v_j(\hat \theta; \re_j) - D_f(\llmhat||\llminit) \right] - w_i \epsilon\\
        \overset{(2)}{\geq} &\mathbb E_{\widehat{\re}_i \sim \mathcal F_i(\cdot|\re_i)} \left[w_i v_i(\theta;\widehat \re_i) + \sum_{j \neq i} w_j v_j(\theta; \re_j) -  D_f(\llm||\llminit)\right] - w_i \epsilon\\
        \overset{(3)}{\geq} &\mathbb E_{\widehat{\re}_i \sim \mathcal F_i(\cdot|\re_i)} \left[w_i v_i(\theta;\re_i) + \sum_{j \neq i} w_j v_j(\theta; \re_j) - D_f(\llm||\llminit) \right] - 2 w_i \epsilon\\
        \overset{(4)}{=} &\mathbb E_{\widehat{\re}_i \sim \mathcal F_i(\cdot|\re'_i)} \left[w_i v_i(\theta;\re_i) + \sum_{j \neq i} w_j v_j(\theta; \re_j) -  D_f(\llm||\llminit)\right] - 2 w_i \epsilon\\
        \overset{(5)}{\geq} &\mathbb E_{\widehat{\re}_i \sim \mathcal F_i(\cdot|\re'_i)} \left[w_i v_i(\hat\theta;\re_i) + \sum_{j \neq i} w_j v_j(\hat\theta; \re_j) -  D_f(\llmhat||\llminit)\right] - 2 w_i \epsilon\\
        = &\mathbb E_{\widehat{\re}_i \sim \mathcal F_i(\cdot|\re'_i)} \left[\mathrm{ASW}(\hat \theta; \overrightarrow{\re}, \vec w,  \thetainit)\right] - 2 w_i \epsilon\\
        = &\mathbb E_{\widehat{\re}_i \sim \mathcal F_i(\cdot|\re'_i)} \left[ u_i((\widehat{\re}_i, \overrightarrow\re_{-i}), \vec w; \thetaast, p, \re_i, w_i) + \mathrm{ASW}_{-i}(\theta_{-i}; \overrightarrow\re, \vec w, \theta_{\text{init}}) \right] - 2 w_i \epsilon\\
    \end{align*}
    All the $\hat \theta$ in the above inequalities refers to the optimal parameter for input $(\widehat \re_i, \overrightarrow{\re}_{-i}), \vec w,\theta_{\text{init}}$, i.e. $\hat \theta = \thetaast((\widehat \re_i, \overrightarrow{\re}_{-i}), \vec w,\theta_{\text{init}})$. Specifically, (1) and (3) come from the bounded distance between $\re_i$ and $\widehat \re_i$~(\cref{lemma:err}). (2) and (5) hold by the definitions: $\hat \theta = \thetaast((\widehat \re_i, \overrightarrow \re_{-i}), \vec w, \thetainit)$ = $\arg \max_{\theta' \in \Theta} \mathrm{ASW}(\theta'; (\widehat \re_i, \overrightarrow \re_{-i}), \vec w, \thetainit)$ and $\theta = \thetaast((\re_i, \overrightarrow \re_{-i}), \vec w, \thetainit)$ = $\arg \max_{\theta' \in \Theta}$ $ \mathrm{ASW}(\theta'; (\re_i, \overrightarrow \re_{-i}), \vec w, \thetainit)$. And (4) holds since the inner term is irrelevant to $\widehat \re_i$. 

    Therefore, we get 
    \begin{align*}
    &U_i((\re_i, \overrightarrow\re_{-i}), \vec w; \thetaast, p, \re_i, w_i)\\
    =&\mathbb E_{\widehat{\overrightarrow \re} \sim \mathcal F(\cdot|(\re_i, \overrightarrow\re_{-i}))} \left[ u_i(\widehat{\overrightarrow \re}, \vec w; \thetaast, p, \re_i, w_i) \right]\\
    =&\mathbb E_{\widehat{\re}_i \sim \mathcal F_i(\cdot|\re_i)} \mathbb E_{\widehat{\overrightarrow \re}_{-i} \sim \mathcal F_{-i}(\cdot|\overrightarrow\re_{-i})} \left[ u_i((\widehat{\re}_i, \widehat{\overrightarrow \re}_{-i}), \vec w; \thetaast, p, \re_i, w_i) \right]\\
    \geq&\mathbb E_{\widehat{\re}_i \sim \mathcal F_i(\cdot|\re'_i)} \mathbb E_{\widehat{\overrightarrow \re}_{-i} \sim \mathcal F_{-i}(\cdot|\overrightarrow\re_{-i})} \left[ u_i((\widehat{\re}_i, \widehat{\overrightarrow \re}_{-i}), \vec w; \thetaast, p, \re_i, w_i)- 2 w_i \epsilon  \right] \\
    =&\mathbb E_{\widehat{\overrightarrow \re} \sim \mathcal F(\cdot|(\re'_i, \overrightarrow\re_{-i}))} \left[ u_i(\widehat{\overrightarrow \re}, \vec w; \thetaast, p, \re_i, w_i) - 2 w_i \epsilon\right]\\
    =& U_i((\re'_i, \overrightarrow\re_{-i}), \vec w; \thetaast, p, \re_i, w_i) - 2 w_i \epsilon.
.   \end{align*}
\end{proof}

\section{Further Discussion on General Training Rules}\label{sec:app:general}
In practice, some other training principles do not belong to \tr\ rules, including those that maximize the Nash Social Welfare and focus more on fairness issues, like MaxMin-RLHF~\cite{MaxMinRLHF}. 
As an initial study on the incentive property of the RLHF Game, we primarily consider the mainstream training rules, \tr\ rules, that aim to maximize social welfare under certain regularization. 
Therefore, analyzing the properties of general forms of training rules is out of the scope of this paper. 
However, we also make a preliminary step for analyzing the two questions proposed in \cref{sec:incentives:affine}.
The second question is partly included in \cref{thm:equivalence}, and for the implementability of a training rule, we utilize the notion of \emph{cycle monotonicity} proposed by Rochet~\cite{rochet1987necessary}, which is a generalized version of monotonicity defined in a single-parameter scenario~\citep{myerson1981optimal}. 
In the RLHF Game, we use the notation $t_i$ to represent the combination of $(\re_i, w_i)$ with the same superscript and subscript.
We define the function $l(t'_i, t_i;\vec t_{-i}, \thetainit)$ $:=$ $w_i v_i(\thetaast((t_i, \vec t_{-i}), \thetainit); $ $\re_i)$ $-$ $w_i v_i(\thetaast((t'_i$, $\vec t_{-i}, \thetainit)); $ $\re_i)$ to measure group $i$'s valuation gains from misreporting ($t'_i$) to truthfully reporting ($t_i$) under $\vec t_{-i}$ and $\thetainit$. 
The cycle monotonicity is defined based on this function:
\begin{definition}[Cycle Monotonicity]
    The training rule $\thetaast$ satisfies cycle monotonicity if for any group $i$, $t_i, t'_i \in \calR \times \calW$, any finite, distinct sequence of reward models $[t_i, t_i^1, t_i^2, \ldots, t_i^k, t'_i]$ ($k \geq 0$), and any $\vec t_{-i}, \thetainit$, defining $t_i^0 = t_i^{k+2} := t_i \text{ and } t_i^{k+1} := t'_i$, we have
    \[
        \sum_{j = 0}^{k+1} l(t_i^j, t_i^{j+1}; \vec t_{-i}, \thetainit) \geq 0.
    \]
\end{definition}

For general training rules, cycle monotonicity is a sufficient and necessary condition for implementability.
\begin{restatable}[Rochet~\cite{rochet1987necessary}]{proposition}{thmimple}\label{thm:implementable}
    A training rule $\thetaast$ is implementable if and only if it satisfies cycle monotonicity.
\end{restatable}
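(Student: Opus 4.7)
The plan is to prove both directions of this classical Rochet-style characterization by reformulating the DSIC inequalities as constraints on payment differences along the type graph whose edge weights are $l(t_i, t_i'; \vec t_{-i}, \thetainit)$. Throughout, I would fix an arbitrary group $i$, an arbitrary profile $\vec t_{-i}$ for the other groups, and an arbitrary $\thetainit$; since DSIC must hold for all such choices, it suffices to reason pointwise in $(\vec t_{-i}, \thetainit)$ and suppress these from the notation.

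For the necessity direction ($\Rightarrow$), I would start from DSIC applied to an arbitrary pair of types $t_i^j, t_i^{j+1}$: the group reporting $t_i^{j+1}$ must prefer that report to reporting $t_i^j$ when its true type is $t_i^{j+1}$. Writing this out with the quasi-linear utility and the definition of $l$, it collapses to $p_i(t_i^{j+1}) - p_i(t_i^j) \leq l(t_i^j, t_i^{j+1})$. Summing this inequality around any finite closed cycle $t_i \to t_i^1 \to \cdots \to t_i^k \to t_i' \to t_i$ causes the payments to telescope to zero, leaving $0 \leq \sum_j l(t_i^j, t_i^{j+1})$, which is exactly cycle monotonicity.

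For the sufficiency direction ($\Leftarrow$), I would use the shortest-path construction of payments. Fix a reference type $t_i^\circ$ (e.g.\ the uniform $\re^*$ with $w_i=1$) and define
\begin{equation*}
    p_i(t_i) := V(t_i^\circ, t_i) := \inf_{[t_i^\circ, t_i^1, \ldots, t_i^k, t_i]} \sum_{j=0}^{k} l(t_i^j, t_i^{j+1}),
\end{equation*}
where the infimum ranges over finite distinct sequences. Cycle monotonicity is what ensures $V(t_i^\circ, t_i) > -\infty$: any path that becomes arbitrarily negative could be closed through a fixed return path to $t_i^\circ$, producing a cycle of arbitrarily negative total weight, contradicting the hypothesis. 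Verifying DSIC reduces to showing $p_i(t_i') - p_i(t_i) \leq l(t_i, t_i')$, which is immediate from the triangle-type inequality $V(t_i^\circ, t_i') \leq V(t_i^\circ, t_i) + l(t_i, t_i')$ obtained by concatenating any near-optimal path to $t_i$ with the single edge $t_i \to t_i'$.

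The only mildly delicate point is handling the infimum when no minimizing path exists. The shortest-path argument still works because the triangle inequality only requires approximate minimizers: for any $\varepsilon>0$ one takes a path to $t_i$ within $\varepsilon$ of $V(t_i^\circ, t_i)$, appends the edge to $t_i'$, and lets $\varepsilon\to 0$. The group-size multiplier $w_i$ in $l$ introduces no new obstacle, since it enters symmetrically on both sides of the DSIC inequality and is already absorbed into the definition of $l$. Hence both directions close, giving the stated equivalence.
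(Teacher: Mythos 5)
Your proposal is correct and follows essentially the same route as the paper's proof: necessity by telescoping the pairwise DSIC inequalities $p_i(t_i^{j+1}) - p_i(t_i^j) \leq l(t_i^j, t_i^{j+1})$ around a cycle, and sufficiency via the shortest-path payment $p_i(t_i) = V(t^\ast, t_i)$, whose finiteness follows from cycle monotonicity and whose DSIC follows from the triangle-type inequality for $V$. Your explicit handling of the infimum via $\varepsilon$-near-optimal paths is a minor refinement of the same argument, not a different approach.
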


\begin{proof}
    We fix the other groups' report $\overrightarrow \re_{-i}$, $\vec w_{-i}$, $\thetainit$, and also omit their notations for simplicity.
    
    \textbf{We first prove the necessity:} if $\thetaast$ is implementable, it satisfies cycle monotonicity.
    Since $\thetaast$ is implementable, there exists $p$ such that $(\thetaast, p)$ satisfies DSIC.
    We use notation $t_i^j$ to represent the combination of $(\re_i^j, w_i^j)$.
    For any types $t_i, t'_i \in \mathcal R \times \mathcal W$, any finite and distinct sequence of types $[t_i, t_i^1, t_i^2, \ldots, t_i^k, t'_i]$, $k \geq 0$, we let $t_i^0 = t_i^{k+2} := t_i$ and $t_i^{k+1} := t'_i$.
    By the property of DSIC, we have
    \begin{align*}
        w^{j+1}_i v_i(\thetaast(t^{j+1}_i); \re_i^{j+1}) - p_i(t^{j+1}_i) \geq w^{j+1}_i v_i(\thetaast(t_{i}^j); \re_i^{j+1}) - p_i(t_{i}^j) \quad \forall 0 \leq j \leq k + 1.
    \end{align*}
    By definition of the function $l$, this is equivalent to 
    \begin{equation*}
        l(t_{i}^j, t_{i}^{j+1}) \geq p_i(t_{i}^{j+1}) - p_i(t_{i}^j) \quad \forall 0 \leq j \leq k + 1.
    \end{equation*}
    Sum over all $j$, we get
    \begin{equation*}
        \sum_{j =0}^{k+1} l(t_{i}^j, t_{i}^{j+1}) \geq 
        \sum_{j =0}^{k+1} \left(p_i(t_{i}^{j+1}) - p_i(t_{i}^j)\right) = 0.
    \end{equation*}
    By the arbitrariness of the sequence $[t_i, t_i^1, t_i^2, \ldots, t_i^k, t'_i]$, this means that $\thetaast$ satisfies cycle monotonicity.
    
    \textbf{Then, we prove the sufficiency: }
    By cycle monotonicity, we have that for any finite and distinct sequence $[t_i, t_i^1, t_i^2, \ldots, t_i^k, t'_i]$, 
    \[
    \sum_{j=0}^k l(t_i^{j}, t_i^{j+1}) + l(t'_{i}, t_{i}) = \sum_{j=0}^{k+1} l(t_i^{j}, t_i^{j+1}) \geq 0.
    \]
    By the arbitrariness of the sequence, we can infer that 
    \[
    V(t_{i}, t'_{i}) + l(t'_{i}, t_{i}) \geq 0.
    \]
    Since $l(t'_{i}, t_{i})$ is bounded, $V(t_{i}, t'_{i})$ is also finite and $V(t_{i}, t'_{i}) \geq - l(t'_{i}, t_{i})$.
    Then, we can establish a payment rule $p$ such that for any agent $i$, 
    \[
    p_i(t_i) = V(t^*, t_{i}).
    \]
    where $t^* \in \calR \times \calW$ is a certain type. 
    
    Then, for any $t_i = (\re_i, w_i)$, we have
    \begin{align*}
        &w_i v_i(\thetaast(t_i); \re_i) - p_i(t_i) \\
        = &w_i v_i(\thetaast(t_i); \re_i) - V(t^*, t_{i}) \\
        = &w_i v_i(\thetaast(t'_i); \re_i) + l(t'_i, t_i) - V(t^*, t_{i}) \\
        \overset{(2)}{\geq} &w_i v_i(\thetaast(t'_i), \re_i) - V(t^*, t'_{i}) \\
        = &w_i v_i(\thetaast(t'_i); \re_i) - p_i(t'_i).
    \end{align*}
    Note that (2) comes from the definition of $V$ that:
    \begin{align*}
        V(t^*, t_{i}) &= \inf_{\substack{\text{A finite and distinct sequence} \\ [t_i^0 := t^*, t_i^1, \ldots, t_i^k, t_i^{k+1} := t_i]}} \sum_{j=0}^k l(t_i^{j}, t_i^{j+1}) \\
        &\leq \inf_{\substack{\text{A finite and distinct sequence} \\ [t_i^0 := t^*, t_i^1, \ldots, t_i^k, t_i^{k+1} := t'_i]}} \sum_{j=0}^k l(t_i^{j}, t_i^{j+1}) + l(t'_i, t_i)\\
        &= V(t^*, t'_{i}) + l(t'_i, t_i). 
    \end{align*}
    This means that mechanism $(\thetaast, p)$ satisfies DSIC, and suffices to show that $\thetaast$ is implementable.
\end{proof}

Validating whether a training rule satisfies cycle monotonicity is a complex task. 
Thus, finding a more concise condition that can induce the implementability for a general training rule or a subset of training rules is a valuable further direction.

\section{Additional Experimental Results}\label{app:additional_experiments}
\paragraph{Synthetic RLHF Game.}
We construct a synthetic RLHF Game: We set the group number to be $5$ and assume the size of the outcome space to be $10$. Each group's preference is first sampled from a uniform distribution ${U[0,1]}^{10}$ and then normalized. The group sizes are uniformly sampled from ${\{1, 2, \ldots, 10\}}^{10}$. 

We consider the misreporting strategy that is used to prove \cref{thm:dominated}. 
Specifically, given a group's preference $\re$. We first find the most preferred and the least preferred outcome $\bmx_1=\arg \max_{\bmx} \re(\bmx)$, $\bmx_2=\arg \min_{\bmx} \re(\bmx)$. Then we set the reported reward model to be $\widetilde{\re}(\bmx_1)=\re(\bmx_1)+\epsilon$, $\widetilde{\re}(\bmx_2)=\re(\bmx_2)-\epsilon$, and $\widetilde{\re}(\bmx)=\re(\bmx)$ for other $\bmx$s.

\begin{table}[ht]
\caption{Average changes in valuation and utility when adopting the misreporting strategy from \cref{thm:dominated}, holding other groups' reports fixed. The parameter \(\epsilon\) controls the extent of deviation from truthful reporting. As shown in the table, such a misreporting strategy brings valuation gain but decreases the utility.}
\label{table_app_synth}
\centering
\adjustbox{max width=\textwidth}{
    \begin{tabular}{ccccccccc}
    \toprule
    Reporting Parameter $\epsilon$ & Type & 0.001 & 0.002  & 0.005   & 0.01 & 0.02  & 0.05    & 0.1     \\ 
    \midrule   
    \multirow{2}{*}{$\Delta$Valuation (*1e2)} & Mean &  0.1073   & 0.2096 & 0.4881  & 0.8667  & 1.3674 & 1.7978  & 1.8154  \\
    & Std &  < 0.0001   & < 0.0001 & 0.0003  & 0.0004  & 0.0013 & 0.0026  & 0.0032  \\
    \multirow{2}{*}{$\Delta$Utility (*1e4)} & Mean &  -0.1064   & -0.4135 & -2.3696 & -8.1557 & -23.7334 & -53.1552 & -55.8977 \\
    & Std &  < 0.0001   & 0.0001 & 0.0011 & 0.0046 & 0.0196 & 0.0573 & 0.1415 \\
    \bottomrule
    \end{tabular}
}
\end{table}

We let group $1$ use this strategy and maintain the other group truthfully reporting. 
The payment is set according to the mechanism introduced in \cref{sec:incentives:affine}. 
We take $100,000$ samples and the average change in valuation and utility for group $1$ is reported in \cref{table_app_synth}. 
The result shows that such a strategy can indeed improve the valuation and is, hence, beneficial when there is no payment. 
However, with the introduced payment, no strategy will bring higher utility than truthfully reporting.

\paragraph{More Complex Preferences.}
The experiment setup of this part follows the \cref{sec:empirical}. 
We consider two scenarios with more complex, multiple preferences.

\begin{enumerate}
\item We simulated scenarios from data reported in~\cite{MOD} (Table 6), involving three groups, each valuing helpfulness, harmlessness, and humor, respectively. Normalization and other settings follow our paper. The true group sizes and the numerical results are shown in the tables below.
    \item We examined a scenario where the group's preference is a linear combination of two reward models. Specifically, group 1 values 0.2 $\times$ Helpfulness + 0.8 $\times$ Harmlessness, group 2 values 0.8 $\times$ Helpfulness + 0.2 $\times$ Harmlessness, and group 3 values Humor.  
\end{enumerate}

All of the above results show that truthfully reporting is among the optimal strategies under the mechanism. 

\begin{table}[ht]
    \centering
    \caption{Valuation, utility, and social welfare outcomes when varying reporting parameters for Group 1, with other groups' reports held fixed ($\alpha=1$ means truthful reporting). Group sizes are set as $(w_1, w_2, w_3) = (3, 2, 1)$. The three groups value Helpfulness, Harmlessness, and Humor, respectively. The highest value in each row is highlighted in $\textbf{bold}$. } 
    \begin{tabular}{c|cccccc}
        \toprule
        Reporting Parameter $\alpha$ & 0.2 & 0.5 & 1 & 1.5 & 2 & 3      \\ 
        \midrule
        \midrule
        Valuation & 0.00    & 0.79   & 2.66   & \textbf{3.00}  & \textbf{3.00}  & \textbf{3.00}\\
        Utility (= Valuation-Payment) & 0.00    & 0.44   & \textbf{0.57}  & 0.50  & 0.50 & 0.50\\
        Social Welfare   & 2.51    & 2.94   & \textbf{3.08}  & 3.00  & 3.00 & 3.00 \\
        \bottomrule
    \end{tabular}
\end{table}

\begin{table}[ht]
    \centering
    \caption{Valuation, utility, and social welfare outcomes when varying reporting parameters for Group 1, with other groups' reports held fixed ($\alpha=1$ means truthful reporting). Group sizes are set as $(w_1, w_2, w_3) = (4, 5, 3)$. The three groups value Helpfulness, Harmlessness, and Humor, respectively. The highest value in each row is highlighted in $\textbf{bold}$. }
    \begin{tabular}{c|cccccc}
    \toprule
    Reporting Parameter $\alpha$         & 0.2 & 0.5 & 1 & 1.5 & 2 & 3      \\ 
    \midrule
    \midrule
    Valuation & 0.00    & 0.00   & 1.05   & 1.05   & 3.54  & \textbf{4.00} \\
    Utility (= Valuation-Payment)  & 0.00    & 0.00   & \textbf{0.43}  & \textbf{0.43}  & -1.83 & -2.51 \\
    Social Welfare   & 6.51    & 6.51   & \textbf{6.94}  & \textbf{6.94}  & 4.68 & 4.00 \\
    \bottomrule
    \end{tabular}
\end{table}

\begin{table}[ht]
    \centering
    \caption{Valuation, utility, and social welfare outcomes when varying reporting parameters for Group 1, with other groups' reports held fixed ($\beta=1$ means truthful reporting). Group sizes are set as $(w_1, w_2, w_3) = (5, 5, 2)$. The three groups value Helpfulness, Harmlessness, and Humor, respectively. The highest value in each row is highlighted in $\textbf{bold}$. }
    \begin{tabular}{c|cccccc}
    \toprule
    Reporting Parameter $\beta$         & 0.5 & 0.8 & 1 & 1.5 & 2 & 3      \\ 
    \midrule
    \midrule
    Valuation & 0.00    & 0.33   & 1.31   & 4.67  & \textbf{5.00}  & \textbf{5.00} \\
    Utility (= Valuation-Payment)  & 0.00    & 0.09   & \textbf{0.20}   & -0.72  & -1.01 & -1.01 \\
    Social Welfare   & 6.01    & 6.10   & \textbf{6.20}  & 5.29  & 5.00 & 5.00 \\
    \bottomrule
    \end{tabular}
\end{table}

\begin{table}[ht]
    \centering
    \caption{Valuation, utility, and social welfare outcomes when varying reporting parameters for Group 1, with other groups' reports held fixed ($\beta=1$ means truthful reporting). Group sizes are set as $(w_1, w_2, w_3) = (3, 1, 4)$. The three groups value Helpfulness, Harmlessness, and Humor, respectively. The highest value in each row is highlighted in $\textbf{bold}$. }
    \begin{tabular}{c|cccccc}
    \toprule
    Reporting Parameter $\beta$         & 0.5 & 0.8 & 1 & 1.5 & 2 & 3      \\ 
    \midrule
    \midrule
    Valuation & 0.79    & 0.79   & 0.79   & 0.79  & 2.66  & \textbf{3.00} \\
    Utility (= Valuation-Payment)  & \textbf{0.79}    & \textbf{0.79}   & \textbf{0.79}   & \textbf{0.79}  & -1.04 & -1.58 \\
    Social Welfare    & \textbf{5.37}    & \textbf{5.37}  & \textbf{5.37}   & \textbf{5.37}   & 3.54 & 3.00 \\
    \bottomrule
    \end{tabular}
\end{table}

\begin{table}[ht]
\caption{Valuation, utility, and social welfare outcomes when varying reporting parameters for Group 1, with other groups' reports held fixed ($\alpha=1$ means truthful reporting). Group sizes are set as $(w_1, w_2, w_3) = (2, 3, 1)$. The three groups value \(0.8 \times \text{Helpfulness} + 0.2 \times \text{Harmlessness}\), \(0.2 \times \text{Helpfulness} + 0.8 \times \text{Harmlessness}\), and Humor, respectively. The highest value in each row is highlighted in $\textbf{bold}$. }
\centering
\begin{tabular}{c|cccccc}
\toprule
Reporting Parameter $\alpha$         & 0.2 & 0.5 & 1 & 1.5 & 2 & 3      \\ 
\midrule
\midrule
Valuation & 0.53    & 0.53   & 1.03   & 1.51   & \textbf{1.60}  & \textbf{1.60} \\
Utility (= Valuation-Payment)  & 0.52    & 0.52   & \textbf{0.61}  & 0.39  & 0.31 & 0.31 \\
Social Welfare & 2.92    & 2.92   & \textbf{3.01}  & 2.79  & 2.71 & 2.71 \\
\bottomrule
\end{tabular}
\end{table}


\end{document}